\newcommand \dia{\hfill{\textcolor{darkgray}{$\diamond$}}}
\newcommand{\N}{\mathbb{N}}
\newcommand{\R}{\mathbb{R}}
\newcommand{\NP}{{\sf NP}}
\begin{document}
\title{Partitioned Matching Games for International Kidney~Exchange
\thanks{Some results in this paper appeared in the proceedings of AAMAS 2019~\cite{BKPP19} and AAMAS 2022~\cite{BBKP22}.
We dedicate this paper to the memory of Walter Kern, who passed away in 2021. Benedek was supported by the National Research, Development and Innovation Office of Hungary (OTKA Grant No.\ K138945). Bir\'o was supported by the Hungarian Scientific Research Fund (OTKA, Grant No.\ K143858) and
 by the Hungarian Academy of Sciences (Momentum Grant No. LP2021-2).}}

\author{M\'arton Benedek\inst{1,2} \and
P\'eter Bir\'o\inst{1,2} \and
Walter Kern\inst{3} \and
D\"om\"ot\"or P\'alv\"olgyi\inst{4}\and
Daniel Paulusma\inst{5}}

\authorrunning{M. Benedek et al.}
\institute{KRTK, Budapest, Hungary \email{\{peter.biro,marton.benedek\}@krtk.hu}\and
Corvinus University of Budapest, Budapest, Hungary \and University of Twente, Enschede, The Netherlands \email{w.kern@utwente.nl} \and
MTA-ELTE Lend\"ulet, Budapest, Hungary \email{domotorp@gmail.com}
\and
Durham University, Durham, UK \email{daniel.paulusma@durham.ac.uk}}

\maketitle

\begin{abstract}
We introduce partitioned matching games as a suitable model for international kidney exchange programmes, where in each round the total number of available kidney transplants needs to be distributed amongst the participating countries in a ``fair'' way.
 A partitioned matching game $(N,v)$ is defined on a graph $G=(V,E)$ with an edge weighting $w$ and a partition $V=V_1 \cup \dots \cup V_n$. The player set is $N = \{ 1, \dots, n\}$, and player $p \in N$ owns the vertices in $V_p$. The value $v(S)$ of a coalition~$S \subseteq N$ is the maximum weight of a matching in the subgraph of $G$  induced by the vertices owned by the players in~$S$. If $|V_p|=1$ for all $p\in N$, then we obtain the classical matching game.
Let $c=\max\{|V_p| \; |\; 1\leq p\leq n\}$
 be the width of $(N,v)$.
We prove that checking core non-emptiness is polynomial-time solvable if $c\leq 2$ but co-\NP-hard if $c\leq 3$. We do this via pinpointing a relationship with the known class of $b$-matching games and completing the complexity classification on testing core non-emptiness for $b$-matching games.
With respect to our application, we prove a number of complexity results on choosing, out of possibly many optimal solutions, one that leads to a kidney transplant distribution that is as close as possible to some prescribed fair distribution.

\medskip
\noindent
{\bf Keywords.} partitioned matching game; $b$-matching games; complexity classification; international kidney exchange.
\end{abstract}

\setcounter{footnote}{0}
\section{Introduction}\label{s-intro}

We consider two generalizations of a classical class of games in cooperative game theory, namely matching games, which in turn generalize the well-known class of assignment games. One of the two generalizations of matching games is the known class of $b$-matching games. The other one, the class of partitioned matching games, is a new class of cooperative games introduced in this paper. We show how these two generalizations are related to each other and justify the new class of cooperative games by an emerging real-world application: international kidney exchange~\cite{Bo_etal17,Va_etal19}. Partitioned matching games implicitly played a role in international kidney exchange through the work of  \cite{BGKPPV20,KNPV20} and have recently been used as a basis for simulations in~\cite{BBKP22}. The goal of this paper is to provide a strong {\it theoretical} basis for partitioned matching games by proving a number of computational complexity results on computing core allocations and finding allocations close to prescribed ``fair'' (core) allocations. We start with introducing the necessary basic terminology.

\subsection{Basic Terminology}\label{s-basic}

A \emph{(cooperative) game} is a pair $(N,v)$, where $N$ is a set of $n$ \emph{players} (agents) and $v: 2^N\to \R$
 is a \emph{value function} with $v(\emptyset) = 0$.  A subset $S\subseteq N$ is a {\it coalition} and the set $N$ is the {\it grand} coalition. For many natural games, it holds that $v(N)\geq v(S_1)+\cdots +v(S_r)$ for every possible partition $(S_1,\ldots,S_r)$ of $N$. Hence, the central problem is how to keep the grand coalition $N$ stable by distributing $v(N)$ amongst the players of $N$ in a ``fair'' way. Such distributions of $v(N)$ are also called allocations.
That is, an {\it allocation} for a game $(N,v)$ is a vector $x \in \R^N$ with $x(N) = v(N)$; here, we write $x(S)=\sum_{p\in S}x_p$ for a subset $S\subseteq N$.
A {\it solution concept} prescribes a set of fair allocations for a game $(N,v)$, where the notion of fairness depends on context.

One of the best-known solution concepts, the {\it core} of a game consists of all allocations $x \in \R^N$ satisfying $x(S)\geq v(S)$ for each $S\subseteq N$. Core allocations are highly desirable, as they offer no incentive for a subset $S$ of players to leave~$N$ and form a coalition on their own. So core allocations ensure that the grand coalition~$N$ is stable. However, the core may be empty. Moreover, the following three problems may be computationally hard
for a class of cooperative games (assuming a ``compact'' description of the~input game, which is often, and also in our paper, a graph with an edge weighting):

\begin{itemize}
\item [{\bf P1.}] determine if a given allocation $x$ belongs to the core, or find a coalition $S$ with $x(S)<v(S)$,
\item [{\bf P2.}] determine if the core is non-empty, and
\item [{\bf P3.}] find an allocation in the core (if it is non-empty).
\end{itemize}
If P1 is polynomial-time solvable for some class of games, then so are P2 and P3 using the ellipsoid method~\cite{GLS81,Ka79}.
As the core of a game might be empty, other solution concepts are also considered. Well-known examples of other solution concepts are the Shapley value and nucleolus (which are both computationally hard to compute).

The input games we consider are defined on an undirected graph~$G=(V,E)$ with a positive~edge weighting $w:E\to \R_+$; here, $V$ is a set of vertices and $E$ is a set of edges between pairs of distinct vertices. Such a game is said to be {\it uniform} (or {\it simple}) if $w\equiv 1$.
For a subset $S\subseteq V$, we let $G[S]$ denote the subgraph of $G$ {\it induced} by $S$, that is, $G[S]=(S,\{uv\in E\; |\; u,v\in S\})$ is the graph obtained from~$G$ after deleting all the vertices outside $S$. A graph is {\it bipartite} if its vertex set can be partitioned into two sets $A$ and~$B$ such that every edge joins a vertex in $A$ to a vertex in $B$.
A {\it matching} $M$ is a set of edges in a graph~$G$ such that no two edges of $M$ have a common end-vertex.

A {\it matching game} defined on a graph~$G=(V,E)$ with a positive edge weighting $w:E\to \R_+$ is the game $(N,v)$
where $N=V$ and the value $v(S)$ of a coalition $S$ is equal to the maximum weight $w(M)=\sum_{e\in M}w(e)$ over all matchings $M$ of $G[S]$; see Figure~\ref{f-first} for an example. Matching games defined on bipartite graphs are commonly known as {\it assignment games}.

\begin{figure}[t]
\centering
\begin{tikzpicture}[scale=0.9,rotate=0]
\draw
(0, 1) node[circle, black, scale=0.8,draw](a){\small$1$}
(1.5, 2) node[circle, black, scale=0.8,draw](b){\small$2$}
(1.5, 0) node[circle, black, scale=0.8,draw](c){\small$3$}
(3, 3) node[circle, black, scale=0.8,draw](d){\small$4$}
(3, 1) node[circle, black, scale=0.8,draw](e){\small$5$}
(3, -1) node[circle, black, scale=0.8,draw](f){\small$6$};
\draw[-] (a) -- node[above] {\small$2$} (b);
\draw[-] (b) -- node[above] {\small$2$} (d);
\draw[-] (d) -- node[right] {\small$3$} (e);
\draw[-] (b) -- node[above] {\small$1$} (e);
\draw[-] (a) -- node[below] {\small$1$} (c);
\draw[-] (c) -- node[above] {\small$3$} (e);
\draw[-] (c) -- node[below] {\small$2$} (f);
\draw[-] (e) -- node[right] {\small$2$} (f);
\end{tikzpicture}
\caption{An example~\cite{BBJPX} of a matching game $(N,v)$ on a graph $G=(V,E)$, so $N=V$. Note that $v(N)=7$ and that the core of $(N,v)$ is nonempty, e.g. the allocation $x=(\frac{1}{2}, \frac{3}{2}, \frac{3}{2}, 1, 2, \frac{1}{2})$ belongs to the core.}\label{f-first}
\end{figure}
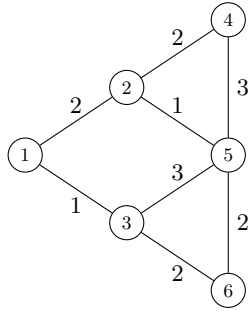

Let $G=(V,E)$ be a graph with a vertex function $b:V\to \N_+$, which we call a {\it capacity} function.
A {\it $b$-matching} of $G$ is a subset $M \subseteq E$ such that each vertex $p \in V$ is incident to at most $b(p)$ edges in $M$.
Let $w:E \to \R_+$ be an edge weighting of $G$.
A {\it $b$-matching game} defined on $(G,b,w)$ is the game $(N,v)$ where $N=V$
and the value $v(S)$ of a coalition $S$ is the maximum weight $w(M)$ over all $b$-matchings~$M$ of $G[S]$.
Note that if $b\equiv 1$, then we obtain a matching game.
If $G$ is bipartite, then we speak of a {\it $b$-assignment game}.

In this paper we introduce the notion of a partitioned matching game.
Let $G=(V,E)$ be a graph with an edge weighting $w:E\to \R_+$. Let ${\mathcal V}=(V_1,\ldots,V_n)$ be a partition of $V$.
A {\it partitioned matching game} defined on $(G,w,{\cal V})$ is  the game $(N,v)$ where $N=\{1,\ldots,n\}$
and the value $v(S)$ of a coalition $S$ is the maximum weight $w(M)$ over all matchings~$M$ of
 $G[\bigcup_{p\in S}V_p]$.
Let $c=\max\{|V_p| \; |\; 1\leq p\leq n\}$ be the {\it width} of $(N,v)$.
Note that if $V_p=\{p\}$ for every $p\in N$, then we obtain a matching game.

We observe that if $(N,v)$ is a $b$-matching game or a partitioned matching game, then $v(N)\geq v(S_1)+\cdots +v(S_r)$ for every possible partition $(S_1,\ldots,S_r)$ of~$N$. Hence,
studying problems P1--P3 for both $b$-matching games and partitioned matching games is meaningful.

Before presenting our new results on P1--P3 for $b$-matching games and partitioned matching games in Section~\ref{s-new}, we first
survey the known results on P1--P3 for $b$-assignment games and $b$-matching games, including results for $b\equiv 1$, in Section~\ref{s-known}.
Afterwards, we describe the setting of international kidney exchange as an application of partitioned matching games in Section~\ref{s-app}.

\subsection{Known Results for P1--P3 for Matching Games and Their Variants}\label{s-known}

It follows from results of Koopmans and Beckmann~\cite{KB57} and Shapley and Shubik~\cite{SS72} that
the core of every assignment game is non-empty. In fact, this holds even for $b$-assignments for any capacity function~$b$, as shown by Sotomayor~\cite{So92}.
However, the core of a matching game might be empty (for example, if $G$ is a triangle with $w\equiv 1$). Problem~P1 is linear-time solvable for matching games: the problem is equivalent to verifying whether for an allocation~$x$, it holds that $x_p+x_q\geq w(pq)$ for every edge $pq\in E$. Hence, problems P2 and P3 are polynomial-time solvable for matching games (and thus also for assignment games). We refer to~\cite{BKP12,DIN99} for some alternative polynomial-time algorithms for solving P2 and P3 on matching games.

Sotomayor~\cite{So92} proved that P3 is polynomial-time solvable for $b$-assignment games. However,
Bir\'o et al.~\cite{BKPW18} proved that P1 is co-\NP-complete even for uniform $3$-assignment games.
They also showed that  for $b\leq 2$, P1 (and so, P2 and P3) is polynomial-time solvable for $b$-matching games.

Table~\ref{t-tabtab} summarizes the known results for problems P1--P3 for matching games and their variants (in this table we have also included our new results, which we explain below in detail).
In Section~\ref{s-con} we discuss future work. There, we also mention some results for other solution concepts.
For an in-depth discussion on complexity aspects of solution concepts for matching games and their variants,
we refer to the recent survey of Benedek et al.~\cite{BBJPX}.

\begin{table}[t]
\centering
\begin{tabular}{r|c|c|c}
     & problem P1$\;$ & problem P2$\;$     & problem P3$\;$\\
\hline
assignment games $\;\;\;\;\;\;\;\;\;\;\;\;$ & poly &yes &poly\\
\hline
matching games $\;\;\;\;\;\;\;\;\;\;\;\;$  &poly &poly &poly\\
\hline
$b$-assignment games:
if $b\leq 2$   & poly  &yes &poly\\
if $b\leq 3$ 			 &co\NP c &yes &poly\\		
\hline
$b$-matching games: if $b\leq 2$  & poly  &poly      & poly      \\
if $b\leq 3$          &co\NP c 			 &{\bf co\NP h}  & {\bf co\NP h} \\
\hline
{partitioned matching games:}
{if $c\leq 2$}  & {\bf poly}  &{\bf poly}      & {\bf poly}      \\
{if $c\leq 3$}             &{\bf co\NP c} 			 &{\bf co\NP h}  & {\bf co\NP h} 			
\end{tabular}
\vspace*{2mm}
\caption{Complexity dichotomies for the core (P1--P3), with the following short-hand notations,
yes: all instances are yes-instances; poly: polynomial-time; co\NP c: co-\NP-complete; and co\NP h: co-\NP-hard.
{Recall that $b:V\to \N_+$ is a function and that $c=\max\{|V_p| \; |\; 1\leq p\leq n\}$ is a natural number.}
The new results of this paper are put in {\bf bold}.
The seven hardness results in the table hold even if $w\equiv 1$.}\label{t-tabtab}
\end{table}

\subsection{New Results for P1--P3 for the Two Generalizations of Matching Games}\label{s-new}

We start with the known generalization of matching games, namely the $b$-matching games. In Section~\ref{s-b} we prove the following result, solving the only case left open in~\cite{BKPW18} (see also Table~\ref{t-tabtab}).

\begin{theorem}\label{t-main1}
\emph{P2} and \emph{P3} are co-\NP-hard for uniform $b$-matching games if $b\leq 3$.
\end{theorem}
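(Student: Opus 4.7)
The plan is to establish co-\NP-hardness by giving a polynomial-time reduction from a co-\NP-complete decision problem to the complement of P2 for uniform $b$-matching games with $b\leq 3$. Co-\NP-hardness of P3 then follows for free, because any polynomial-time algorithm for P3 would also decide P2 (simply by checking whether it actually produces a valid core allocation).

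A natural source problem is P1 for uniform 3-matching games, proven co-\NP-complete by Bir\'o et al.~\cite{BKPW18}: given $(G, b\equiv 3, w\equiv 1)$ together with a candidate allocation $x$, decide whether $x$ lies in the core. The reduction enlarges $G$ to a graph $G'$, still with $b' \equiv 3$ and $w' \equiv 1$, by attaching at every vertex $p$ of $G$ a small \emph{forcing gadget} $H_p$. Each $H_p$ is engineered so that, in any core allocation of $G'$, the payoff of $p$ is rigidly pinned to $x_p$ (up to a fixed offset absorbed inside $H_p$); so that the contribution of $H_p$ to $v'(N')$ is an explicit constant $\alpha_p$ independent of the outside game; and so that $H_p$ itself is non-bipartite (necessary, since $b$-assignment games always have non-empty core). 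With these properties, any core allocation of $G'$ projects on $V(G)$ to an allocation that must equal $x$, and extends within each $H_p$ in a way determined entirely by the local structure of the gadget. Hence the core of $G'$ is non-empty iff $x$ lies in the core of $(G, b, w)$, which yields co-\NP-hardness of P2.

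The main obstacle is designing the forcing gadget $H_p$ under the combined restrictions $w\equiv 1$ and $b\leq 3$. With arbitrary edge weights, a single pendant edge of weight $x_p$ would suffice; under unit weights, the rational value $x_p$ must be encoded purely combinatorially. Two observations help here: first, the P1 reduction of~\cite{BKPW18} produces payoffs with bounded denominators, so only finitely many target values need to be realised; second, small non-bipartite subgraphs with capacity~3 (triangles, odd cycles, or $K_4$-type fragments) already exhibit rigid local core structures whose pinned values can be tuned by varying the number of vertices or the attachment pattern. The delicate technical step is verifying that the gadgets are ``decoupled'': each $H_p$ interacts with the rest of $G'$ only through $p$, and the capacity~3 at $p$ in $G'$ accommodates both its original $G$-edges and the new edges leading into $H_p$ without creating spurious $b$-matching opportunities that would upset the forced payoff.

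Should the gadget construction prove too cumbersome under these constraints, a backup plan is a direct reduction from a classical NP-hard problem such as 3-Dimensional Matching or Exact Cover by 3-Sets: one would build a graph whose uniform 3-matching game has empty core precisely when the source instance admits an exact cover, exploiting the fact that capacity~3 gives just enough room to encode 3-uniform combinatorial constraints while keeping edge weights identically one.
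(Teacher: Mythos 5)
Your proposal is a plan rather than a proof, and the missing piece is exactly the part that carries all the technical weight: the forcing gadget $H_p$ is never constructed, and there are concrete reasons to doubt it can be built as described. To pin a player's core payoff to an arbitrary rational $x_p$ you need both a lower and an upper bound; lower bounds come from local coalitions containing $p$, but an upper bound on a single player's payoff is inherently global (it requires the complement coalition to achieve $v'(N')-x_p$), so a gadget ``interacting with the rest of $G'$ only through $p$'' cannot by itself deliver it. With $w\equiv 1$ all coalition values are integers, so realising non-integer targets such as $3/2$ requires odd structures whose core behaviour is fragile. Worse, attaching new edges at $p$ genuinely changes the game on $V(G)$: a coalition $S\subseteq V(G)$ together with its gadgets may now route some of $p$'s capacity into $H_p$ and achieve value strictly greater than $v(S)+\sum_{p\in S}\alpha_p$, which breaks the forward direction (a core allocation of $(G,b,w)$ extended by gadget values need not survive in $G'$). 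You flag both difficulties yourself but resolve neither, and you offer a ``backup plan'' with no construction at all. Separately, your justification that P3-hardness ``follows for free by checking whether the algorithm produces a valid core allocation'' does not work as stated: that check is precisely P1, which is itself co-\NP-hard; the correct argument is that a correct P3 algorithm must be able to report emptiness, so it decides P2.

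The paper avoids the forcing-gadget route entirely. It reduces from an \NP-complete subgraph-existence problem (\textsc{Nearly $3$-Regular Subgraph}, obtained from \textsc{$3$-Regular Subgraph} on bipartite graphs by subdividing one edge in each of $|E|$ copies), builds a graph with three pendant triangles plus auxiliary vertices $a_v,c_v,d_v$ attached to each original vertex $v$ (all capacities at most $3$, $w\equiv 1$), and adds a single root $r$ with $b(r)=1$ adjacent to all of $V$. The gadgets are used only to force the \emph{upper} bound $x_v\le \frac32$ on every original vertex in any core allocation (via three overlapping saturating coalitions $S_{ac},S_{cd},S_{ad}$), and the root creates a blocking coalition $V(G')\cup\{r\}$ exactly when a nearly $3$-regular subgraph $G'$ exists; the converse direction exhibits an explicit core allocation and rules out blocking coalitions by a case analysis on how vertices are matched ``up'' or ``down''. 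If you want to salvage your approach, you would need to either construct and fully verify the gadgets (including the mixed-coalition analysis), or switch to a direct reduction in the spirit of the paper, where the gadgets only need to enforce one-sided bounds.
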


\noindent
For a capacity function $b:V\to \N_+$, we let $b^*$ denote the maximum $b(p)$-value over all $p\in V$.
In Section~\ref{s-b} we also prove the following two theorems, which together show that there exists a close relationship between $b$-matching games and partitioned matching games, our new generalization of matching games.

\begin{theorem}\label{t-struc1}
\emph{P1}--\emph{P3} can be reduced in polynomial time from $b$-matching games to partitioned matching games of width $c=b^*$, preserving uniformity.
\end{theorem}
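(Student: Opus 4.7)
The plan is to build, from any $b$-matching game $(G,b,w)$ on $G=(V,E)$, a partitioned matching game $(G',w',{\cal V})$ of width $b^*$ whose core stands in natural correspondence with the core of $(G,b,w)$. The case $b^*=1$ is trivial (a matching game is already a partitioned matching game with $c=1$), so assume $b^*\geq 2$. For each $p\in V$, I would create $b(p)$ copies $p^1,\dots,p^{b(p)}$ and let these form $V_p$ (of size at most $b^*$). For each edge $e=pq\in E$, I introduce an \emph{edge gadget}: two new vertices $a_e,b_e$ owned by a fresh player $\pi_e$ with $V_{\pi_e}=\{a_e,b_e\}$, together with the edges $a_eb_e$, $a_ep^i$ for $i\in\{1,\dots,b(p)\}$, and $b_eq^j$ for $j\in\{1,\dots,b(q)\}$, each of weight $w(e)$. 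The resulting game has width $\max(b^*,2)=b^*$, and $w'\equiv 1$ whenever $w\equiv 1$.

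The next step is to compute the value of arbitrary coalitions in $(G',w',{\cal V})$. Let $T_0=\{\pi_e : e\in E\}$ and write $v'$ for the value function. A short matching argument in each gadget shows: if $\pi_e$ is present but neither endpoint of $e$ is in the coalition, the gadget contributes $w(e)$; if $\pi_e$ and both endpoints of $e$ are present, the gadget can contribute up to $2w(e)$, the extra $w(e)$ being attainable only by consuming one copy $p^i$ and one copy $q^j$. Summing over edges, for any $S\subseteq V$ and $T\subseteq T_0$,
\[
v'(S\cup T)\;=\;\sum_{e:\pi_e\in T}w(e)\;+\;v_b(S,T),
\]
where $v_b(S,T)$ denotes the maximum weight of a $b$-matching on $G[S]$ that uses only edges $e$ with $\pi_e\in T$. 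In particular $v'(N')=w(E)+v_b(V)$.

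The forward direction is immediate: given $y$ in the core of $(G,b,w)$, set $x_p=y_p$ and $x_{\pi_e}=w(e)$; then $x$ is efficient and the inequality $x(S\cup T)\geq v'(S\cup T)$ reduces to $y(S)\geq v_b(S,T)$, which is implied by $y(S)\geq v_b(S)$. The real work lies in the backward direction. Given a core allocation $x$ of the partitioned game, I would define the ``slack'' $\gamma_e=x_{\pi_e}-w(e)\geq 0$ (non-negative because $v'(\{\pi_e\})=w(e)$), redistribute it by $\alpha_p=\tfrac12\sum_{e\ni p}\gamma_e$, and set $y_p=x_p+\alpha_p$. Non-negativity $y_p\geq 0$ follows from $x_p\geq v'(\{p\})=0$ (since $V_p$ is independent in $G'$), and efficiency $y(V)=v_b(V)$ drops out from $\sum_e\gamma_e=x(T_0)-w(E)$ together with efficiency of $x$. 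To verify $y(S)\geq v_b(S)$, apply the core inequality of $x$ to the coalition $S\cup T_S$ with $T_S=\{\pi_e:e\subseteq S\}$: this yields $x(S)\geq v_b(S)-\sum_{e\subseteq S}\gamma_e$, and since each such $e$ contributes $\gamma_e$ to $\alpha(S)$ (half from each endpoint in $S$), the desired bound $y(S)\geq v_b(S)$ follows.

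These correspondences deliver polynomial-time reductions for all three problems: non-emptiness is preserved in both directions (P2), the backward map converts a partitioned-core allocation into a $b$-matching-core allocation (P3), and for P1 one checks $y$ against the $b$-matching core by checking $(y,w)$ against the partitioned core, with any violating coalition $S\cup T$ in $(G',w',{\cal V})$ exposing $S\subseteq V$ as violating via the value identity above. The main obstacle I expect is precisely the backward construction: partitioned-core allocations may pay edge-players strictly more than $w(e)$, so the map $x\mapsto y$ must genuinely redistribute this surplus back to the endpoints, and the factor $\tfrac12$ in the definition of $\alpha_p$ is the bookkeeping trick that makes every core inequality $y(S)\geq v_b(S)$ go through simultaneously.
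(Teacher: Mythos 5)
Your construction is exactly the paper's (the Tutte gadget: $b(p)$ copies of each vertex forming $V_p$, plus a two-vertex player per edge whose vertices are joined to each other and to the copies of the respective endpoints), your coalition-value identity matches the paper's computation, and your forward map $x_p=y_p$, $x_{\pi_e}=w(e)$ is the paper's map verbatim. Where you genuinely diverge is the backward direction. The paper does not construct a map from core allocations of the partitioned game back to core allocations of the $b$-matching game; instead it proves the contrapositive for P2/P3 via the Bondareva--Shapley theorem, lifting a balanced collection $\lambda$ witnessing emptiness of $\mathit{core}(N,v)$ to a balanced collection $\overline{\lambda}$ on the partitioned game (padding the weights on the singleton edge-players so that each $E_{ij}$ is covered with total weight $1$) and checking that the balancedness inequality is violated. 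Your slack-redistribution map $\gamma_e=x_{\pi_e}-w(e)\geq 0$, $y_p=x_p+\tfrac12\sum_{e\ni p}\gamma_e$ is correct: efficiency follows from $\sum_e\gamma_e=x(T_0)-w(E)$, and for each $S\subseteq V$ the core inequality of $x$ on $S\cup\{\pi_e:e\subseteq S\}$ gives $x(S)\geq v_b(S)-\sum_{e\subseteq S}\gamma_e$, while $\alpha(S)\geq\sum_{e\subseteq S}\gamma_e$ (edges with one endpoint in $S$ only add nonnegative surplus), so $y(S)\geq v_b(S)$. This buys you something the paper's argument only delivers implicitly: an explicit polynomial-time conversion of an \emph{arbitrary} core allocation of the partitioned game into one of the $b$-matching game, which handles P3 directly and avoids LP duality altogether; the paper's balanced-collection argument is shorter for P2 but nonconstructive. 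One small point worth keeping: your explicit treatment of $b^*=1$ is needed, since the gadget always has width at least $2$, and the paper glosses over this.
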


\begin{theorem}\label{t-struc2}
\emph{P1}--\emph{P3} can be reduced in polynomial time from partitioned matching games of width~$c$ to $b$-matching games for some capacity function $b$ with $b^*\leq c$.
\end{theorem}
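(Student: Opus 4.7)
The strategy is to construct, from a partitioned matching game $(N,v)$ on $(G,w,\mathcal{V})$ with $\mathcal{V}=(V_1,\ldots,V_n)$ of width $c$, an equivalent $b$-matching game $(N^*,v^*)$ on some $(G^*,b,w^*)$ with $b^*\le c$, so that P1--P3 on the partitioned side pass through to the $b$-matching side under an efficient translation of allocations. The natural identification is: each player $p\in N$ becomes a single vertex $p$ of $G^*$, and core allocations correspond by the identity on the $N$-coordinates.

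For the construction, I would first contract each block $V_p$ to a single vertex $p$ of $G^*$ and set $b(p)=|V_p|$, so that $b^*\le c$ by construction (and $b^*=c$ generically). For every edge $uv\in E(G)$ with $u\in V_p$ and $v\in V_q$, I would place a corresponding edge between $p$ and $q$ in $G^*$ carrying the weight $w(uv)$. To keep $G^*$ a simple graph and to record the original $G$-incidence of each edge at its endpoints, I would subdivide every such contracted edge through an auxiliary capacity-one gadget vertex, tied to the original vertices $u$ and $v$; the role of each gadget vertex is to enforce, at the $b$-matching level, that edges of $G$ sharing a vertex in some $V_p$ cannot both be used. These gadget vertices join the player set of the $b$-matching game but are designed to be null in every core allocation, so that the translation simply extends by zeros on the gadget coordinates. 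Uniformity of $w$ is preserved by the construction.

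The heart of the proof is the identity $v^*(S)=v(S)$ for every coalition $S\subseteq N$ (where the corresponding coalition in $G^*$ includes all gadget vertices incident to $\cup_{p\in S}V_p$). The direction $v^*(S)\ge v(S)$ is immediate: any matching of $G[\bigcup_{p\in S}V_p]$ lifts edge-by-edge, through the gadgets, to a $b$-matching of $G^*$ of the same weight, and the per-vertex matching constraint in $G$ translates to the capacity $b(p)=|V_p|$ at $p$. The reverse direction $v^*(S)\le v(S)$ is the main technical challenge: a priori, an optimum $b$-matching in $G^*$ might use several edges incident to a contracted vertex $p$ whose $G$-preimages share a common endpoint in $V_p$, so naive projection would violate the matching constraint in $G$. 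I would address this by a local exchange argument showing that any such $b$-matching can be rerouted without loss of weight to one whose projection is a genuine matching of $G$; the capacity-one gadget vertices are the pivotal ingredient, as each of them limits how many selected edges can share an original endpoint in $V_p$.

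Once $v^*\equiv v$ on $N$ is established and the gadget vertices are verified to be null in every core allocation (a standard check using that they contribute zero to the value of the grand coalition), the reductions for P1--P3 follow at once: an allocation $x$ of the partitioned game extends to an allocation of the $b$-matching game by assigning $0$ to every gadget vertex, and restricting a $b$-matching core allocation to $N$ recovers a core allocation of the partitioned game. The hardest step, as I see it, is precisely the $(\le)$ direction of $v^*=v$; making the gadget faithfully enforce the $G$-incidence structure, while simultaneously keeping the gadget vertices null, is where the construction must be done with care.
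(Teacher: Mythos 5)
There is a genuine gap at exactly the step you flag as the ``main technical challenge'': for your construction the inequality $v^*(S)\le v(S)$ is simply false, so no local exchange argument can establish it. After contracting $V_p$ to a single vertex $p$ with $b(p)=|V_p|$, the capacity at $p$ only bounds the \emph{total} number of selected edges meeting the block $V_p$; it has forgotten which original vertex of $V_p$ each edge used. A one-per-edge subdivision vertex cannot recover this, because two edges of $G$ sharing an endpoint $u\in V_p$ receive two \emph{distinct} gadget vertices, so nothing couples them (and the original vertex $u$ you want to ``tie'' them to no longer exists in $G^*$). Concretely, take $V_1=\{u,u'\}$, $V_2=\{v\}$, $V_3=\{v'\}$ and edges $uv$, $uv'$ of weight $1$, so $c=2$ and $v(N)=1$; after contraction $p_1$ has capacity $2$ and the $b$-matching $\{p_1p_2,\,p_1p_3\}$ is feasible with weight $2$, so $v^*(N)=2>v(N)$. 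There is also a problem in your ``$\ge$'' direction: if the subdivision vertex has capacity $1$, a $b$-matching can never contain both halves of a subdivided edge, so an edge of $G$ cannot even be lifted. Repairing this forces you to keep a capacity-$1$ vertex for each original vertex of every block, i.e., to undo the contraction.

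The paper indeed avoids contraction altogether. It keeps $G$ and $w$ intact, gives every original vertex capacity $2$, and adds for each block $V_i$ a root $r_i$ of capacity $|V_i|$ joined to all of $V_i$ by edges of the huge weight $2v(N)$ (so $b^*\le c$ once $c\ge 2$; the case $c=1$ is trivial). One unit of capacity at each $u\in V_i$ is absorbed by the heavy edge to $r_i$, leaving one unit for a genuine matching of $G$, and the heavy edges make the binding core constraints exactly those of coalitions $\bigcup_{i\in S}(V_i\cup\{r_i\})$, whose value is $v(S)+\sum_{i\in S}2v(N)|V_i|$. An allocation $x$ of $(N,v)$ is translated by assigning $x_i/|V_i|+v(N)$ to each vertex of $V_i$ and $v(N)|V_i|$ to $r_i$, and the empty-core direction of P2--P3 is handled via the Bondareva--Shapley theorem rather than by a nullity argument for auxiliary players. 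You would need a construction of this kind, which genuinely retains the individual vertices of each block as capacity-constrained players, before the rest of your outline can go through.
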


\noindent
Combining Theorems~\ref{t-main1}--\ref{t-struc2} with the aforementioned results of Bir\'o et al.~\cite{BKPW18}, which state that P1--P3 are polynomial-time solvable for $b$-matching games if $b\leq 2$ and that P1 is co-\NP-complete even for uniform $b$-assignment games if $b\leq 3$, leads to the following dichotomy (see also Table~\ref{t-tabtab}).

\begin{corollary}\label{c-main}
For $c\leq 2$, \emph{P1}--\emph{P3} are polynomial-time solvable for partitioned matching games with width~$c$, whereas for $c\leq 3$,
\emph{P1} is co-\NP-complete, and \emph{P2}--\emph{P3} are co-\NP-hard for uniform partitioned matching games.
\end{corollary}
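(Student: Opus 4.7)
The plan is to derive each part of the corollary by combining the two reductions between $b$-matching games and partitioned matching games (Theorems~\ref{t-struc1} and~\ref{t-struc2}) with the previously established results for $b$-matching games, with no new ideas needed.

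For the tractable side ($c \le 2$), I would invoke Theorem~\ref{t-struc2}: every partitioned matching game of width $c \le 2$ reduces in polynomial time to a $b$-matching game with $b^* \le c \le 2$. Since Bir\'o et al.~\cite{BKPW18} showed that P1 is polynomial-time solvable for $b$-matching games with $b\le 2$, and P2 and P3 then follow from P1 via the ellipsoid method~\cite{GLS81,Ka79}, all three problems are polynomial-time solvable for partitioned matching games of width at most~$2$.

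For the hard side ($c \le 3$), I would use Theorem~\ref{t-struc1} in the opposite direction, noting that it outputs a partitioned matching game of width $c = b^*$ and preserves uniformity. Composing this reduction with Theorem~\ref{t-main1} yields co-\NP-hardness of P2 and P3 for uniform partitioned matching games of width at most~$3$. For P1, I would obtain co-\NP-hardness in the same way, now starting from the result of~\cite{BKPW18} that P1 is co-\NP-complete already for uniform $b$-assignment games with $b \le 3$ (which are, in particular, uniform $b$-matching games with $b \le 3$). Membership of P1 in co-\NP for partitioned matching games is routine: a \emph{no}-certificate for an allocation $x$ is a coalition $S\subseteq N$ together with a matching $M$ of $G[\bigcup_{p\in S} V_p]$ with $w(M) > x(S)$, and both $x(S)$ and $w(M)$ are computable in polynomial time in the input size.

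The main (and essentially only) thing to verify is that the two structural reductions respect the quantitative parameters used above: Theorem~\ref{t-struc1} must preserve uniformity and give width $c = b^*$ when going from $b$-matching games to partitioned matching games, and Theorem~\ref{t-struc2} must produce the bound $b^* \le c$ in the reverse direction. Once these parameters are matched up, the corollary follows by direct substitution of the hypotheses of Theorem~\ref{t-main1} and the results of~\cite{BKPW18} into Theorems~\ref{t-struc1} and~\ref{t-struc2}.
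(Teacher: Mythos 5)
Your proposal is correct and follows essentially the same route as the paper, which obtains the corollary precisely by combining Theorem~\ref{t-struc2} with the polynomial-time results of~\cite{BKPW18} for $b\leq 2$, and Theorem~\ref{t-struc1} (preserving uniformity and width $c=b^*$) with Theorem~\ref{t-main1} and the co-\NP-completeness of P1 for uniform $3$-assignment games. Your added remark on co-\NP{} membership of P1 via a coalition-plus-matching certificate is a routine detail the paper leaves implicit.
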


\subsection{An Application of Partitioned Matching Games}\label{s-app}

As a strong motivation for introducing partitioned matching games, we consider international kidney exchange programmes. We explain these programmes below but note that partitioned matching games can also be used to model other economic settings where multi-organizations control pools of clients~\cite{GMP13}.

The most effective treatment for kidney patients is transplantation. The best long-term outcome is to use {\it living} donors. A patient may have a willing donor, but the donor's kidney will often be rejected by the patient's body if donor and patient are medically incompatible. Therefore, in the last 30 years, an increasing number of countries have started to run a (national) Kidney Exchange Programme (KEP)~\cite{Bi_etal2019}. In such a programme, a patient and their incompatible donor are placed in a pool with other patient-donor pairs.  If donor~$d$ of patient~$p$ is compatible with patient~$p'$ and simultaneously, donor~$d'$ of $p'$ is compatible with $p$, then the pairs $(p,d)$ and $(p',d')$ are said to be {\it compatible} and a {\it $(2$-way) exchange} between $(p,d)$ and $(p',d')$ can take place. That is, the kidney of donor $d$ can be given to patient $p'$, and the kidney of donor $d'$ can be given to patient $p$.

KEPs operate in regular rounds, each time with an updated pool where some patient-donor pairs may have left and new ones may have entered.
Naturally, the goal of a KEP is to help in each round as many patients as possible. For this goal, we construct for each round, the corresponding (undirected) {\it compatibility graph}~$G$ as follows. First, we introduce a vertex~$i^p_d$ for each patient-donor pair $(p,d)$ in the pool. Next, we add an edge between two vertices $i$ and $j$ if and only if the corresponding patient-donor pairs are compatible. We give each edge $ij$ of $G$ a weight $w(ij)$ to express the utility of an exchange between $i$ and $j$; we set $w\equiv 1$ if we do not want to distinguish between the utilities of any two exchanges.

It now remains to find a maximum weight matching of $G$, as such a matching corresponds to an optimal set of exchanges or, if we had set $w\equiv 1$, to a largest set of exchanges in the patient-donor pool. As we can find a maximum weight matching of a graph in polynomial time, we can find an optimal solution for each round of a KEP in polynomial time.

We can generalize $2$-way exchanges by defining the {\it directed compatibility graph} $\overline{G}=(V,A)$, in which $V$ consists of the patient-donor pairs and $A$ consists of every arc $(i,j)$ such that the donor of pair~$i$ is compatible with the patient of pair~$j$. Each arc $(i,j)\in A$ may have an associated weight~$w_{ij}$ expressing the utility of a kidney transplant involving the donor of $i$ and the patient of~$j$; note that $w_{ij}\neq w_{ji}$ is possible, should both arcs $(i,j)$ and $(j,i)$ exist in $\overline{G}$. For $\ell\geq 2$, an {\it $\ell$-way exchange} corresponds to a directed $\ell$-vertex cycle in a directed {compatibility} graph. It involves $\ell$ distinct patient-donor pairs $(p_1,d_1),\ldots, (p_\ell,d_\ell)$, where for $i\in \{1,\ldots,\ell-1\}$, donor $d_i$ donates to patient $p_{i+1}$ and donor $d_{\ell}$ donates to patient $p_1$. We refer to Figure~\ref{f-stexample} for an example. The same figure also illustrates how we can obtain an undirected compatibility graph $G=(V,E)$ from a directed compatibility graph $\overline{G}=(V,A)$: we add an edge $ij$ to $E$ if and only if both $(i,j)$ and $(j,i)$ are arcs in $A$ and in that case we set $$w(ij)=w_{ij}+w_{ji}.$$
Allowing $\ell$-way exchanges for some $\ell>2$ leads to possibly more patients being treated. However, Abraham, Blum and Sandholm~\cite{ABS07} proved that already for $\ell=3$ it becomes \NP-hard to find an optimal solution
 for a round. We therefore set $\ell=2$\footnote{Some countries allow $\ell=3$ or $\ell=4$, but in practice choosing $\ell=2$ is not uncommon due to lower risk levels (kidney transplants in a cycle must take place simultaneously)~\cite{Bi_etal2019}.},
 just like some related papers~\cite{CL19,CLPV17,STW21}, which we discuss later.

\begin{figure}[t]
\begin{center}
\begin{tabular}{cc}
 \begin{tikzpicture}[
            > = stealth,
            shorten > = 1pt,
            auto,
            node distance = 2.2cm,
            semithick
        ]

        \tikzstyle{every state}=[scale=0.7,
            draw = black,
            thick,
            fill = white,
            minimum size = 3mm
        ]
        \node[state] (i2) {$i_2$};
        \node[state] (i1) [above right of=i2] {$i_1$};
        \node[state] (j2) [right of=i2] {$j_2$};
        \node[state] (j1) [below right of=i2] {$j_1$};
        \node[state] (i3) [right of=j2] {$i_3$};
 \path[->]  (i1) edge node [right] {1} (i2);
        \path[->] [bend left] (i2) edge node {1} (i1);
        \path[->]  (i2) edge node [above] {1} (j2);
        \path[->] [bend left] (j2) edge node [below] {3} (i2);
        \path[->] (i2) [bend right] edge node [left] {2} (j1);
        \path[->] (j2) edge node [right] {2} (i1);
                \path[->] (j2) [bend left] edge node [right] {1} (j1);
        \path[->] (j1) edge node [left] {1} (j2);
        \path[->] (i1) edge node {5} (i3);
    \end{tikzpicture}
    \hspace*{0.5cm}
     \begin{tikzpicture}[
            > = stealth,
            shorten > = 1pt,
            auto,
            node distance = 2.2cm,
            semithick
        ]
\hspace*{2cm}
        \tikzstyle{every state}=[scale=0.7,
            draw = black,
            thick,
            fill = white,
            minimum size = 3mm
        ]

        \node[state] (i2) {$i_2$};
        \node[state] (i1) [above right of=i2] {$i_1$};
        \node[state] (j2) [right of=i2] {$j_2$};
        \node[state] (j1) [below right of=i2] {$j_1$};
        \node[state] (i3) [right of=j2] {$i_3$};
 \path[]  (i1) edge node [above left] {2} (i2);
        \path[]  (i2) edge node {4} (j2);
        \path[] (j1) edge node {2} (j2);
    \end{tikzpicture}
\end{tabular}
\end{center}
\caption{\label{fig:particular-graphs}{\it Left:} a directed {compatibility} graph $\overline{G}=(V,A)$ with a positive edge weighting~$w$ for a certain round of an international KEP; note that $\overline{G}$ has a directed $4$-vertex cycle, so in the corresponding KEP even a $4$-way exchange could take place.  {\it Right:} the corresponding undirected compatibility graph $G=(V,E)$ (which is used when only $2$-way exchanges are allowed).}\label{f-stexample}
\end{figure}
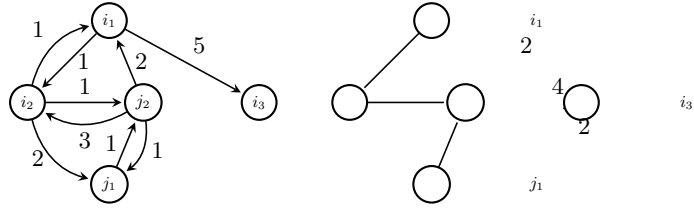

Nowadays, several countries are starting an {\it international} Kidney Exchange Programme (IKEP)  by merging the pools of their national KEPs; for example, Austria with the Czech Republic~\cite{Bo_etal17}; Denmark with Norway and Sweden; and Italy with Portugal and Spain~\cite{Va_etal19}. This may lead to more patients being helped.
However, apart from a number of ethical and legal issues, which are beyond the scope of this paper, we now have an additional issue that must be addressed.
Namely, in order to ensure full participation of the countries in the IKEP, it is crucial that {\it proposed solutions will be accepted by each of the participating countries}. Otherwise countries may decide
 to leave the IKEP at some point. So, in order for an IKEP to be successful, we need to ensure that the programme is {\it stable} on the long term. This is a highly non-trivial issue, as we can illustrate even with the following small example.

\medskip
\noindent
{\it Example 1.} Let $G$ be a compatibility graph with vertices $i_1,i_2,j$ and edges $i_1i_2$ and $i_2j$ with weights
$1-\epsilon$ and $1$ respectively, for some small $\epsilon$. The weight $1-\epsilon$ may have been obtained after desensitization, which we explain below.
Let $V_1=\{i_1,i_2\}$ and $V_2=\{j\}$.
The optimum solution is an exchange between $i_2$ and $j$ with weight $1$.
However, the solution consisting of the exchange between $i_1$ and $i_2$ (with weight~$1-\epsilon$) is better for $V_1$, as then both patients in the pairs $i_1$ and $i_2$ receive a kidney, and with more or less the same chance of success. \dia

\medskip
\noindent
To ensure stability, Carvalho and Lodi~\cite{CL19} used a $2$-round system with $2$-way exchanges only: in the first round each country selects an internal matching, and in the second round a maximum matching is selected for the international exchanges. They gave a polynomial-time algorithm for computing a Nash-equilibrium that maximizes the total number of transplants, improving the previously  known result of~\cite{CLPV17} for two countries. Sun et al. \cite{STW21} also considered $2$-way exchanges only. They defined so-called selection ratios using various lower and upper target numbers of kidney transplants for each country. In their setting, a solution is considered to be fair if the minimal ratio across all countries is maximized. They also required a solution to be a maximum matching and individually rational. They gave theoretical results specifying which models admit solutions with all three properties and they provided polynomial-time algorithms for computing such solutions, if they exist.

In contrast to~\cite{CL19,CLPV17,STW21}, we model an IKEP as follows. Consider a directed compatibility graph $\overline{G}=(V,A)$ with a positive arc weighting $w$. Let ${\cal V}=V_1\cup \cdots \cup V_n$ be a partition of $V$. Let $G=(V,E)$ be the corresponding undirected compatibility graph. If $w\equiv 1$, then we model a round of an IKEP  of a set $N$ of $n$ countries as a partitioned matching game $(N,v)$ defined on $(G,w)$ with partition~${\cal V}$. Else, we consider the {\it directed} partitioned matching game $(N,v)$ on $(\overline{G},w)$ with partition ${\cal V}$; so for each $S\subseteq N$, we have that $v(S)=\sum_{p\in S}\sum_{i,j: ij\in M, j\in V_p} w_{ij}$, where $M$ is a maximum weight matching in $G[S]$.
{If $c=1$, then we also use the term {\it directed matching game}.}
We define the following set:

$${\cal M}=\{M\; |\; M\; \mbox{is a maximum weight matching of}\; G\}.$$

\noindent
Note that ${\cal M}$ is the set of maximum matchings of $G$ if $w\equiv 1$ on $G$. We note also that ${\cal M}$ may consist of only a few matchings, or even a unique matching. The latter is highly likely
when weights $w(ij)$ take many different values at random~\cite{MVV87}. However, in our context, this will not be the case. Compatibility graphs will usually have only a {\it small} number of different weights. The reason is that to overcome certain blood and antigen incompatibilities, patients can undergo one or more desensitization treatments to match with their willing donor or some other potential donor.
 After full desensitization the chance on a successful kidney transplant is almost the same as in the case of full compatibility. Allowing desensitization results in compatibility graphs with weights either $1$ (when no desensitization was needed) or $1-\epsilon$ (after applying desensitization)~\cite{ACDM21}. Hence, in our application, ${\cal M}$ is likely to have {\it exponential size}.

Our approach is based on a credit-based system that was first introduced by Klimentova et al.~\cite{KNPV20}. In each round, we maximize the {\it total welfare}, that is, we choose some matching $M$ from the set ${\cal M}$ for that round. In addition, we specify a {\it target allocation~$x$}. A target allocation prescribes exactly which share of $v(N)$
each country ideally should receive in a certain round. However, if we use $M\in {\cal M}$, then the allocated share of $v(N)$ for country $p$ is
$$u_p(M)= \sum_{i,j: ij\in M, j\in V_p} w_{ij}.$$
Note that $\sum_{p\in N}u_p(M)=w(M)=v(N)$.
 The aim is now to choose a matching~$M\in {\mathcal M}$ with $u_p(M)$ ``as close as possible'' to~$x_p$ for each country~$p$.
The difference $x_p-u_p(M)$ will then be taken as (positive or negative) {\it credits} for country~$p$ to the next round. In the next round we repeat the same steps after updating the underlying partitioned matching game $(N,v)$.

For setting the target allocations, we start with choosing, for the round under consideration, a ``fair'' {\it initial (kidney) allocation~$y$} that is prescribed by some solution concept for the underlying directed partitioned matching game $(N,v)$ for that round, so $y(N)=v(N)$ holds. Then, for each country~$p$, we change $y_p$ into $p$'s target $x_p=y_p+c_p$, where $c_p$ are the credits for country~$p$ from the previous round (we set $c\equiv 0$ for the first round). The initial allocation $y$ could correspond to a core allocation of the game $(N,v)$. We give more specific examples later.

We now define what we mean with being ``as close as possible'' to the target allocation~{$x$}. For country~$p$, we say that $|x_p-u_p(M)|$ is the {\it deviation} of country~$p$ from its target $x_p$ in a certain round. We may want to select a {\it minimal} matching from ${\cal M}$, which is a matching $M\in {\cal M}$ that minimizes
$$\max_{p\in N} \{|x_p-u_p(M)|\}$$
over all matchings of ${\cal M}$.
{We call such a matching $M$ {\it weakly close to  $x$}}.

As a more refined selection, we may want to select a matching from ${\cal M}$ that is even lexicographically minimal. That is, let
$$d(M)= (|x_{p_1}-u_{p_1}(M)|, \dots, |x_{p_n}-u_{p_n}(M)|)$$ be the vector obtained by reordering the components $|x_p-u_p(M)|$ non-increasingly.
We say that $M$ is {\it strongly close to}~$x$ if $d(M)$ is lexicographically minimal over all matchings $M\in {\cal M}$. Note that every {strongly close} matching in ${\cal M}$ is {weakly close}, but the reverse might not be true.

From now on, if $w\equiv 1$ on $G$, we consider the {\it actual} number of incoming kidneys that country~$p$ will receive in a certain round if $M$ is used. That is, we let for every $p\in N$, $u_p(M)$ be equal to $$s_p(M)=|\{j\in V_p |\; \exists i\in V: ij\in M\}|.$$
We say that the vector $(s_1(M),\ldots, s_n(M))$ is the {\it actual} allocation for a certain round. Note that $s_1(M)+\cdots + s_n(M)=2v(N)$ for every $M\in {\cal M}$, so formally we should consider $u_p=\frac{1}{2}s_p$ instead of~$s_p$. However, for our proofs it does not matter. Moreover, by defining $s_p$ as above we have a more natural definition in terms of actual numbers of kidneys. This is a natural utility function, both due to its simplicity and  because in practice the weights $w(e)$ are sparsely spread, as we explained above.

\medskip
\noindent
{\it Example 2.} Consider the directed compatibility graph ${\overline{G}}=(V,A)$ with the edge weighting~$w$ from Figure~\ref{f-stexample}. Let $V_1=\{i_1,i_2,i_3\}$ and $V_2=\{j_1,j_2\}$.
It holds that ${{\mathcal M}=\{M_1,M_2\}}$, where ${M_1=\{i_2j_2\}}$ and ${M_2=\{i_1i_2,j_1j_2\}}$. Note that ${v(N)=w(M_1)=w(M_2)=4}$. We have that  $u_1(M_1)=3$ and $u_2(M_1)=1$, whereas
$u_1(M_2)=u_2(M_2)=2$. We now set $w\equiv 1$ and consider the corresponding undirected compatibility graph $G=(V,E)$, which is also displayed in Figure~\ref{f-stexample}.
It now holds that  ${{\mathcal M}=\{M_2\}}$. Note that $v(N)=|M_2|=2$. We have $s_1(M_2)=s_2(M_2)=2$, so $s_1(M_2)+s_2(M_2)=2v(N)$, as we count ``incoming'' kidneys if $w\equiv 1$ on $G$. \dia

\medskip
\noindent
Recall that the aim of this paper is to provide a theoretical basis for our understanding of partitioned matching games. We refer to a number of recent papers~\cite{BBKP22,BGKPPV20,KNPV20} for experimental results on international kidney exchange that were obtained from simulations using uniform partitioned matching games. As initial allocations, Benedek et al.~\cite{BBKP22}
 used two ``hard-to-compute'' solution concepts, namely the Shapley value and nucleolus, and two easy-to-compute solution concepts, the benefit value and contribution value. They found that almost all initial and actual allocations in their simulations were core allocations, with the Shapley value yielding the best results. The latter finding was in line with the findings of~\cite{BGKPPV20} and~\cite{KNPV20}. In both \cite{BGKPPV20} and~\cite{KNPV20}, $3$-way exchanges were allowed at the expense of a  significant lower number of countries than the study of \cite{BBKP22}.

In line with our research aim we prove, apart from the results in Section~\ref{s-new}, a number of computational complexity results for computing {strongly and weakly close} maximum (weight) matchings. The first of these results, the main result of Section~\ref{s-app}, is a polynomial-time algorithm that was used for the simulations in~\cite{BBKP22}. For the proof of this result we refer to Section~\ref{s-results2}.

\begin{theorem}\label{t-easy1}
It is possible to find a {strongly close} maximum matching for a uniform partitioned matching game $(N,v)$ and target allocation $x$ in polynomial time.
\end{theorem}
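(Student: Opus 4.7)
My plan is to reduce the problem to a polynomial-time feasibility oracle and then wrap this oracle in an iterative outer loop that determines the entries of the sorted deviation vector $\mu_1 \geq \mu_2 \geq \cdots \geq \mu_n$ one at a time.

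\emph{Feasibility oracle.} Given integer bounds $\ell_p \leq u_p$ for each $p \in N$, I aim to decide in polynomial time whether there exists a maximum matching $M$ with $\ell_p \leq s_p(M) \leq u_p$ for all $p$, and to produce such an $M$ when possible. I would compute the Gallai--Edmonds decomposition $V = C \cup A \cup D$, with factor-critical components $K_1, \ldots, K_r$ of $G[D]$. In every maximum matching of $G$, the vertices of $C \cup A$ are all matched, and each $K_i$ is either \emph{saturated} (all of its vertices are matched, one of them to some $a \in A \cap N(K_i)$) or \emph{exposed} (an arbitrary vertex of $K_i$ is left unmatched and the remaining $|K_i|-1$ are matched internally); the saturated components correspond to a matching in the bipartite graph $(A, \{K_1, \ldots, K_r\})$ saturating $A$. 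Setting $s_p(M) = |V_p| - \delta_p$, where $\delta_p$ counts the exposed $K_i$'s whose unmatched vertex lies in $V_p$, the bound constraints become $|V_p|-u_p \leq \delta_p \leq |V_p|-\ell_p$. Feasibility is then equivalent to feasibility of an integer flow in the following network: source $s$, sink $t$, and nodes for each $K_i$, each $a \in A$, and each $V_p^*$; mandatory unit arcs $s \to K_i$ and $a \to t$; unit arcs $K_i \to a$ whenever $a \in N(K_i)$; unit arcs $K_i \to V_p^*$ whenever $V_p \cap K_i \neq \emptyset$; and arcs $V_p^* \to t$ with lower bound $|V_p|-u_p$ and upper bound $|V_p|-\ell_p$. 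Flow problems with lower bounds are polynomial-time solvable, and any integer feasible flow decodes to a maximum matching meeting the bounds.

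\emph{Lex-min outer loop.} Using the oracle, I would compute $\mu_1, \mu_2, \ldots$ iteratively, maintaining a frozen set $F$ of countries with fixed upper bounds. At each stage, binary-search via the oracle for the smallest integer-valued candidate $d$ such that feasibility holds with upper bound $d$ on every free country's deviation (while respecting the frozen bounds); this $d$ is the next level of the sorted vector. Then, via further oracle calls, identify the free countries whose deviation must equal $d$ in every lex-min solution and freeze them at $d$; the remaining free countries have deviation at most $d-1$, and we iterate. Since each stage freezes at least one country, the loop has $O(n)$ stages, each using a polynomial number of oracle calls. A final oracle call with the fully refined bounds produces the required lex-min maximum matching.

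\emph{Main obstacle.} The core technical step is the Gallai--Edmonds-based flow construction in the feasibility oracle, where I must verify that the combinatorial choices available in maximum matchings (which $K_i$ is saturated and, for each exposed $K_i$, which vertex is left unmatched) are captured precisely by integer flows meeting the interval demands at the $V_p^*$ sinks. The outer lex-min loop follows a standard water-filling template, with a subtlety in identifying the ``tight set'' at each stage---possibly requiring joint rather than individual infeasibility tests---which can be handled by exploiting the polymatroidal structure of the underlying flow polytope.
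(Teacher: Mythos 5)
Your feasibility oracle is a genuinely different construction from the paper's. The paper (Lemma~\ref{l-interval}, following Plesn\'ik) augments $G$ with dummy vertex sets $A_p$, $B_p$ joined by zero-weight edges and tests whether a maximum weight \emph{perfect} matching of the augmented graph attains the maximum matching weight of $G$; you instead use the Gallai--Edmonds decomposition and an integer flow with lower bounds. Your version is correct for the uniform case at hand: the only degrees of freedom affecting $s_p(M)$ over $M\in{\cal M}$ are which factor-critical components are left unsaturated (the complements of bases of the transversal matroid of $(A,\{K_1,\dots,K_r\})$) and, for each such component, which of its vertices is exposed, and your network encodes exactly this. What you lose relative to the paper's lemma is that the augmented-graph reduction also handles maximum \emph{weight} matchings, which the paper exploits elsewhere; what you gain is a more transparent combinatorial picture of the achievable vectors $(s_1(M),\dots,s_n(M))$.

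The genuine gap is in the outer loop, which is where almost all of the paper's proof effort lies. Your loop is the paper's {\tt Lex-Min} algorithm: determine the largest deviation level $d_1$, freeze an inclusion-minimal tight set $N_1$, and recurse. But it is not self-evident that this greedy freezing produces a lexicographically minimal matching: the inclusion-minimal tight set need not be canonically determined a priori, and one must show that whichever minimal set the algorithm commits to can be completed to a global lex-min solution. The paper proves this by a careful exchange argument --- writing a true lex-min matching $M^*$ as $\overline{M}\oplus P_1\oplus\cdots\oplus P_k$ for alternating paths with endpoints in distinct countries, and chasing a sequence of countries $p_0,p_1,\dots,p_r$ until a rebalancing path system is found that contradicts either minimality of $M^*$ or its closeness to $\overline{M}$ (Claims~1 and~2 and their Cases). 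Your proposal replaces this with the remark that the difficulty ``can be handled by exploiting the polymatroidal structure of the underlying flow polytope.'' That structure is real (the achievable $\delta$-vectors are the integer points of a base polytope obtained by inducing the dual transversal matroid through the component--country bipartite graph), but you neither prove this nor prove that decreasing minimization of the \emph{absolute deviations} $|x_p-s_p(M)|$ --- not of $s_p(M)$ itself, so the standard lexicographically-optimal-base results do not apply verbatim --- is correctly solved by the water-filling loop over such a set. A smaller slip: the candidate levels are values $|x_p-k|$ for integers $k$ and arbitrary real targets $x_p$, so they are neither integers nor common to all countries, and ``deviation at most $d-1$'' should be ``strictly below $d$,'' i.e.\ at most the next smaller candidate value; the paper handles this via per-stage binary search over the $O(|V|)$ candidates. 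Until the tight-set/exchange step is actually proven, the proposal establishes the oracle but not the theorem.
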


\noindent
In the nonuniform case, we can show the following positive but weaker result than Theorem~\ref{t-easy1}. It only holds for directed partitioned matching games of width~$1$ {(i.e., for directed matching games)}.  Its proof can be found in Section~\ref{s-results2} as well.

\begin{theorem}\label{t-easy2}
It is possible to find a {strongly close} maximum weight matching for a directed partitioned matching game $(N,v)$ of width~$1$ {(i.e., for a directed matching game)} and target allocation $x$ in polynomial time.
\end{theorem}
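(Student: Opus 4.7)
The plan is to reduce the task to a single call of Edmonds' maximum weight matching algorithm with carefully perturbed edge weights, using a super-increasing potential that encodes lexicographic order as ordinary integer order. Since the game has width~$1$, each player $p\in N$ owns a unique vertex, which we identify with $p$; for any matching $M$ of the undirected compatibility graph $G$ we then have $u_p(M)=w_{ip}$ if $ip\in M$ and $u_p(M)=0$ otherwise. Hence every deviation $d_p(M):=|x_p-u_p(M)|$ belongs to the polynomially sized set
\[
D=\bigcup_{p\in N}\bigl(\{\,|x_p-w_{ip}|\;:\;ip\in E\,\}\cup\{\,|x_p|\,\}\bigr),
\]
which I enumerate as $r_1<r_2<\dots <r_k$ with $k\le 2|E|+|V|$, writing $\mathrm{rk}(d)=i$ when $d=r_i$.

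Set $\lambda=n+1$ and, for a matching $M$, let $n_i(M)=|\{p\in N:d_p(M)=r_i\}|$. Define
\[
f(M)=\sum_{p\in N}\lambda^{\mathrm{rk}(d_p(M))}=\sum_{i=1}^k n_i(M)\,\lambda^i.
\]
Because $n_i(M)\le n<\lambda$ for every $i$, the integer $f(M)$ has base-$\lambda$ digits (most-significant first) equal to $n_k,n_{k-1},\dots,n_1,0$, and no carries occur. A routine position-by-position comparison of two sorted non-increasing deviation vectors shows that minimizing $f$ is equivalent to lex-minimizing the count profile $(n_k,\dots,n_1)$, which in turn is equivalent to lex-minimizing $d(M)$. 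It therefore suffices to find $M\in{\cal M}$ minimizing $f(M)$.

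The function $f$ is linear over edges: letting $C=\sum_p\lambda^{\mathrm{rk}(|x_p|)}$ (a constant) and
\[
c(ij)=\lambda^{\mathrm{rk}(|x_i-w_{ji}|)}+\lambda^{\mathrm{rk}(|x_j-w_{ij}|)}-\lambda^{\mathrm{rk}(|x_i|)}-\lambda^{\mathrm{rk}(|x_j|)},
\]
splitting each player into matched or unmatched in $M$ yields $f(M)=C+\sum_{ij\in M}c(ij)$. To force the output matching to lie in ${\cal M}$ I apply the standard two-level blow-up: after clearing denominators so that $w$ is integral, fix an integer $W>2n\lambda^k$ and define $w^{\ast}(ij)=W\cdot w(ij)-c(ij)$. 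Since $\lambda^k$ and $W$ have polynomial bit length, Edmonds' algorithm computes a maximum $w^{\ast}$-weight matching $M^{\ast}$ of $G$ in polynomial time; by the choice of $W$ the matching $M^{\ast}$ first maximizes $w$ (so $M^{\ast}\in{\cal M}$) and, among such matchings, minimizes $\sum_{ij\in M}c(ij)$, hence $f(M)$, hence is lex-min.

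The conceptual heart of the argument is the equivalence between scalar minimization of $f$ and lex-minimization of the sorted deviation vector, and this is secured by the carry-free base-$\lambda$ representation obtained from $\lambda>n$. The remaining obstacle is purely bit-length bookkeeping: one must verify that $k$ (and hence $\lambda^k$) and $W$ stay polynomially bounded, which follows because $|D|=O(|E|+|V|)$ and $|c(ij)|\le 2\lambda^k$, so the single application of Edmonds' algorithm runs in polynomial time.
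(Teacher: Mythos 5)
Your proof is correct and follows essentially the same strategy as the paper's: encode the lexicographic objective into exponentially growing edge-weight penalties indexed by the rank of each possible deviation value, and read off the answer from a single maximum-weight matching computation. The differences are only technical --- you use a carry-free base-$(n+1)$ encoding of the deviation count profile and fold both objectives into one integer weight via the big multiplier $W$ (which tacitly assumes the $w_{ij}$ are rationals whose denominators can be cleared), whereas the paper ranks the $n(n-1)$ individual deviations with powers of $2$ and keeps the two objectives as a lexicographically ordered pair of weights handled by the blossom algorithm over an ordered abelian group.
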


\noindent
Our remaining results, all proven in Section~\ref{s-results3}, are all hardness results. They show that the situation quickly becomes computationally more difficult when weights are involved, even if we only want to find {a weakly close} maximum weight matching and the input is restricted in some way. For example, in the first of these hardness results we set the number of countries~$n$ equal to~$2$.

\begin{theorem}\label{t-hard1}
It is \NP-hard to find a {weakly close} maximum weight matching for a directed partitioned matching game~$(N,v)$ with $n=2$ and target allocation~$x$.
\end{theorem}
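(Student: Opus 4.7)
The plan is to reduce from \textsc{Partition}. Given positive integers $a_1,\dots,a_k$ summing to $2T$, I would build a directed partitioned matching game $(N,v)$ with $N=\{1,2\}$ together with a target allocation $x$ such that some $M\in\mathcal{M}$ achieves $|x_1-u_1(M)|=0$ if and only if there is a subset of the $a_i$ summing to $T$. The first key observation is that with just two countries, $u_1(M)+u_2(M)=v(N)$ and $x_1+x_2=v(N)$, so $|x_1-u_1(M)|=|x_2-u_2(M)|$, and minimizing $\max_{p}|x_p-u_p(M)|$ over $M\in\mathcal{M}$ is equivalent to minimizing $|x_1-u_1(M)|$. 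An oracle that returns a minimal maximum weight matching $M^{*}$ thus lets us read off $u_1(M^{*})$ and decide \textsc{Partition}.

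For each index $i$, the gadget I would use is a $4$-cycle on two vertices $a_i,b_i\in V_1$ and two vertices $c_i,d_i\in V_2$ with edges $a_ic_i$, $c_ib_i$, $b_id_i$, $d_ia_i$. Every edge of the cycle is given the same undirected weight $W$, where $W$ is chosen large (say $W=4\max_i a_i$). Because the $k$ gadgets are vertex-disjoint and the edge weights are uniform, every maximum weight matching of $G$ selects, inside each 4-cycle, one of the two perfect matchings, which I label Option~0 $=\{a_ic_i,b_id_i\}$ and Option~1 $=\{c_ib_i,d_ia_i\}$. So $|\mathcal{M}|=2^{k}$ and matchings correspond bijectively to subsets $S\subseteq\{1,\dots,k\}$ of gadgets that pick Option~1. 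I would then split each arc so that Option~1 contributes exactly $2a_i$ more to $u_1$ than Option~0 in gadget~$i$, while keeping every arc weight strictly positive and each undirected weight equal to $W$: with a buffer $B$ satisfying $\max_i a_i<B<W-\max_i a_i$ (e.g.\ $B=2\max_i a_i$), set
\[
w_{c_ia_i}=w_{d_ib_i}=W-B,\quad w_{a_ic_i}=w_{b_id_i}=B,
\]
\[
w_{c_ib_i}=w_{d_ia_i}=W-B+a_i,\quad w_{b_ic_i}=w_{a_id_i}=B-a_i.
\]
A direct count using the defining formula for $u_p$ then gives $u_1(M)=2k(W-B)+2\sum_{i\in S}a_i$, where $S$ is the set of Option~1 gadgets chosen by~$M$.

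Finally, I set $x_1:=2k(W-B)+2T$ and $x_2:=v(N)-x_1$, so that $x$ is an allocation. Then $|x_1-u_1(M)|=2\,|T-\sum_{i\in S}a_i|$, which equals $0$ exactly when $S$ is a \textsc{Partition} witness and is at least $2$ otherwise since the $a_i$ are integers. Consequently, a polynomial-time algorithm for computing a minimal maximum weight matching would decide \textsc{Partition} in polynomial time, proving the theorem. The main obstacle I foresee is the gadget design: one needs enough degrees of freedom in the arc weights to shift $u_1$ independently by $2a_i$ per gadget, while simultaneously forcing $\mathcal{M}$ to have the clean product structure (one of exactly two choices per gadget) by making all $4k$ undirected weights coincide; the uniform edge weight $W$ together with the buffer $B$ handles both requirements at once.
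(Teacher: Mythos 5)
Your proposal is correct and follows essentially the same route as the paper: a reduction from \textsc{Partition} using $k$ vertex-disjoint gadgets, each admitting exactly two maximum-weight choices whose contributions to $u_1$ differ by a controlled multiple of $a_i$, with the target $x$ set to the balanced value so that deviation $0$ is attainable iff a partition exists. The only difference is cosmetic: the paper uses a three-vertex path $v_i'\!-\!v_i\!-\!v_i''$ with all four arcs of weight $a_i$, whereas you use a $4$-cycle with uniform undirected weight and asymmetric arc splits, which is slightly more elaborate but equally valid.
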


\noindent
Recall that the setting of sparsely spread edge weights is highly relevant in our context. Hence, we consider this situation more carefully.
We say that a directed partitioned matching game $(N,v)$, defined on a graph $G$ with positive edge weighting $w$ and partition ${\cal V}$ of $V$, is {\it $d$-sparse} if $w$ takes on only $d$ distinct values. Note that if $d=1$, we obtain a uniform game (possibly after rescaling~$w$) and in that case we can apply Theorem~\ref{t-easy1}. If $d=2$, then we already obtain computational hardness, as shown in our next result.

\begin{theorem}\label{t-hard2}
It is \NP-hard to find a {weakly close} maximum weight matching for a $2$-sparse directed partitioned matching game $(N,v)$ and target allocation~$x$.
\end{theorem}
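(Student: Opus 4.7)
The plan is to reduce from \textsc{Exact Cover by 3-Sets} (X3C), which is NP-complete: given a universe $U$ of size $3q$ and a collection $\mathcal{C}$ of $3$-subsets of $U$, decide whether some subcollection $\mathcal{C}' \subseteq \mathcal{C}$ covers each element of $U$ exactly once. I will construct in polynomial time a $2$-sparse directed partitioned matching game $(N,v)$ and a target allocation $x$ whose minimum max-deviation over $\mathcal{M}$ is $0$ if and only if the X3C instance is a yes-instance.

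The player set will consist of one auxiliary ``residual'' country $V_0$ and one ``element country'' $V_u$ for each $u \in U$. For each set $C=\{u_{i_1},u_{i_2},u_{i_3}\}\in \mathcal{C}$ I attach a vertex-disjoint gadget that is a directed $6$-cycle on fresh vertices $v^C_0,\ldots,v^C_5$, placing $v^C_1 \in V_{u_{i_1}}$, $v^C_3 \in V_{u_{i_2}}$, $v^C_5 \in V_{u_{i_3}}$, and $v^C_0, v^C_2, v^C_4 \in V_0$. On each of the six arcs of the cycle the ``clockwise'' direction receives weight $2$ and the ``counter-clockwise'' direction receives weight $1$; hence the arc weighting takes only two values, so the resulting game is $2$-sparse, and every undirected edge of the compatibility graph $G$ has weight $3$.

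The two perfect matchings of this $6$-cycle are the only maximum weight matchings inside the gadget: $M^C_1=\{v^C_0 v^C_1, v^C_2 v^C_3, v^C_4 v^C_5\}$, which I label ``$C$ selected'', and $M^C_2=\{v^C_1 v^C_2, v^C_3 v^C_4, v^C_5 v^C_0\}$, ``$C$ not selected''. A short computation using $u_p(M)=\sum_{ij \in M,\, j \in V_p} w_{ij}$ shows that $M^C_1$ gives utility $2$ to each of the three element countries $V_u$ with $u \in C$ and utility $3$ to $V_0$, while $M^C_2$ gives utility $1$ to each such $V_u$ and utility $6$ to $V_0$; both matchings have weight $9$. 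Since the gadgets are disjoint, $\mathcal{M}$ is in bijection with choices of a subcollection $\mathcal{C}' \subseteq \mathcal{C}$; writing $\mathcal{S}(u)=\{C\in \mathcal{C}:u\in C\}$ and $d_u=|\mathcal{S}(u)|$, I obtain $u_{V_u}(M)=|\mathcal{C}'\cap \mathcal{S}(u)|+d_u$ and $u_{V_0}(M)=6|\mathcal{C}|-3|\mathcal{C}'|$.

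Finally I set $x_{V_u}=1+d_u$ for every $u\in U$ and $x_{V_0}=6|\mathcal{C}|-3q$; a quick check gives $\sum_p x_p=9|\mathcal{C}|=v(N)$, so $x$ is a valid target allocation. The per-country deviation $|x_{V_u}-u_{V_u}(M)|$ is zero for every $u$ precisely when $|\mathcal{C}'\cap \mathcal{S}(u)|=1$ for every $u$, i.e.\ exactly when $\mathcal{C}'$ is an exact cover, and this condition automatically forces $|\mathcal{C}'|=q$ so the $V_0$-deviation vanishes as well. Hence a polynomial-time algorithm returning a minimal maximum weight matching would decide X3C. The step that needs genuine care is the gadget itself: verifying that the $6$-cycle really yields exactly those two maximum weight matchings and that the clockwise/counter-clockwise asymmetry produces the claimed utility split; once the gadget behaves as advertised, the remainder of the reduction is a routine aggregation over gadgets together with the choice of target.
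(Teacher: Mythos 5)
Your reduction is correct, but it takes a genuinely different route from the paper's. The paper reduces from {\sc $3$-Partition}, building $3k$ sources and $3k$ sinks joined by $(3k)^2$ internally disjoint paths of length $2a_q-1$ whose arcs alternate between two large weights $L$ and $L+1$; the two weight values there serve both to force every maximum weight matching to be perfect and to encode the numbers $a_i$ through path lengths, and the strong \NP-completeness of {\sc $3$-Partition} is needed to keep the construction polynomial. You instead reduce from {\sc Exact Cover by $3$-Sets} using vertex-disjoint $6$-cycle gadgets in which every undirected edge has the \emph{same} weight $3$, so that ${\cal M}$ is trivially the set of perfect matchings ($2^{|\mathcal{C}|}$ of them, one binary choice per gadget); the two arc-weight values $1$ and $2$ are used purely to split each matched edge's weight asymmetrically between the two incident countries, exactly as in the paper's ``perfect game'' setup for Theorem~\ref{t-hard4}. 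I checked the gadget arithmetic ($M^C_1$ gives $2$ to each element country and $3$ to $V_0$; $M^C_2$ gives $1$ and $6$; both have weight $9$), the aggregation $u_{V_u}(M)=|\mathcal{C}'\cap\mathcal{S}(u)|+d_u$ and $u_{V_0}(M)=6|\mathcal{C}|-3|\mathcal{C}'|$, and the target $x$, which sums to $v(N)=9|\mathcal{C}|$; all deviations are integers, so the minimum max-deviation is $0$ exactly on yes-instances. Note that $d$-sparseness in the paper counts distinct \emph{arc} weights (the ``perfect'' games with arc weights $\frac13,\frac12,\frac23$ are called $3$-sparse), so your game with arc weights $\{1,2\}$ is legitimately $2$-sparse, and since the arc weighting is not constant the directed utility $u_p$ (not $s_p$) applies, so Theorem~\ref{t-easy1} is not triggered. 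What your version buys is simplicity and a sharper conclusion -- hardness holds even when the undirected compatibility graph is uniform and is a disjoint union of $6$-cycles, so the difficulty is driven entirely by the directed asymmetry; like the paper's construction, yours has both the number of countries ($3q+1$) and the width ($|V_0|=3|\mathcal{C}|$) unbounded, consistent with the open questions stated after the theorem.
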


\noindent
As will be clear from its proof, both the width~$c$ and the number of countries~$n$ in the hardness construction of Theorem~\ref{t-hard2}
{are} arbitrarily large. If $c$ and $n$ are constant, the partitioned matching game has constant size. Hence, it is natural to consider sparse directed partitioned matching games $(N,v)$ for which
either $c$ or $n$ is a constant. We were not able to solve either of these two cases, but we can show two partial results for the case where $n$ is a constant.

\medskip
\noindent
By a ``compact description'' of a game defined on a graph we mean a logarithmic description of the graph (if possible). For example, a cycle of length $k$ can be described by its length, which results in input size $O(\log k)$ rather than $k$. If we assume that we have such a compact description of a directed partitioned matching game, then having a constant number of countries
does not make the problem easier, as our next result shows.

\begin{theorem}\label{t-hard3}
It is \NP-hard to find a {weakly close} maximum weight matching for a $3$-sparse compact
directed partitioned matching game $(N,v)$
with $n=2$
and target allocation~$x$.
\end{theorem}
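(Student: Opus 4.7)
The plan is to reduce from the binary \textsc{Partition} problem, which is NP-complete when integers are encoded in binary: given positive integers $a_1, \ldots, a_m$ with $\sum_i a_i = 2T$, decide whether some $S \subseteq \{1, \ldots, m\}$ satisfies $\sum_{i \in S} a_i = T$. The reduction will produce a compact directed partitioned matching game on $N = \{1,2\}$ whose undirected edge weighting~$w$ takes exactly the three values $\{1,2,3\}$, together with a target allocation~$x$, such that a minimal maximum weight matching has maximum deviation zero precisely when the instance admits a valid partition.

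For each $a_i$ I would use a gadget formed by a directed even cycle on $2 a_i$ vertices $v^i_1, \ldots, v^i_{2a_i}$, with odd-indexed vertices in $V_1$ and even-indexed in $V_2$. Every edge of the underlying undirected cycle receives weight~$3$, realised by arc weights chosen so that each ``type A'' edge $v^i_{2j-1} v^i_{2j}$ contributes $(2,1)$ to $(u_1, u_2)$ when matched while each ``type B'' edge $v^i_{2j} v^i_{2j+1}$ contributes $(1,2)$. The even cycle has exactly two perfect matchings, both of total weight $3 a_i$ and therefore both maximum, and they distribute weight between the countries as $(2a_i, a_i)$ versus $(a_i, 2a_i)$. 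Crucially, this gadget is described compactly from the single integer $a_i$ using $O(\log a_i)$ bits. To promote the sparseness of~$w$ from~$1$ to exactly~$3$, I then append two disjoint isolated edges on fresh vertices, one of weight~$1$ and one of weight~$2$, each with its two endpoints split between $V_1$ and $V_2$; since these endpoints have no other neighbours, both edges are forced into every maximum weight matching and contribute fixed constants $(C_1, C_2)$ to $(u_1(M), u_2(M))$ independently of any choices inside the cycles.

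Parametrising a maximum weight matching $M$ by the set $S \subseteq \{1, \ldots, m\}$ of cycles in which the type A perfect matching is chosen then yields $u_1(M) = C_1 + 2T + \sum_{i \in S} a_i$ and $u_2(M) = C_2 + 2T + \sum_{i \notin S} a_i$, so setting the target allocation $x = (C_1 + 3T,\; C_2 + 3T)$ makes both country deviations equal to $|T - \sum_{i \in S} a_i|$. A minimal maximum weight matching therefore achieves maximum deviation zero precisely when $S$ witnesses a valid partition of the $a_i$, so a polynomial-time algorithm for finding such a matching would decide binary \textsc{Partition}. The main obstacle is reconciling three competing requirements on the cycle gadget---that both perfect matchings be of maximum weight while giving country~$1$ shares differing by exactly $a_i$, that all arc weights remain strictly positive, and that the undirected edge weighting across the whole instance uses only a constant number of distinct values; the symmetric $(2,1)/(1,2)$ arc-weight choice is what makes all three hold simultaneously, after which the compactness of the description and the deviation calculation are routine checks.
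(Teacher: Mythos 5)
Your proposal is correct and follows essentially the same route as the paper: a reduction from binary \textsc{Partition} using one compactly described even cycle per integer $a_i$, each admitting exactly two maximum weight matchings whose utilities for the two countries differ by exactly $a_i$, with the target allocation set at the midpoint so that zero deviation is achievable if and only if the instance is a yes-instance, and with the compact (logarithmic) encoding carrying the weak \NP-hardness of \textsc{Partition} over to the stated problem. Your gadget (uniform undirected weight $3$ with asymmetric $(2,1)/(1,2)$ arc splits, plus two padding edges for $3$-sparseness) is in fact a little cleaner than the paper's cycles of length $4a_i+4$ with three alternating undirected weights $L$, $L+1$, $L+\tfrac12$; just note that for $a_i=1$ the two-vertex ``cycle'' degenerates, which is trivially repaired by, e.g., doubling all the $a_i$.
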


\noindent
We now make a connection to the {\sc Exact Perfect Matching} problem introduced by Papadimitriou and Yannakakis~\cite{PY82} forty years ago. This problem has as input an undirected graph $G$ whose edge set is partitioned into a set $R$ of {\it red} edges and a set $B$ of {\it blue} edges. The question is whether $G$ has a perfect matching with exactly $k$ red edges for some given integer~$k$. The complexity status of {\sc Exact Perfect Matching} is a longstanding open problem, and so far only partial results were shown (see, for example,~\cite{GKMT17}).

We consider directed partitioned matching games $(N,v)$ even with $n=2$. As before, we let $\overline{G}=(V,A)$ be the underlying directed compatibility graph with a positive arc weighting $w$ and with partition $(V_1,V_2)$ of $V$. For every $2$-cycle $iji$ with $i\in V_1$ and $j\in V_2$, we set $w_{ij}=\frac{1}{3}$ and $w_{ji}=\frac{2}{3}$.
For every other $2$-cycle $iji$ in $\overline{G}$ (so where $i,j\in V_1$ or $i,j\in V_2$) we set $w_{ij}=w_{ji}=\frac{1}{2}$.  Note that
$w\equiv 1$ in the corresponding undirected compatibility graph $G=(V,E)$ for the weighting~$w$ given by $w(ij)=w_{ij}+w_{ji}$ for every edge $ij\in E$.\footnote{The exact values of $w_{ij}$ and $w_{ji}=1-w_{ij}$ do not matter as long as they differ from $\frac{1}{2}$ on arcs between $V_1$ and $V_2$. The case where $w\equiv \frac{1}{2}$ in $\overline{G}$ corresponds to a partitioned matching game on an (undirected) uniform compatibility graph $G$, so Theorem~\ref{t-easy1} would hold.}
We also assume that $G$ has a perfect matching. Hence, as $w\equiv 1$ in $G$, the set ${\mathcal M}$ consists of all perfect matchings of $G$. We say that $(N,v)$ is {\it perfect}. By {construction}, the perfect partitioned matching games $(N,v)$ {that we just created} are $3$-sparse and have $n=2$.
We show the following result.

\begin{theorem}\label{t-hard4}
{\sc Exact Perfect Matching} and the problem of finding a {weakly close} maximum weight matching for a $3$-sparse perfect directed partitioned matching game with $n=2${,} and target allocation~$x$
are polynomially equivalent.
\end{theorem}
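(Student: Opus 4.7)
The plan is to establish polynomial-time Turing reductions both ways. The key structural observation driving both directions is that in any perfect PMG, the allocation components of a perfect matching depend only on the number of crossing ($V_1$-$V_2$) edges. Colour an edge of $G$ \emph{red} if it is crossing and \emph{blue} otherwise; in a perfect matching $M$ with $k = k(M)$ red edges, the number of blue edges within $V_j$ equals $(|V_j| - k)/2$, each within-part edge contributes $1/2 + 1/2 = 1$ to its owning country, and each crossing edge contributes $2/3$ to $u_1$ and $1/3$ to $u_2$. Hence $u_1(M) = |V_1|/2 + k/6$, $u_2(M) = |V_2|/2 - k/6$, and the objective $\max\{|x_1 - u_1(M)|,|x_2 - u_2(M)|\}$ is a univariate function $d(k)$.

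For ``minimal matching in a perfect PMG reduces to {\sc Exact Perfect Matching}'', I enumerate the at most $\min(|V_1|,|V_2|)+1$ candidate values of $k$, sort them in non-decreasing order of $d(k)$, and query the {\sc Exact Perfect Matching} oracle on $(G,R,B,k)$ for each. The first ``yes'' pinpoints the optimal $k$; standard self-reducibility (forbid or require one edge at a time, recursing on a smaller {\sc Exact Perfect Matching} instance) then extracts a witnessing minimal matching.

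For the reverse direction, given $(H,R,B,k_0)$ (first check that $H$ has a perfect matching; otherwise output NO), I replace every edge $e = uv$ of $H$ by the length-three path $u\,w_1^{(e)}\,w_2^{(e)}\,v$ to produce a graph $H'$, and set $V_1 = V(H) \cup \bigcup_{e \in B}\{w_1^{(e)},w_2^{(e)}\}$ and $V_2 = \bigcup_{e \in R}\{w_1^{(e)},w_2^{(e)}\}$. Because $w_1^{(e)}$ and $w_2^{(e)}$ have degree two in $H'$, each gadget admits only two configurations in any perfect matching $M'$ of $H'$: ``used'' (both $u w_1^{(e)}$ and $w_2^{(e)} v$ appear in $M'$) or ``not used'' (only $w_1^{(e)} w_2^{(e)}$ appears). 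The used gadgets form a perfect matching $M$ of $H$, giving a bijection between the perfect matchings of $H$ and of $H'$. A used red gadget contributes exactly two crossing edges in $H'$ (as $w_1^{(e)},w_2^{(e)} \in V_2$) while every other configuration contributes none, so $k(M') = 2\,|M \cap R|$. Taking target $x = (|V_1|/2 + k_0/3,\,|V_2|/2 - k_0/3)$, a direct calculation gives deviation $|k(M') - 2k_0|/6$, which equals zero iff $H$ has a perfect matching with exactly $k_0$ red edges. The minimal matching oracle therefore decides the {\sc Exact Perfect Matching} instance.

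The main obstacle is this reverse reduction: a direct encoding would require a 2-coloring of $V(H)$ in which the red edges form a bipartite cut and the blue edges lie within parts, which is generally infeasible. The length-three subdivision resolves this by placing every original vertex uniformly in $V_1$ and freely assigning the two fresh internal vertices of each gadget to $V_1$ or $V_2$ according to the colour of the subdivided edge, thereby realizing an arbitrary red/blue coloring of $H$ as a cut structure on $H'$ while preserving the matching bijection and the crossing-count identity $k(M') = 2\,|M \cap R|$.
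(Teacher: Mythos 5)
Your proof is correct and follows essentially the same route as the paper's: the same red/blue colouring of crossing versus within-part edges and enumeration over the crossing count $k$ for one direction, and the same double-subdivision gadget (with the roles of $V_1$ and $V_2$ merely mirrored) with a target allocation forcing exactly $2k_0$ crossing edges for the other. Your explicit closed form $u_1(M)=|V_1|/2+k/6$ and the self-reducibility step for extracting a witness are slightly more careful than the paper's write-up, but the underlying argument is identical.
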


\noindent
In Section~\ref{s-con} we finish our paper with a discussion on future work.

\section{The Proofs of Theorems~\ref{t-main1}--\ref{t-struc2}}\label{s-b}

In this section we prove Theorems~\ref{t-main1}--\ref{t-struc2}.
A $b$-matching $M$ in a graph $G$ with capacity function $b$ {\it covers} a vertex $u$ if $M$ contains an edge incident to $u$, whereas $M$ {\it saturates} $u$ if $M$ contains $b(u)$ edges incident to $u$.
We identify $M$ with the subgraph of $G$ {\it induced} by $M$ (that is, the subgraph of $G$ consisting of all edges in $M$ and all vertices covered by $M$). We speak about \emph{(connected) components}  of~$M$. For instance, for $b=1$, every edge $e \in M$ is a component.

We start with Theorem~\ref{t-main1}, which we restate below.

\medskip
\noindent
{\bf Theorem~\ref{t-main1} (restated).}
{\it  \emph{P2} and \emph{P3} are co-\NP-hard for uniform $b$-matching games if $b\leq 3$.}

\begin{proof}
The proof is by reduction from a variant of the {\sc $3$-Regular Subgraph} problem. This problem is to decide if a given graph has a $3$-regular subgraph (a graph is {\it $3$-regular} if every vertex has degree~3). This problem is \NP-complete even for bipartite graphs~\cite{St97}. We define the {\sc Nearly $3$-Regular Subgraph} problem. This problem is to decide whether a (non-bipartite) graph $G$ has an {\it nearly $3$-regular subgraph}, that is, a subgraph in which all vertices have degree~$3$ except for one vertex that must be of degree~$2$.

The {\sc Nearly $3$-Regular Subgraph} problem is also \NP-complete. Namely, we can reduce from {\sc $3$-Regular Subgraph} restricted to bipartite graphs.
Given a bipartite graph $(U \cup V, E)$, we construct the non-bipartite graph~$G$ consisting of $|E|$ disjoint copies of $(U\cup V, E)$ where in the copy corresponding to $e \in E$ the edge $e$ is subdivided by a new vertex $v_e$ (that is, we replace $e$ by~$v_e$ and make $v_e$ adjacent to the two end-vertices of $e$).
Now, $(U \cup V, E)$ has a $3$-regular subgraph if and only if $G$ has a nearly $3$-regular subgraph. Indeed, if there is a $3$-regular subgraph in $(U \cup V, E)$ that contains the edge $e$, there will be a nearly $3$-regular subgraph in $G$ whose degree~$2$ vertex is~$v_e$. Conversely, if there is a nearly $3$-regular subgraph in $G$, it must contain a vertex $v_e$ for some~$e$; otherwise the subgraph would be bipartite, but an almost $3$-regular graph cannot be bipartite.

\medskip
\noindent
As mentioned, we reduce from  {\sc Nearly $3$-Regular Subgraph} for non-bipartite graphs. Given an instance $G=(V,E)$ of the latter, we construct a graph $\overline{G}$ with vertex capacities $b(i) \le 3$ and edge weights $w=1$  as follows (see also Figure~\ref{fig:weighted}).

\begin{itemize}
\item Give each $v\in V$ capacity $b(v)=3$.
\item Make each $v\in V$ adjacent to new vertices $a_{v,1}$, $a_{v,2}$, $a_{v,3}$ with $b(a_{v,1})=b(a_{v,2})=b(a_{v,3})=2$.
\item Make each $a_{v,j}$ part of a triangle with two new vertices $c_{v,j}$ and $d_{v,j}$ with $b(c_{v,j})=b(d_{v,j})=2$.
\item For each $v\in V$ add a new vertex $a_v$ with $b(a_v)=3$ and make $a_v$ adjacent to $a_{v,1}$, $a_{v,2}$, $a_{v,3}$.
\item For each $v\in V$ add a new vertex $c_v$ with $b(c_v)=3$ and make $c_v$ adjacent to $c_{v,1}$, $c_{v,2}$, $c_{v,3}$.
\item For each $v\in V$ add a new vertex $d_v$ with $b(d_v)=3$ and make $d_v$ adjacent to $d_{v,1}$, $d_{v,2}$, $d_{v,3}$.
\item Make a new vertex $r$ called the \emph{root node} $r$ with $b(r)=1$ adjacent to every $v \in V$.
\end{itemize}

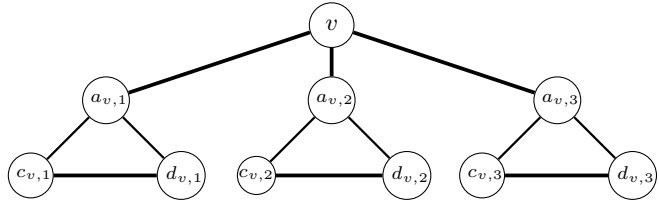
\begin{figure}
\begin{center}
  \begin{tikzpicture}[scale=1]
 \node[minimum size=1.5mm] (610)[draw=black,circle, inner sep=2pt] at (6,9){$~v~$};
 \scriptsize
 \node[minimum size=1.5mm] (38)[draw=black,circle, inner sep=0pt] at (3,8){$~a_{v,1}$};
 \node[minimum size=1.5mm] (68)[draw=black,circle, inner sep=0pt] at (6,8){$~a_{v,2}$};
   \node[minimum size=1.5mm] (98)[draw=black,circle, inner sep=0pt] at (9,8){$~a_{v,3}$};
 \node[minimum size=1.5mm] (27)[draw=black,circle, inner sep=0pt] at (2,7){$~c_{v,1}$};
 \node[minimum size=1.5mm] (47)[draw=black,circle, inner sep=0pt] at (4,7){$~d_{v,1}$};
\node[minimum size=1.5mm] (57)[draw=black,circle, inner sep=0pt] at (5,7){$c_{v,2}$};
 \node[minimum size=1.5mm] (77)[draw=black,circle, inner sep=0pt] at (7,7){$~d_{v,2}$};
 \node[minimum size=1.5mm] (87)[draw=black,circle, inner sep=0pt] at (8,7){$~c_{v,3}$};
   \node[minimum size=1.5mm] (107)[draw=black,circle, inner sep=0pt] at (10,7){$~d_{v,3}$};
\draw[line width=0.5mm]  (610)--(68);
\draw[line width=0.5mm]  (610)--(38);
\draw[line width=0.5mm]  (610)--(98);
\draw[line width=0.3mm]  (27)--(38)--(47);
\draw[line width=0.3mm]  (57)--(68)--(77);
\draw[line width=0.3mm]  (87)--(98)--(107);
\draw[line width=0.5mm] (27)--(47);
\draw[line width=0.5mm] (57)--(77);
\draw[line width=0.5mm] (87)--(107);
\end{tikzpicture}
\caption{A vertex $v\in V$ with the three pendant triangles ($a_v,c_v,d_v$). Thick edges are edges in $M$.
{In this case, we also say that $v$ is matched ``down'' to all of $a_{v,1},a_{v,2},a_{v,3}$.}
}\label{fig:weighted}
\end{center}
\end{figure}

We next describe a maximum $b$-matching  in $\overline{G}$, as indicated, in part, in Figure \ref{fig:weighted} as well.
Let~$M$ consist of all edges $va_{v,j}$ plus all edges of the form $c_{v,j}d_{v,j}$ plus all edges incident to $a_v,c_v$ and $d_v$.
As $M$ saturates all vertices except $r$, we find that $M$ is a maximum $b$-matching. Hence,
$$v(\overline{N})=|M|=3|V|+3|V|+9|V|=15|V|.$$
We let $(\overline{N},v)$ be the $b$-matching game defined on $(\overline{G},w)$. To show the theorem it remains to prove the following claim.

\medskip
\noindent
{\it Claim: $G$ has a nearly $3$-regular subgraph if and only if $(\overline{N},v)$ has an empty core.}

\medskip
\noindent
First suppose $G$ contains no nearly $3$-regular subgraph. We define a vector $x$ as follows.
\begin{itemize}
\item Set $x \equiv \frac{3}{2}$ on the vertices of $V$.
\item Set $x \equiv  1$ on all triangle vertices $a_{v,j}$, $c_{v,j}$, $d_{v,j}$.
\item Set $x \equiv \frac{3}{2}$ on all vertices $a_v,c_v,d_v$.
\item Set $x_r=0$.
\end{itemize}

\noindent
Note that $x(\overline{N})=\frac{3}{2}|V|+9|V|+3\cdot \frac{3}{2}|V|=15|V|=v(\overline{N})$. Hence, $x$ is an allocation.

We claim that $x$ is even a core allocation. For a contradiction, suppose $x(S)<v(S)$ for some $S \subseteq \overline{N}$; we say that $S$ is a {\it blocking} coalition. Let $M_S$ be a maximum $b$-matching in $\overline{G}[S]$, so  $x(S) < |M_S|$.
We assume that $S$ is a minimal blocking coalition (with respect to set inclusion).

As $x_i$ equals half the capacity of each vertex $i$ except $r$, we find that for every $S^*\subseteq N\setminus \{r\}$, we have $x(S^*)\geq v(S^*)$. Hence, $S$ contains $r$.
By the same reason, $M_S$ must saturate all vertices of $S$ (as otherwise $x(S)\geq |M_S|=v(S)$ would hold; {recall that $b(r)=1$}).
So, in particular $M_S$ contains some edge $rv_0$ for some $v_0 \in V$. As $M_S$ saturates all vertices of $S$, $v_0$ must be matched by $M_S$ to two more vertices (other than $r$).

Assume first that all $v \in S \cap V \setminus \{v_0\}$ are either matched ``down'' to $a_{v,1},a_{v,2},a_{v,3}$ by three matching edges in $M_S$
{(where ``down'' is with respect to the drawing in Figure~\ref{fig:weighted})}
 or matched ``up'' by three matching edges belonging to $E$.
Let $v \in V \cap S$ be matched {``}down{''} to its three triangles. Let
$A$ be the component of {$M_S$}
containing $v$. Note that $V(A)$ is a subset of the union of $\{v,a_v,c_v,d_v\}$ and the set of vertices of the three triangles for $v$.
Hence, as all vertices in $A$ are saturated by $M_S$ and each vertex except $r$ gets exactly half of its capacity, we find that $x(A)\geq v(V(A))$. So, $S\setminus V(A)$
is a smaller blocking set, contradicting the minimality of~$S$. Thus, all vertices $v \in S\cap V$ with $v \neq v_0$ must be matched ``up''. If also $v_0$ is matched ``up'' by two edges in $M_S \cap E$, then $(S\cap V,M_S)$ is a nearly 3-regular subgraph of $G$, a contradiction.

From the above, we are left to deal with the case where there exists a vertex $v \in S\cap V$ that is, say, matched {``}down{''} by some edge $e=va_{v,1}\in M_S$ but, say, $e'=va_{v,3}\notin M_S$. We distinguish the following three cases:

\medskip
\noindent
{\bf Case 1.} $a_v,c_v,d_v \in S$.\\
Since all these are saturated by $M_S$, we have all $a_{v,j}, c_{v,j}, d_{v,j}\in S$. Thus $a_{v,3},c_{v,3},d_{v,3} \in S$ and each of these is already
 matched to $a_v,c_v,d_v$, respectively. Since $va_{v,3} \notin M_S$, at most two of $a_{v,3},c_{v,3},d_{v,3}$ can be saturated by $M_S$,
 a contradiction.

\medskip
\noindent
{\bf Case 2.} $a_v,c_v \in S, d_v \notin S$.\\
Again we find that $a_{v,1},a_{v,2},a_{v,3} \in S$ and $c_{v,1},c_{v,2},c_{v,3} \in S$. Moreover,
 each of these is already matched by some edge in $M_S$ to $a_v$ or $c_v$. In addition, $a_{v,1}$ is matched to $v$, so $a_{v,1}$ is
 ``already'' saturated. Hence, in order to saturate also $c_{v,1}$, $M_S$ must contain $c_{v,1}d_{v,1}$. Hence, $d_{v,1} \in S$ and $M_S$
 cannot saturate it (as $d_v \notin S$), a contradiction.

\medskip
\noindent
 {\bf Case 3.}
 $a_v \in S, c_v, d_v \notin S$.\\
 Here, we conclude that $a_{v,1},a_{v,2},a_{v,3} \in S$. Since $va_{v,3} \notin M_S$, $a_{v,3}$ can  only
 be saturated if, say, $a_{v,3}c_{v,3} \in M_S$ and hence $c_{v,3} \in S$. The latter can only be saturated by $c_{v,3}d_{v,3} \in M_S$.
 Hence, $d_{v,3} \in S$ and $d_{v,3}$ cannot be saturated (since $a_{v,3}a_v$ and $a_{v,3}c_{v,3}$ are in $M_S$), a contradiction.

 \medskip
 \noindent
{\bf Case 4.} $a_v \notin S$.\\
Since $a_{v,1}$ is in $S$, it must be saturated, and as  $a_v\notin S$, either $a_{v,1}c_{v,1}$ or $a_{v,1}d_{v,1}\in M_S$. By symmetry, suppose that $a_{v,1}c_{v,1}\in M_S$. Then $c_{v,1}$ is in $S$ and must be saturated, so either $c_{v,1}c_v\in M_S$ or $c_{v,1}d_{v,1}\in M_S$. In the first case $c_v\in S$ and $d_v\notin S$ (as $d_{v,1}\notin S$) and in the second case $c_v\notin S$ and $d_v\in S$ (as $d_{v,1}$ can only be saturated by $d_{v,1}d_v$). In both cases we get a contradiction when considering the third triangle, as follows. If $c_v\in S$ and $d_v\notin S$, then we have $c_vc_{v,3}\in M_S$. Thus $c_{v,3}$ is in $S$ and must be saturated, so must be matched to either $a_{v,3}$ or $d_{v,3}$. In the former case $a_{v,3}$ must be matched to $d_{v,3}$ and the latter remains unsaturated, a contradiction. In the second case, $d_{v,3}$ must be saturated by matching it to $a_{v,3}$ and then again, the latter remains unsaturated. The case $d_v \in S$ and $c_v \notin S$ is similar. From $d_v\in S$ we conclude that $d_vd_{v,3} \in M_S$. Thus $d_{v,3} \in S$ and hence, $d_{v,3}$ must be matched to either $a_{v,3}$ or $c_{v,3}$. In the first case, $a_{v,3}$ must be matched to $c_{v,3}$ (as $v$ and $a_v$ are not available) and $c_{v,3}$ remains unsaturated. In the second case, $c_{v,3}$ must be matched to $a_{v,3}$ and again, $a_{v,3}$ remains unsaturated, a contradiction.

\medskip
\noindent
Conversely, suppose that $G'$ {is} a nearly $3$-regular subgraph in $G$. We claim that $(\overline{N},v)$ has an empty core. For a contradiction, suppose that $(\overline{N},v)$ has a core allocation~$x$.
Fix any  $v \in V$ and let $S_{ac}:= \{a_v,c_v,a_{v,j},c_{v,j}~| ~j=1, 2,3\}$. As $S_{ac}$ allows a saturating matching $M_{ac}$ of size $|M_{ac}|=9$,
we find that $x(S_{ac}) \ge 9$. Similarly, $x(S_{cd}) \ge 9$ and $x(S_{ad}) \ge 9$ for $S_{cd}$ and $S_{ad}$ defined
analogously. Adding all three inequalities and dividing by $2$ yields $$x(S_v) \ge \frac{27}{2}\; \mbox{for}\; S_v:= S_{ac} \cup S_{cd} \cup S_{ad}.$$
We now recall the maximum $b$-matching $M$ that we displayed, in part, in Figure \ref{fig:weighted}.
As $M$ is maximum and $x$ is a core allocation, it holds that $x(A)=v(A)$ for every component $A$ of $M$, and moreover, $x_r=0$.
The set $S_v \cup \{v\}$ is covered exactly by two components of $M$.
Hence, {$x(S_v \cup \{v\}) = 6+9=15$}. As we also have that $x(S_v)\geq \frac{27}{2}$, this implies that $x_v \le \frac{3}{2}$. As $v$ was chosen arbitrarily, the inequality holds for every vertex of $V$.

Now, let $v_0$ be the unique vertex that has degree $2$ in $G'$, and recall that by definition all other vertices of $G'$ have degree~$3$ in $G'$. Consider the coalition $V(G')\cup \{r\}$.
The edge set $E(G') \cup \{rv_0\}$ matches each vertex
in $S$ up to its capacity, while $x$ assigns only half this value to each vertex in~$S$ and zero to $r$. Hence, $x(S) < v(S)$, contradicting to our assumption that $x$ is in the core. \qed
\end{proof}

\begin{figure}
\begin{minipage}{0.48\linewidth}
\centering
\begin{tikzpicture}[scale=1.3,rotate=0]
\draw
(0, 1) node[circle, black, scale=0.8,draw](a){\small$1$}
(1.5, 2) node[label=above:{\small[$2$]},circle, black, scale=0.8,draw](b){\small$2$}
(1.5, 0) node[circle, black, scale=0.8,draw](c){\small$3$}
(3, 3) node[circle, black, scale=0.8,draw](d){\small$4$}
(3, 1) node[label=right:{\small[$3$]},circle, black, scale=0.8,draw](e){\small$5$}
(3, -1) node[circle, black, scale=0.8,draw](f){\small$6$};
\draw[-] (a) -- node[above] {\small$2$} (b);
\draw[-] (b) -- node[above] {\small$2$} (d);
\draw[-] (d) -- node[right] {\small$3$} (e);
\draw[-] (b) -- node[above] {\small$1$} (e);
\draw[-] (a) -- node[below] {\small$1$} (c);
\draw[-] (c) -- node[above] {\small$3$} (e);
\draw[-] (c) -- node[below] {\small$2$} (f);
\draw[-] (e) -- node[right] {\small$2$} (f);
\end{tikzpicture}
\end{minipage}
\begin{minipage}{0.48\linewidth}
\centering
\begin{tikzpicture}[scale=1.3,rotate=0]
\draw
(0, 1) node[circle, black, scale=0.8,draw](a){\small$1$}
(0.5,2/3) node[fill,circle, black, scale=0.3,draw](p){}
(1,1/3) node[fill,circle, black, scale=0.3,draw](q){}
(0.5,1+1/3 ) node[fill,circle, black, scale=0.3,draw](g){}
(1,1+2/3 ) node[fill,circle, black, scale=0.3,draw](h){}
(1, 2.45) node(b){\small$V_{2}$}
(1.75,2-1/6) node[circle, black, scale=0.6,draw](x){}
(1.25,2+1/6) node[circle, black, scale=0.6,draw](y){}
(2,2-1/3) node[fill,circle, black, scale=0.3,draw](v){}
(2.5,2-2/3) node[fill,circle, black, scale=0.3,draw](w){}
(2,2+1/3) node[fill,circle, black, scale=0.3,draw](i){}
(2.5,2+2/3) node[fill,circle, black, scale=0.3,draw](j){}
(1.5, 0) node[circle, black, scale=0.8,draw](c){\small$3$}
(2,-1/3) node[fill,circle, black, scale=0.3,draw](r){}
(2.5,-2/3) node[fill,circle, black, scale=0.3,draw](s){}
(2,1/3) node[fill,circle, black, scale=0.3,draw](t){}
(2.5,2/3) node[fill,circle, black, scale=0.3,draw](u){}
(3, 3) node[circle, black, scale=0.8,draw](d){\small$4$}
(3,1+4/3) node[fill,circle, black, scale=0.3,draw](l){}
(3,1+2/3) node[fill,circle, black, scale=0.3,draw](m){}
(3.8, 1) node(e){\small$V_{5}$}
(3.25,1+1/6) node[circle, black, scale=0.6,draw](z){}
(3.25,5/6) node[circle, black, scale=0.6,draw](aa){}
(2.75,1) node[circle, black, scale=0.6,draw](bb){}
(3,-1+4/3) node[fill,circle, black, scale=0.3,draw](n){}
(3,-1+2/3) node[fill,circle, black, scale=0.3,draw](o){}
(3, -1) node[circle, black, scale=0.8,draw](f){\small$6$};
\draw[-] (y) to[out=0,in=90] (v);
\draw[-] (a) -- (g)-- (h)-- (y)-- (i)-- (j)-- (d)-- (l)-- (m)-- (bb)-- (n)-- (o)-- (f)-- (s)-- (r)-- (c)-- (q)-- (p)-- (a);
\draw[-] (c) -- (t)-- (u)-- (bb);
\draw[-] (x) -- (v)-- (w)-- (bb);
\draw[-] (h) -- (x)-- (i);
\draw[-] (m) -- (z);
\draw[-] (m) to[out=-30,in=30] (aa);
\draw[-] (w) -- (z);
\draw[-] (w) to[out=-25,in=145] (aa);
\draw[-] (u) to[out=25,in=-145] (z);
\draw[-] (u) -- (aa);
\draw[-] (n) to[out=30,in=-30] (z);
\draw[-] (n) -- (aa);
\draw[dashed] (3.1,1) ellipse (0.5 and 0.35);
\draw[dashed] (1.5,2) ellipse (0.5 and 0.38);
\draw[densely dotted,thick] (0,1) ellipse (0.4 and 0.4) node[below,yshift=-.4cm, xshift=-.2cm]{\scriptsize $V_{1}$};
\draw[densely dotted,rotate around={145:(2.25,-0.5)},thick] (2.25,-0.5) ellipse (0.5 and 0.35) node[ left,xshift=-.2cm,yshift=-0.4cm]{\scriptsize $\{3_6,6_3\}$};
\end{tikzpicture}
\end{minipage}
\caption{{\it Left}: a $b$-matching game $(N,v)$ with six players, where $b\equiv 1$ apart from
$b(2)=2$ and $b(5)=3$, so $b^*=3$.
Note that $v(N)=10$ (take $M=\{12,35,45,56\}$).
{\it Right:} the reduction to the partitioned matching game~$(\overline{N},\overline{v})$. Note that
$|\overline{N}|=14$ and $c=b^*$ (example taken from~\cite{BBJPX}.)}\label{f-second}
\end{figure}
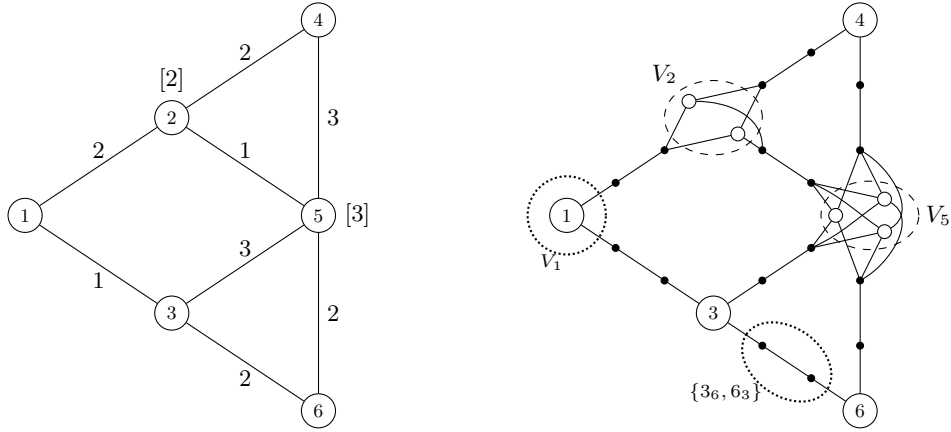

\noindent
We will now prove Theorem~\ref{t-struc1}. In order to do this we first explain our reduction. Let $(N,v)$ be a $b$-matching game defined on a graph $G=(V,E)$ with a positive capacity function~$b$ and a positive edge weighting~$w$.
{We assume that $b^* \ge 2$, as otherwise we obtain a matching game.}
We construct a graph $\overline{G}=(\overline{V},\overline{E})$ with a positive edge weighting~$\overline{w}$ and partition~${\cal V}$ of $\overline{V}$ by applying the construction of Tutte \cite{Tu54}:

\begin{itemize}
\item Replace each vertex $i \in V$ with capacity $b(i)$ by a set $V_i$ of $b(i)$ vertices $i^1,\ldots,i^{b(i)}$.
\item Replace  each edge $ij \in E$ by an edge $i_jj_i$ where $i_j$ and $j_i$ are new vertices, such that $i_j$ is adjacent to $i^1,\ldots,i^{b(i)}$, while $j_i$ is adjacent to  $j^1,\ldots,j^{b(j)}$. Let $E_{ij}=\{i_j,j_i\}$.
\item Let $\overline{V}$ consist of all the vertices in the sets $V_i$ and the sets $E_{ij}$, and let $\overline{E}$ consist of all the edges defined above.
\item Give every edge $i_jj_i$, $i_ji^h$ $(h\in \{1,\ldots,b(i)\})$, $j_ij^h$ $(h\in \{1,\ldots,b(j)\})$ weight $w(ij)$ to obtain the weighting $\overline{w}$.
\item  Let ${\cal V}$ consist of all the sets $V_i$ and $E_{ij}$.
\end{itemize}

\noindent
We denote the partitioned matching game defined on $(\overline{G},{\overline{w}})$ with partition~${\cal V}$ by $(\overline{N},\overline{v})$. See Figure~\ref{f-second} for an example.
We let $b^*$ be the maximum $b(i)$-value for the vertex capacity function~$b$.
Note that, by construction {and our assumption that $b^*\ge 2$}, we have that $c=b^*$ and that uniformity is preserved.
We use the above construction to prove Theorem~\ref{t-struc1}, which we restate below.

\medskip
\noindent
{\bf Theorem~\ref{t-struc1} (restated).}
{\it \emph{P1}--\emph{P3} can be reduced in polynomial time from $b$-matching games to partitioned matching games of width $c=b^*$, preserving uniformity.}

\begin{proof}
{Recall that we assume $b^* \ge 2$, as for $b^*=1$ both problems are identical.}
 Let $(N,v)$ be a $b$-matching game defined on a graph $G=(V,E)$ with a positive capacity function~$b$ and a positive edge weighting~$w$. We construct in polynomial time the pair $(\overline{G},\overline{w})$ and partition~${\cal V}$ to obtain the partitioned matching game $(\overline{N},\overline{v})$. Recall that the players of $\overline{N}$ are of the form $V_i$ and $E_{ij}$, and they are
 in $1-1$ correspondence with $V=N$ and $E$, respectively, so below we sometimes will identify the players of $\overline{N}$ with $V \cup E$.

The idea is that any $b$-matching $M$ in $G$ can be {\it represented} by a corresponding matching $\overline{M}\subseteq \overline{E}$ in $\overline{G}$ as follows.
For each edge $ij$ in $G$ we do as follows.
If $ij\in M$, then we match
$i_j$ to some copy of $i$ in $\overline{G}$ and, similarly, $j_i$ to some copy of $j$; note that, by definition, enough copies of $i$ and $j$ are
available. If  $ij\notin M$, then we match $i_j$ and $j_i$ to each other
in $\overline{G}$. We refer to the resulting matching $\overline{M}$  in $\overline{G}$ as a \emph{transform} of $M$ (different transforms differ by the choice of copies of vertex $i$ that are ``matched'' to $j$).
Note that $\overline{M}$ has size $|E|+|M|$ and weight $w(E)+w(M)$.

\medskip
\noindent
We first reduce P1. Let $x$ be an allocation of $(N,v)$. We define a vector $\overline{x}$ on $\overline{N}$ by setting $\overline{x}(V_i) := x_i$ for every $i \in N$ and $\overline{x}(E_{ij}) = w(ij)$ for every $ij\in E$. We claim that $x \in \mathit{core}(N,v)$ if and only if $\overline{x}\in \mathit{core}(\overline{N},\overline{v})$.

First suppose that $x \in \mathit{core}(N,v)$. Assume that $M$ is a maximum weight $b$-matching in $G$ (so $v(N)=w(M)$).
The transform~$\overline{M}$ of $M$ is a maximum weight matching in $\overline{G}$, so $$\overline{v}(\overline{N})=
\overline{w}(\overline{M})= w(E)+w(M)=\overline{x}(\overline{N}).$$
Thus $\overline{x}$ is an allocation.
{To prove the core constraints, suppose for a contradiction that there exists a blocking coalition $\overline{S^*} \subseteq \overline{N}$, i.e.,  $\overline{v}(\overline{S^*})>\overline{x}(\overline{S^*})$. For a coalition $S\subseteq N$, let the \textit{projection} of $S$ be defined as $\overline{S}=\bigcup\{V_i\; |\; i\in S\} \cup \bigcup \{E_{ij}\; |\; i,j \in S\}$ in $\overline{N}$. First we show that if $\overline{S^*}$ is not a projection of any coalition $S\subseteq N$ then we can modify $\overline{S^*}$ by removing and adding players of form $E_{ij}$ to obtain another blocking coalition $\overline{S'}\subseteq \overline{N}$ that is a projection of a coalition $S'\subseteq N$.}

{There are two cases to consider. First, if $V_i,V_j\in \overline{S^*}$, but $E_{ij}\notin \overline{S^*}$, then for $\overline{S'}=\overline{S^*}\cup E_{ij}$ we get $\overline{v}(\overline{S'})\geq \overline{v}(\overline{S^*})+w(ij)$, whilst $\overline{x}(\overline{S'})= \overline{x}(\overline{S^*})+w(ij)$, so $\overline{S'}$ is also blocking. In the second case, if $E_{ij}\in \overline{S^*}$, but one of $V_i$, $V_j$, say $V_{i}\notin \overline{S^*}$, then for $\overline{S'}=\overline{S^*}\setminus E_{ij}$ we have $\overline{v}(\overline{S'})= \overline{v}(\overline{S^*})-w(ij)$ and $\overline{x}(\overline{S'})= \overline{x}(\overline{S^*})-w(ij)$, so $\overline{S'}$ is also blocking. Therefore, if a coalition $\overline{S^*}$ is blocking, then it will remain blocking when making the above operations until we obtain a blocking coalition $\overline{S}$ that is a projection of a coalition $S\subseteq N$.}

{Finally, consider a blocking coalition $\overline{S} \subseteq \overline{N}$ that is a projection of $S\subseteq N$, and let $V_S$ and $\overline{V}_{\overline{S}}$ denote the corresponding vertex sets in $G$ and $\overline{G}$, respectively. By definition, $v(S)$ is the maximum weight of a $b$-matching $M$ in subgraph $G_S=G[V_S]$ and $\overline{v}(\overline{S})$ is the maximum weight of a matching $\overline{M}$ in $\overline{G}[\overline{V}_{\overline{S}}]$, where $\overline{M}$ is a transform of $M$. Hence, $$w(E(G_S))+x(V_S)=\overline{x}(\overline{S})< \overline{v}(\overline{S})= w(E(G_S))+w(M)=w(E(G_S))+v(S),$$ which implies $x(S)<v(S)$, a contradiction.}

Now suppose that $\overline{x}\in \mathit{core}(\overline{N},\overline{v})$. Then we can use the same arguments in reverse order to prove that $x \in \mathit{core}(N,v)$.
{Finally, if $\overline{x}\notin \mathit{core}(\overline{N},\overline{v})$ and one can find a blocking coalition $\overline{S}$ in $\overline{N}$, then by the above argument we can also find a blocking coalition $S$ for $x$ in $(N,v)$, completing the proof for P1.}

{The reduction of P3 is implied directly by the above argument for P1. The transformation of a core allocation $\overline{x}$ for $(\overline{N},\overline{v})$ into a core allocation $x$ for $(N,v)$ takes polynomial time. Hence, if we can find a core allocation for $(\overline{N},\overline{v})$ in polynomial time, then we can also find a core allocation for $(N,v)$ in polynomial time.}

\medskip
\noindent
To reduce P2, it suffices to show that $\mathit{core}(N,v)$ is non-empty if and only if $\mathit{core}(\overline{N},\overline{v})$ is non-empty. Above we already showed that if $\mathit{core}(N,v)$ is non-empty, then $\mathit{core}(\overline{N},\overline{v})$ is non-empty.
Hence, assume $\mathit{core}(N,v)= \emptyset$.
By the Bondareva-Shapley Theorem~\cite{Bo63,Sh67}, {$\mathit{core}(N,v)= \emptyset$ if and only if} there is a function $\lambda: 2^N\setminus \{\emptyset\}\ \to [0,1]$ with for every $i\in N$, $\sum_{S\ni i} \lambda(S)=1$ such that
$\sum_{S\neq \emptyset}\lambda(S)v(S)>v(N)$.
For every non-empty coalition $S\subseteq N$, we define the set
$$\overline{S}:= \bigcup \{V_i ~| ~i \in S\} \cup \bigcup \{E_{ij} ~| ~i,j \in S\}.$$
We now define a function $\overline{\lambda}:2^{\overline{N}}\setminus \{\emptyset\} \to [0,1]$ as follows.
We set $\overline{\lambda}(\overline{S}):=\lambda(S)$ for every $S\subseteq N$. For every $E_{ij}\in \overline{N}$, we set $$\overline{\lambda}(E_{ij}):= 1- \sum_{\overline{S}: E_{ij} \subseteq \overline{S}} \overline{\lambda}(\overline{S})=1-\sum_{S:ij\in E(S)}\lambda(S).$$
Finally, set $\overline{\lambda}(S^*):=0$ for all other $S^*\subseteq \overline{N}\setminus \{\emptyset\}$.
By construction, we have $\sum_{S^*\subseteq \overline{N}, V_i\in S^*}\overline{\lambda}(S^*)=1$ for every $V_i\in \overline{N}$ and $\sum_{S^*\subseteq \overline{N}, E_{ij}\in S^*}\overline{\lambda}(S^*)=1$ for every $E_{ij}\in \overline{N}$.
Now, in order to show that $\mathit{core}(\overline{N},\overline{v})$ is empty, we must prove that $$\sum_{\emptyset\neq S^*\subseteq \overline{N}}\overline{\lambda}(S^*)\overline{v}(S^*)>\overline{v}(\overline{N}).$$
Let $M_S$ denote a maximum weight $b$-matching on $G[S]$, and let $\overline{M_S}$ be a transform of $M_S$, which is a maximum weight matching in $\overline{G}[\overline{S}]$ with weight $\overline{w}(\overline{M_S})=w(M_S)+w(E(G[S]))$. So, we have $$\overline{v}(\overline{S})=\overline{w}(\overline{M_S})=w(M_S)+w(E(G[S]))=v(S)+w(E(G[S])).$$ Hence, it follows that
$$\begin{array}{lcl}
&&\sum_{\emptyset\neq S^*\subseteq \overline{N}}\overline{\lambda}(S^*)\overline{v}(S^*)\\[8pt] &= &\sum_{\emptyset\neq S\subseteq N}\overline{\lambda}(\overline{S})\overline{v}(\overline{S})+ \sum_{ij\in E(G)}\overline{\lambda}(E_{ij})\overline{v}(E_{ij})\\[8pt]
&= &\sum_{\emptyset\neq S\subseteq N}\lambda(S)(v(S)+w(E[S]))+\sum_{ij\in E(G)}\overline{\lambda}(E_{ij})w(ij)\\[8pt]
&= &\sum_{\emptyset\neq S\subseteq N}\lambda(S)v(S)+ \sum_{ij\in E(G)}\sum_{S:ij\in E(S)}\lambda(S)w(ij)+\sum_{ij\in E(G)}\overline{\lambda}(E_{ij})w(ij)\\[8pt]
&= &\sum_{\emptyset\neq S\subseteq N}\lambda(S)v(S)+ \sum_{ij\in E(G)}\left[\sum_{S:ij\in E(S)}\lambda(S)+\overline{\lambda}(E_{ij})\right]w(ij)\\[8pt]
& =&\sum_{\emptyset\neq S\subseteq N}\lambda(S)v(S)+w(E)\\[8pt]
& > &v(N)+ w(E)\\[8pt]
&=&\overline{v}(\overline{N}).
\end{array}$$
{By the Bondareva-Shapley Theorem, this implies that $\mathit{core}(\overline{N},\overline{v})= \emptyset$.}
This completes the proof of Theorem~\ref{t-struc1}. \qed
\end{proof}

\noindent
As the final result in this section, we will prove Theorem~\ref{t-struc2}. Again, we first explain the reduction.
Let $(N,v)$ be a partitioned matching game of width~$c$ defined on a graph $G=(V,E)$ with a positive edge weighting $w$ and with a partition
$(V_1, \dots,V_n)$ of $V$. We assume that $c \ge 2$, as otherwise we obtain a matching game. Construct a graph $\overline{G}=(\overline{N},\overline{E})$  with a positive vertex capacity function~$b$ and a positive edge weighting~$\overline{w}$
as follows.

\begin{itemize}
\item Put the vertices of $V$ into ${\overline N}$ and the edges of $E$ into $\overline{E}$.
\item For each $V_i$, add a new vertex $r_i$ to $\overline{N}$ that is adjacent to all vertices of $V_i$ and to no other vertices in $\overline{G}$.
\item Let $\overline{w}$ be the extension of $w$ to $\overline{E}$ by giving each new edge
weight $2v(N)$.
\item In $\overline{V}$, give every $u\in V$ capacity $b(u)=2$ and every $r_i$ capacity $b(r_i)=|V_i|$.
\end{itemize}

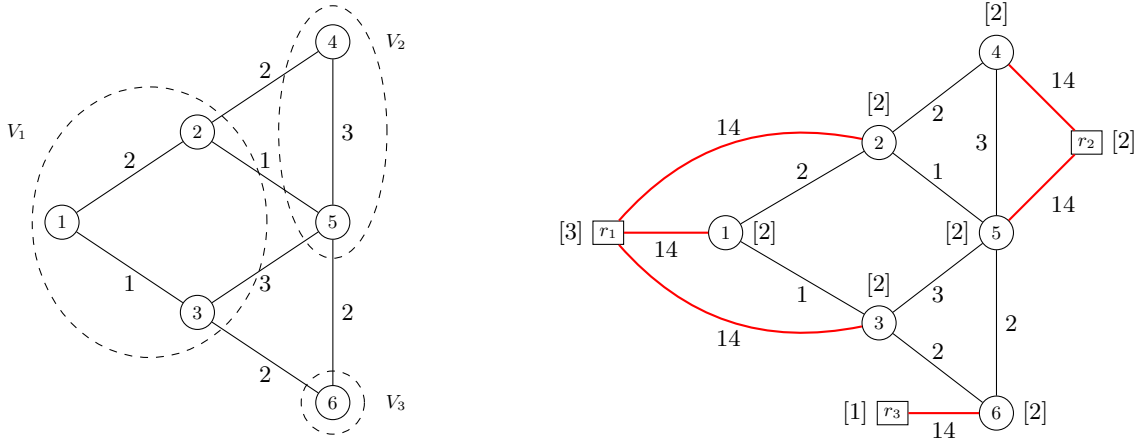
\begin{figure}
\begin{minipage}{0.48\linewidth}
\begin{tikzpicture}[scale=1.2,rotate=0]
\draw
(0, 1) node[circle, black, scale=0.8,draw](a){\small$1$}
(1.5, 2) node[circle, black,scale=0.8, draw](b){\small$2$}
(1.5, 0) node[circle, black, scale=0.8,draw](c){\small$3$}
(3, 3) node[circle, black, scale=0.8,draw](d){\small$4$}
(3, 1) node[circle, black, scale=0.8,draw](e){\small$5$}
(3, -1) node[circle, black, scale=0.8,draw](f){\small$6$}
(-0.5, 2) node[scale=0.8](h){\small$V_{1}$}
(3.7, 3) node[scale=0.8](i){\small$V_{2}$}
(3.7, -1) node[scale=0.8](j){\small$V_{3}$};
\draw[-] (a) -- node[above] {\small$2$} (b);
\draw[-] (b) -- node[above] {\small$2$} (d);
\draw[-] (d) -- node[right] {\small$3$} (e);
\draw[-] (b) -- node[above] {\small$1$} (e);
\draw[-] (a) -- node[below] {\small$1$} (c);
\draw[-] (c) -- node[below] {\small$3$} (e);
\draw[-] (c) -- node[below] {\small$2$} (f);
\draw[-] (e) -- node[right] {\small$2$} (f);
\draw[dashed] (0.97,1) ellipse (1.3 and 1.5);
\draw[dashed] (3,2) ellipse (0.6 and 1.4);
\draw[dashed] (3,-1) ellipse (0.35 and 0.35);
\end{tikzpicture}
\end{minipage}
\begin{minipage}{0.48\linewidth}
\centering
\begin{tikzpicture}[scale=1.2,rotate=0]
\draw
(-1.3, 1) node[label=left:{\small[$3$]},rectangle,black, scale=0.8,draw](g){\small$r_{1}$}
(0, 1) node[label=right:{\small[$2$]},circle, black, scale=0.8,draw](a){\small$1$}
(1.7, 2) node[label=above:{\small[$2$]},circle, black,scale=0.8,draw](b){\small$2$}
(1.7, 0) node[label=above:{\small[$2$]},circle, black, scale=0.8,draw](c){\small$3$}
(3, 3) node[label=above:{\small[$2$]},circle, black, scale=0.8,draw](d){\small$4$}
(3, 1) node[label=left:{\small[$2$]},circle, black, scale=0.8,draw](e){\small$5$}
(4, 2) node[label=right:{\small[$2$]},rectangle,black, scale=0.8,draw](h){\small$r_{2}$}
(1.85, -1) node[label=left:{\small[$1$]},rectangle,black, scale=0.8,draw](i){\small$r_{3}$}
(3, -1) node[label=right:{\small[$2$]},circle, black, scale=0.8,draw](f){\small$6$};
\draw[-] (a) -- node[above] {\small $2$} (b);
\draw[-] (b) -- node[below] {\small$2$} (d);
\draw[-] (d) -- node[left] {\small$3$} (e);
\draw[-] (b) -- node[above] {\small$1$} (e);
\draw[-] (a) -- node[below] {\small$1$} (c);
\draw[-] (c) -- node[below] {\small$3$} (e);
\draw[-] (c) -- node[above] {\small$2$} (f);
\draw[-] (e) -- node[right] {\small$2$} (f);
\path[red][thick] (g) [bend left]edge node[above,black]{\small$14$} (b);
\draw[red][thick] (g) -- node[below,black] {\small$14$} (a);
\path[red][thick] (g) [bend right] edge node[below,black] {\small$14$} (c);
\draw[red][thick] (d) -- node[above right,black] {\small$14$} (h);
\draw[red][thick] (e) -- node[below right,black] {\small$14$} (h);
\draw[red][thick] (i) -- node[below,black] {\small$14$} (f);
\end{tikzpicture}
\end{minipage}
\caption{{\it Left:} a partitioned matching game $(N,v)$ with three players and width $c=3$. Note that $v(N)=7$.
{\it Right:} the reduction to the $b$-matching game $(\overline{N},\overline{v})$. Note that $|\overline{N}|=9$ and that for every $i\in \overline{N}$, $b(i)\leq c$ (example taken from~\cite{BBJPX}).}\label{f-third}
\end{figure}

\noindent
We denote the $b$-matching game defined on $(\overline{G},b,\overline{w})$ by $(\overline{N},\overline{v})$. See Figure~\ref{f-third} for an example and note that, by construction, we have $b\leq c$.
We are now ready to prove Theorem~\ref{t-struc2}, which we restate below.

\medskip
\noindent
{\bf Theorem~\ref{t-struc2} (restated).}
{\it  \emph{P1}--\emph{P3} can be reduced in polynomial time from partitioned matching games of width~$c$ to $b$-matching games for some capacity function $b$ with $b^*\leq c$.}

\begin{proof}
Recall that we assume $c \ge 2$, as for $c=1$ both problems are identical.
 Let $(N,v)$ be a generalized matching game defined by a graph $G=(V,E)$ with edge weights $w$ and partition $\mathcal{V}=(V_1, \dots,V_n)$.
 We construct in polynomial time the triple $(\overline{G},b,\overline{w})$  to obtain the $b$-matching game $(\overline{N},\overline{v})$.

\medskip
\noindent
We first reduce P1. Let $x$ be an allocation of $(N,v)$. We define a vector $\overline{x}$ on $\overline{N}$ by setting for
every $i\in N$, $\overline{x} :\equiv \frac{x_i}{|V_i|} +v(N)$ on $V_i$ and $\overline{x}_{r_i} := v(N)\cdot |V_i|$.
We claim that $x \in \mathit{core}(N,v)$ if and only if $\overline{x}\in \mathit{core}(\overline{N},\overline{v})$.

First suppose that $x \in \mathit{core}(N,v)$. By construction, $\overline{x}(\overline{N})=v(N)+2v(N)\cdot|V|=\overline{v}(\overline{N})$.
 To check the core constraints, consider a coalition $\overline{S} \subseteq \overline{N}$. Let $S := \{i~|~\overline{S}\cap V_i \neq \emptyset\}$. A maximum weight $b$-matching in $\overline{G}[\overline{S}]$ is obtained by matching each root node $r_i \in \overline{S}$ to all its neighbors in $\overline{S}$ and matching
 the nodes in $\overline{S} \cap V$ to each other in the best possible way. Thus
 $$\overline{v}(\overline{S})\le v(S) + \displaystyle\sum_{i: r_i \in \overline{S}} 2v(N)|\overline{S} \cap V_i|,$$
while
$$\displaystyle\overline{x}(\overline{S}) = {\sum_{i \in S} |\overline{S} \cap V_i| \left( \frac{x_i}{|V_i|} + v(N)\right)} + \sum_{i: r_i \in \overline{S}} v(N)|V_i|.$$
{For a coalition $S\subseteq N$, let $\overline{S}= \bigcup_{i \in S} V_i \cup \{r_i\}$ denote the projection of $S$ in $\overline{N}$.
We show that if the core constraint $\overline{x}(\overline{S^*}) \ge \overline{v}(\overline{S^*})$ is violated for a coalition $\overline{S^*}\subset \overline{N}$, then it would also be violated for a coalition $\overline{S}\subseteq \overline{N}$, where $\overline{S}$ is the projection of $S\subseteq N$. There are two cases to consider. First, suppose that for some $j\in V_i$, $j\in \overline{S^*}$ but $r_i\notin \overline{S^*}$. Then, for $\overline{S'}=\overline{S^*}\setminus \{j\}$, we have $x(\overline{S'})< x(\overline{S^*})-v(N)$, whilst $v(\overline{S'})\geq v(\overline{S^*})-v(N)$. So, removing $j$ from $\overline{S^*}$ would preserve the violation. In the second case, suppose that $r_i\in \overline{S^*}$, but $j\notin \overline{S^*}$ for some $j\in V_i$. Then, for $\overline{S'}=\overline{S^*}\cup \{j\}$, we have $x(\overline{S'})< x(\overline{S^*})+2v(N)$, whilst $v(\overline{S'})\geq v(\overline{S^*})+2v(N)$. So, adding $j$ to $\overline{S^*}$ would preserve the violation. Therefore, if a coalition $\overline{S^*}$ is blocking, then it will remain blocking when making the above operations until we obtain a blocking coalition $\overline{S}= \bigcup_{i \in S} V_i \cup \{r_i\}$.}
In the case of $\overline{S}=\bigcup_{i \in S} V_i \cup \{r_i\}$, however, $\overline{v}(\overline{S}) =v(S) +\sum_{i \in S} 2v(N)|V_i|$ and $\overline{x}(\overline{S})=x(S)+\sum_{i \in S} 2v(N)|V_i|$, so that the core constraint follows from the fact that $x \in \mathit{core}(N,v)$, and thus $x(S) \ge v(S)$ holds.

Now suppose that $\overline{x}\in \mathit{core}(\overline{N},\overline{v})$. Then we can use the same arguments in reverse order to prove that $x \in \mathit{core}(N,v)$, as follows.
For every $S\subseteq N$, let $\overline{S}=
\bigcup_{i \in S} V_i \cup \{r_i\}$. Then
\[\begin{array}{lcl}
v(S) +\sum_{i \in S} 2v(N)|V_i| &= &\overline{v}(\overline{S})\\[3pt]
& \leq &\overline{x}(\overline{S})\\[3pt]
&=&x(S)+\sum_{i \in S} 2v(N)|V_i|,
\end{array}\]
which implies that $v(S)\geq x(S)$ for every $S\subseteq N$, thus $x \in \mathit{core}(N,v)$.
{Finally, if $\overline{x}\notin \mathit{core}(\overline{N},\overline{v})$ and one can find a blocking coalition $\overline{S}$ in $\overline{N}$, then by the above argument we can also find a blocking coalition $S$ for $x$ in $(N,v)$, completing the proof for P1.}

{The reduction of P3 is implied directly by the above argument for P1. The transformation of a core allocation $\overline{x}$ for $(\overline{N},\overline{v})$ into a core allocation $x$ for $(N,v)$ takes polynomial time. Hence, if we can find a core allocation for $(\overline{N},\overline{v})$ in polynomial time, then we can also find a core allocation for $(N,v)$ in polynomial time.}

\medskip
\noindent
To reduce P2, it suffices to show that $\mathit{core}(N,v)$ is non-empty if and only if $\mathit{core}(\overline{N},\overline{v})$ is non-empty. Above we already showed that if $\mathit{core}(N,v)$ is non-empty, then $\mathit{core}(\overline{N},\overline{v})$ is non-empty.
Hence, assume $\mathit{core}(N,v)= \emptyset$. Then, by the Bondareva-Shapley Theorem~\cite{Bo63,Sh67}, there is a function $\lambda: 2^N\setminus \{\emptyset\}\ \to [0,1]$ with for every $i\in N$, $\sum_{S\ni i} \lambda(S)=1$ such that $\sum_{S\neq \emptyset}\lambda(S)v(S)>v(N)$.
For every $S\subseteq N$, we define again $\overline{S}=
\bigcup_{i \in S} V_i \cup \{r_i\}$. Note that $\overline{v}(\overline{S})=v(S) +\sum_{i \in S} 2v(N)|V_i|$. We define $\overline{\lambda}(\overline{S})=\lambda(S)$ for every $S\subseteq N$ and let $\overline{\lambda}(S^*)=0$ for every other $S^*\subset \overline{N}$. First note that $\sum_{S^*\ni i} \overline{\lambda}(S^*)=1$ for every $i\in \overline{N}$. Furthermore, it holds that
\[\begin{array}{lll}
\sum_{S^*\subseteq \overline{N}}\overline{\lambda}(S^*)\overline{v}(S^*) &= &\sum_{{\emptyset \neq }S\subseteq N}\overline{\lambda}(\overline{S})\overline{v}(\overline{S})\\[8pt]
&= &\sum_{{\emptyset \neq }S\subseteq N}\lambda(S)\left(v(S) +\sum_{i \in S} 2v(N)|V_i|\right)\\[8pt]
&=&\sum_{{\emptyset \neq }S\subseteq N}\lambda(S)v(S) +\sum_{i \in N}\left(\sum_{S \ni i}\lambda(S)2v(N)|V_i|\right)\\[8pt]
&=&\sum_{{\emptyset \neq }S\subseteq N}\lambda(S)v(S) +\sum_{i \in N}2v(N)|V_i|\\[8pt]
&= &\sum_{{\emptyset \neq }S\subseteq N}\lambda(S)v(S) + 2v(N)\cdot|V|\\[8pt]
&> &v(N)+2v(N)\cdot|V|\\[8pt]
&=&\overline{v}(\overline{N}).
\end{array}\]
Therefore,
{by the Bondareva-Shapley Theorem, we have that}
$\mathit{core}(\overline{N},\overline{v})=\emptyset$. This completes the proof of Theorem~\ref{t-struc2}.
\qed
\end{proof}

\section{The Proofs of Theorems~\ref{t-easy1} and~\ref{t-easy2}}\label{s-results2}

In this section we give the proofs of Theorems~\ref{t-easy1} and~\ref{t-easy2}. To prove Theorem~\ref{t-easy1} we need an additional result as a lemma;
a similar construction was used by Plesnik~\cite{Pl99} to solve a constrained matching problem. Note that in {Lemma~\ref{l-interval}} we allow arbitrary edge weightings and thus ${\cal M}$ is the set of maximum weight matchings. Hence, {Lemma~\ref{l-interval}} is a slightly more general result than we strictly need.
{Note also that each interval $I_i$ in the statement of Lemma~\ref{l-interval} may either be open or closed. We need to allow both options in order to apply Lemma~\ref{l-interval} as a subroutine in our algorithm {\tt Lex-Min}, which we will present immediately after proving Lemma~\ref{l-interval}.}

\begin{lemma}\label{l-interval}
Given a partitioned matching game $(N,v)$ on a graph $G=(V,E)$ with a positive edge weighting $w$, partition ${\cal V}$ of $V$,
and intervals $I_1, \dots, I_n$, it is possible in $O(|V|^3)$-time to decide
if there exists a matching $M\in {\mathcal M}$ with $s_p(M)\in I_p$ for $p=1,\ldots,n$, and to find such a matching (if it exists).
\end{lemma}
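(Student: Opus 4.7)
The plan is to reduce the feasibility problem to a single maximum weight matching computation in an auxiliary graph $G^{*}$ of size $O(|V|)$, in the spirit of Plesnik's~\cite{Pl99} construction for constrained matchings. First, I would compute $v(N)$ using Edmonds' algorithm on $G$. Then I would construct $G^{*}$ from $G$ by augmenting it, for each part $V_p$ with interval $I_p=[l_p,u_p]$, with a gadget consisting of $|V_p|-l_p$ ``slot'' vertices (each adjacent to every $v\in V_p$ and to every slack in the same gadget) together with $u_p-l_p$ ``slack'' vertices (each adjacent to every slot in the same gadget). All gadget edges would receive weights drawn from a hierarchy of tiers $W_1\gg W_2\gg w(E)$, calibrated so that the top tier forces every slot to be saturated in any maximum weight matching $M^{*}$ of $G^{*}$, the middle tier gives a bonus per matched $V$-endpoint, and the bottom tier preserves the original weights $w(e)$ on the edges of $E$.

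Next, I would analyse the structure of maximum weight matchings of $G^{*}$. The top-tier weights force every slot to be matched, either to a vertex of $V_p$ (``consuming'' it) or to a slack; a counting argument using the sizes $|V_p|-l_p$ and $u_p-l_p$ then shows that the number of consumed $V_p$-vertices lies in $[|V_p|-u_p,|V_p|-l_p]$. The middle-tier bonus would maximize the number of remaining $V_p$-vertices covered by original edges, ensuring that every remaining $V_p$-vertex is actually matched by an original edge whenever possible. Combined with the bottom-tier maximization of $w(M^{*}\cap E)$, I would show that $G$ admits some $M\in\mathcal{M}$ with $s_p(M)\in I_p$ for all $p$ if and only if $w(M^{*}\cap E)=v(N)$; in that case $M:=M^{*}\cap E$ is the desired matching. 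The forward direction is by explicit construction (embedding a feasible $M$ into a matching of $G^{*}$ that attains the correct value at every tier), and the reverse direction follows from the above structural analysis.

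Since $|V(G^{*})|=O(|V|)$, applying Edmonds' algorithm to $G^{*}$ yields the claimed $O(|V|^{3})$ running time, and the algorithm simultaneously produces the witnessing matching when one exists.

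The main obstacle is the careful gadget design: encoding both the lower and the upper bound on $s_p(M)$ using only (non-capacitated) matching machinery requires a delicate interplay between the slot vertices (which enforce the upper bound via consumption), the slack vertices (which absorb excess slot edges so that the consumption count lies in the correct window), and the middle-tier bonus (which enforces the lower bound by forcing all ``free'' $V_p$-vertices to be covered by original edges). The calibration of the three weight tiers must be such that each tier strictly dominates the next over any matching in $G^{*}$, yet the bottom tier still faithfully reports the original weight $w(M^{*}\cap E)$; I expect verifying this priority hierarchy, together with the equivalence between $w(M^{*}\cap E)=v(N)$ and the existence of a feasible $M\in\mathcal{M}$, to be the most technical step of the proof.
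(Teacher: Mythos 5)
Your overall strategy---augment $G$ with $O(|V|)$ dummy vertices attached to each part $V_p$ so that the interval constraints are absorbed by gadget edges, then run Edmonds' algorithm once on the auxiliary graph---is exactly the spirit of the paper's proof, and your counting argument for the upper bound $s_p(M)\le u_p$ (once all slots are saturated, at least $|V_p|-u_p$ of them must consume $V_p$-vertices) is sound. However, the acceptance criterion you state, namely that a feasible $M\in\mathcal{M}$ exists if and only if $w(M^*\cap E)=v(N)$, is false, and the failure is precisely at the lower bounds. Your middle tier only \emph{maximizes} the number of covered $V$-vertices; it does not \emph{certify} that all of them are covered. Concretely, take $V_1=\{a\}$ with $a$ isolated in $G$ and $I_1=[1,1]$, and $V_2=\{b,c\}$ with a single edge $bc$ of weight $1$ and $I_2=[0,2]$. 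Gadget $1$ is empty ($|V_1|-l_1=0$ slots, $u_1-l_1=0$ slacks), so no tier can ever cover $a$; the lexicographic optimum $M^*$ saturates the two slots of gadget $2$, covers $b$ and $c$ via the edge $bc$, and satisfies $w(M^*\cap E)=1=v(N)$, yet $s_1(M^*\cap E)=0\notin I_1$ and no feasible matching exists. Checking only the bottom-tier value thus yields a false positive.

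The fix is to make the lower bound a hard constraint rather than a soft objective. Either (i) compare the \emph{entire} optimal value of $G^*$ against the theoretical maximum of every tier (all slots saturated, all of $V$ covered, and original weight $v(N)$), or equivalently test $s_p(M^*\cap E)\in I_p$ explicitly at the end --- with your tiers one can show that if any feasible matching exists then $M^*\cap E$ is feasible, so this test is complete; or (ii) do what the paper does, which avoids weight tiers altogether: attach to each $V_p$ a set $B_p$ of $|V_p|-b_p$ dummies joined to $V_p$, and a set $A_p$ of $b_p-a_p$ dummies joined to $V_p$ and to all other $A$-vertices, give all new edges weight $0$, and ask whether the maximum weight of a \emph{perfect} matching of the auxiliary graph equals $w^*$. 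Perfection forces every $V_p$-vertex to be matched while only $|V_p|-a_p$ dummies are available, which enforces $s_p\ge a_p$ as a hard constraint, and the $B_p$-vertices (adjacent only to $V_p$) enforce $s_p\le b_p$; no calibration of $W_1\gg W_2\gg w(E)$ is needed and the bit-size of the weights does not blow up.
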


\begin{proof}
First assume that the intervals $I_1,\ldots, I_n$ are closed.
For $p=1,\ldots,n$, we let $I_p = [a_p,b_p]$, where we assume without loss of generality that $b_p\le |V_p|$. We extend $(G,w)$ to a weighted graph $(\overline{G},\overline{w})$ in linear time as follows.  For $p=1, \dots ,n$, we add a set $B_p$ of $|V_p|-b_p$ new vertices,
 each of them joined to all vertices of $V_p$ by edges of weight $\overline{w}_e=0$. We also introduce a set $A_p$ of $b_p-a_p$ new vertices that are completely joined to all vertices of $V_p$ by edges of weight $\overline{w}_e=0$. In addition, all vertices in $\bigcup_p A_p$ are joined to each other by edges of weight $\overline{w}_e=0$. The original edges $e \in E$ in $G$ keep their (original) weights, that is, $\overline{w}_e=w_e$. If the total number of vertices is odd, we add an additional vertex $\overline{v}$ and join it by zero weight edges to all vertices of $\bigcup_p A_p$. This completes the description of $(\overline{G},\overline{w})$. Note that $|V(\overline{G})|\leq 2|V|$.

{First, we prove that there is a correspondence between the
matchings in $G$
satisfying the interval conditions  (which may not necessarily belong to ${\cal M}$) and the perfect matchings in $\overline{G}$.}

{In the one direction, suppose there exists a matching $M$ in $G$}
with $s_p(M)\in [a_p,b_p]$ for $p=1,\ldots,n$. As $s_p(M) \le b_p$, we can match all vertices of $B_p$ to $V_p$ by all zero weight edges. Finally, since $s_p(M) \ge a_p$, we can match the $b_p-s_p(M)\leq b_p-a_p$ remaining vertices in $V_p$ to vertices from  $A_p$. Thus all vertices of $V_p$ will be matched. If after doing this for every $p\in \{1,\ldots,n\}$, there still exist unmatched vertices in $\bigcup_p A_p$, then we match these to each other and, should their number be odd, to the extra vertex~$\overline{v}$. This yields a perfect matching $\overline{M}$ in $\overline{G}$.

In the other direction, let $\overline{M}$ be a perfect matching in $\overline{G}$.
Let $M := \overline{M} \cap E$ denote the corresponding matching in $G$. As $\overline{M}$ matches all vertices of $B_p$ into $V_p$, we know that $M$ leaves at least $|V_p| -b_p$ vertices unmatched. Hence, $s_p(M) \le b_p$ as required. Similarly, since all vertices of $V_p$ are matched by $\overline{M}$ and at most $|V_p|-b_p+b_p-a_p =|V_p|-a_p$ vertices in $V_p$ can be matched to $B_p \cup A_p$, we find that $M$ matches at least $a_p$ vertices in $V_p$. Hence, $s_p(M) \ge a_p$, as required.

Now, let $w^*$ be the maximum weight of a matching in $G$. Let $\overline{w}^*$ denote the maximum weight of a \emph{perfect} matching in $\overline{G}$, {if a perfect matching exists in $\overline{G}$}. Note that we can compute $w^*$ and~$\overline{w}^*$, and corresponding matchings, in $O(|V|^3)$ time~\cite{Ed65}.
{Note also that $w^*\geq \overline{w}^*$ holds, as all the new edges in $\overline{G}$ have weight $0$.}

{There are three possible cases, which we consider separately:}

\medskip
\noindent
{{\bf Case 1.} A perfect matching does not exist in $\overline{G}$.\\ Then, due to the correspondence we showed between the
matchings in $G$ satisfying the interval conditions and the perfect matchings in $\overline{G}$, there exists no
matching in $G$, and thus no maximum weight matching $M\in {\cal M}$, satisfying the interval conditions.}

\medskip
\noindent
{{\bf Case 2.} A perfect matching $\overline{M}$ exists in $\overline{G}$, and $w^*=\overline{w}^*$.\\
Above we deduced that $M := \overline{M} \cap E$ satisfies the interval conditions. As $w(M)=w(\overline{M})=\overline{w}^*=w^*$, we find that $M$ belongs to ${\cal M}$. Note that we can find $M$ in polynomial time (after we found $\overline{M}$ in polynomial time).}

\medskip
\noindent
{{\bf Case 3.} A perfect matching exists in $\overline{G}$, and $w^*>\overline{w}^*$.\\ This implies that there exists no maximum weight matching $M\in {\mathcal M}$ with $s_p(M)\in [a_p,b_p]$ for $p=1,\ldots,n$. For a contradiction, assume there exists a maximum weight matching $M\in {\mathcal M}$ satisfying the interval conditions. By the above correspondence, we now find a perfect matching~$\overline{M}$ in $\overline{G}$ with weight $w^*$, contradicting our assumption that $w^*>\overline{w}^*$.}

\medskip
\noindent
Now suppose we are given a set of intervals $I_1, \dots I_n$, some of which are open instead of closed. Let $I_p$ be an open interval. Recall that the $s$-values are sizes of subsets of matching edges and thus are integers. Hence, we may replace $I_p$ by  the largest closed interval with integer end-points contained in $I_p$ if this closed interval exists. If not, then a matching $M\in {\cal M}$ with $s_p(M)\in I_p$ does not exist.
\qed
\end{proof}

\noindent
Using Lemma~\ref{l-interval}, algorithm {\tt Lex-Min} computes for a partitioned matching game $(N,v)$,
{defined on a graph $G=(V,E)$ with vertex partition $(V_1,\ldots,V_n)$,}
and an allocation $x$,  values $d_1,\ldots, d_t$ with $d_1> \ldots > d_t$ for some integer $t\geq 1$, and, as we will prove, returns a {maximum} matching $M\in {\cal M}$ that is {strongly close to}~$x$.

\noindent \rule{\textwidth}{0.1mm}\\
\noindent
{\tt Lex-Min}\\[5pt]
\begin{tabular}[h]{lcl}
{\it input} &:& a partitioned matching game $(N,v)$ and an allocation $x$\\
{\it output} &:& a {maximum} matching $M\in {\cal M}$ that is {strongly close to} $x$.
\end{tabular}

\smallskip
\noindent
{\bf Step~1.} Compute the smallest number $d_1 \ge 0$ such that there exists a {maximum} matching $M \in \mathcal{M}$ with
$|x_p-s_p(M)|  \le d_1 \text{~~for all~~} p \in N$.

\smallskip
\noindent
{\bf Step~2.} Compute a minimal set $N_1 \subseteq N$ (with respect to set inclusion) such that there exists a {maximum} matching $M \in \mathcal{M}$ with\\[-15pt]
\begin{align}\nonumber
&|x_p-s_p(M)|
= d_1& \text{~~for all~~} p \in N_1\\
&|x_p-s_p(M)|  < d_1& \text{~~for all~~} p \in N\setminus N_1.\nonumber
\end{align}

\smallskip
\noindent
{\bf Step 3.} Proceed in a similar way for $t\geq 1$:
\begin{itemize}
 \item {\bf while} $N_1 \cup \dots \cup N_t \neq N$ {\bf do}
 \begin{itemize}
 \item $t \leftarrow t+1$.
 \item $d_t \leftarrow$ smallest $d$ such that there exists a {maximum} matching $M \in \mathcal{M}$ with
 \begin{align}\nonumber
 &|x_p-s_p(M)| =
  d_j &\text{~~for all~~} p \in N_j, ~j\le t-1\\
 &|x_p-s_p(M)|  \le d_t &\text{~~for all~~} p \in N\setminus (N_1 \cup \dots \cup\nonumber N_{t-1}).
 \end{align}
 \item $N_t \leftarrow$ inclusion minimal subset of $N\setminus (N_1 \cup \dots \cup N_{t-1})$ such that there exists a matching $M \in \mathcal{M}$ with
 \begin{align} \nonumber
 &|x_p-s_p(M)|
 =
  d_j &\text{~~for all~~} p \in N_j, ~j\le t-1\\ \nonumber
 &|x_p-s_p(M)|
=
 d_t &\text{~~for all~~} p \in N_t\\ \nonumber
 &|x_p-s_p(M)|  < d_t &\text{~~for all~~} p \in N\setminus (N_1 \cup \dots \cup N_{t}). \nonumber
 \end{align}
 \end{itemize}
\end{itemize}
\smallskip
\noindent
{\bf Step 4.} Return a {maximum} matching $M\in {\cal M}$ with
$|x_p-s_p(M)|=d_j$ for all $p\in N_j$ and all $j\in \{1,\ldots,t\}$.

\noindent \rule{\textwidth}{0.1mm}\\[2pt]

\noindent
We say that the countries in a set $N\setminus (N_1\cup \cdots \cup N_{t-1})$ are {\it unfinished} and that a country is {\it finished} when it is placed in some $N_t$. Note that {\tt Lex-Min} terminates as soon as all countries are finished.

We are now ready to prove Theorem~\ref{t-easy1}, which we restate below.

\medskip
\noindent
{\bf Theorem~\ref{t-easy1} (restated).}
{\it It is possible to find a {strongly close} maximum matching for a uniform partitioned matching game $(N,v)$ and target allocation $x$ in polynomial time.}

\begin{proof}
To prove the theorem we will show that the {\tt Lex-Min} algorithm is correct {(that is, the matching returned in Step~4 is strongly close to~$x$)} and runs in
{$O(n|V|^3(\log|V|+n))$}
time for a uniform partitioned matching game $(N,v)$ with an allocation $x$.

\medskip
\noindent
{\it Correctness proof.}
We first prove the correctness of {\tt Lex-Min}. Let $(N,v)$ be a uniform partitioned matching game on a graph $G=(V,E)$ with partition ${\cal V}$, and let $x$ be an allocation. Let $\overline{M}$ be the matching from ${\cal M}$ that is returned by {\tt Lex-Min}.
We claim that $\overline{M}$ is {strongly close to} $x$.
In order to prove this, let  $M^* \in \mathcal{M}$ be a matching {that is strongly close to $x$}. Since both $\overline{M}$ and $M^*$ are maximum matchings, we have
$M^* = \overline{M} \oplus \mathcal{P} \oplus \mathcal{C}$,
where $\mathcal{P}$ and $\mathcal {C}$ are sets of even alternating paths and (even) alternating cycles, respectively. We make the additional assumption that among all the {strongly close} matchings in $\mathcal{M}$, the matching $M^*$ is chosen as \emph{closest} to $\overline{M}$ in the sense that $|\mathcal{P}|+|\mathcal{C}|$ is as small as possible.

We claim that $\mathcal{C} =\emptyset$. Otherwise, we would switch $M^*$ to another maximum matching along an alternating cycle $C\in {\cal C}$. This yields a new {maximum} matching $M^* \oplus C \in \mathcal{M}$, which is again {strongly close to $x$}, as the switch does not affect any $s_p(M^*)$. Moreover, $M^* \oplus C$ is closer to $\overline{M}$, contradicting our choice of $M^*$.
Hence, there exists a disjoint union of paths $P_1,\ldots,P_k$ such that
$M^* = \overline{M} \oplus P_1 \oplus \dots \oplus P_k$.
We claim that each $P_j$ has endpoints in different countries; otherwise switching from $M^*$ to $\overline{M}$ along $P_j$ would not affect any $s_p(M^*)$ and so again leads to a new {strongly close maximum} matching closer to $\overline{M}$.

Let $d_1^* > d_2^* > \dots $ denote the different values of $|x_p-s_p(M^*)|$ and let $N_j^* \subseteq N$ denote the corresponding sets of players $p \in N$ with $|x_p-s_p(M^*)|=d_j^*$. We prove by induction on $t$ that for every $t$, it holds that $d_t^*=d_t$ and $N_t^*=N_t$, which implies that
{$d(M^*) = d(\overline{M})$} and thus $\overline{M}$ is {strongly close to $x$}. Let $t=1$.
In Claims 1 and~2, we prove that $d^*_1=d_1$ and $N_1^*=N_1$.

\medskip
\noindent
\emph{Claim 1}: $d^*_1=d_1$.\\
\emph{Proof}:
As $M^*$ is {strongly close to $x$, we have that}  $d^*_1 \le d_1$. If $d^*_1 < d_1$, then $d_1$ was not chosen as small as possible,
since $M=M^*$ satisfies $|x_p-s_p(M)| \le d^*_1$ for every $p \in N$.  Hence
$d^*_1 = d_1$. \dia

\medskip
\noindent
\emph{Claim 2}: $N^*_1 = N_1$.\\
\emph{Proof}: If $N^*_1 \subsetneq N_1$, then $N_1$ is not minimal, as
$|x_p-s_p(M^*)|  \le d^*_1=d_1$ for $p \in N^*_1$ and
$|x_p-s_p(M^*)|  < d^*_1=d_1$ for $p \in N\setminus N^*_1$. This
contradicts Step 2 of {\tt Lex-Min}.
Hence,
{$N_1^* \not \subset N_1$}.
{For a contradiction, l}et
$p_0 \in N^*_1\setminus N_1$. So,
$|x_{p_0}-s_{p_0}(M^*)|  = d^*_1=d_1 > |x_{p_0}-s_{p_0}(\overline{M})|$.
We distinguish two cases.

\medskip
\noindent
{\bf Case 1.} $s_{p_0}(M^*) =  x_{p_0}+d^*_1$.\\
Then $s_{p_0}(M^*) > s_{p_0}(\overline{M})\geq 0$, so there exists an even path, say, $P_1$ starting in $V_{p_0}$ with an $M^*$-edge and ending in some $V_{p_1}$ with an $\overline{M}$-edge. Recall that the endpoints of $P_1$ are in different countries, hence $p_1 \neq p_0$.
Note that replacing $M^*$ by $M^* \oplus P_1$ would decrease $s_{p_0}(M^*)$ and increase $s_{p_1}(M^*)$.

Assume first that $s_{p_1}(M^*) \ge s_{p_1}(\overline{M}) $. As $P_1$ ends in $V_{p_1}$, we have $s_{p_1}(\overline{M}) \ge 1 $. Hence,  there exists an alternating path, say $P_2$, that starts with an $M^*$-edge in~$V_{p_1}$
and ends with an $\overline{M}$-edge in some $V_{p_2}$, $p_2 \neq p_1$. If $p_2=p_0$, then $M^* \oplus P_1 \oplus P_2$ would be {strongly close to $x$} and closer to $\overline{M}$, a contradiction. Hence $p_2 \notin \{p_0,p_1\}$ and in case $s_{p_2}(M^*) \ge s_{p_2}(\overline{M})$, we may continue to construct a sequence $p_0, p_1, p_2, ...$ in this way. Note that whenever we run into a cycle, that is, when $p_s = p_r$ for some $r>s$, then we get a contradiction
by observing that $M^* \oplus  P_{r+1} \oplus \dots \oplus P_s$ is also {strongly close to $x$} (indeed switching $M^*$
to $\overline{M}$ along $P_{r+1}, \dots, P_s$ does not affect any $s_p(M^*)$) and closer to $\overline{M}$. Hence,
eventually, our sequence $p_0, p_1, \dots, p_r$ must end up with $s_{p_r}(M^*) < s_{p_r}(\overline{M})$
 {for some $r\geq 1$}.

We {now} derive that $s_{p_r}(M^* \oplus P_1 \oplus \dots \oplus P_r) \le s_{p_r}(\overline{M}) \le x_{p_r}+d_1$.
If $p_r \notin N_1$, then we even get $s_{p_r}(M^* \oplus P_1 \oplus \dots \oplus P_r) \le s_{p_r}(\overline{M}) < x_{p_r}+d_1$.
We now define the matching $M':= M^* \oplus P_1 \oplus \dots \oplus P_r$.
Note that $M'$ is a maximum matching closer to $\overline{M}$. To obtain a contradiction with our choice of $M^*$ it remains to show that $M'$ is {strongly close to $x$}.

We first consider $p=p_r$.
We have $s_{p_r}(M') = s_{p_r}(M^*)+1 \ge x_{p_r} -d_1+1 > x_{p_r} -d_1$. Combining this with the upper bound found above, we obtain
\begin{eqnarray}\label{eq-bound_1}
x_{p_r} - d_1 < s_{p_r}(M') \le x_{p_r}+d_1 ~\text{if} ~ p_r\in N_1\\
x_{p_r} - d_1 < s_{p_r}(M') < x_{p_r}+d_1 ~\text{if} ~p_r \notin N_1. \nonumber
\end{eqnarray}
For $p=p_0$, we have
that $s_{p_0}(M') = s_{p_0}(M^*)-1 = x_{p_0}+d_1-1$, where the last equality holds because $p_0\in N_1^*$. Hence, we found that
\vspace*{-0.2cm}
\begin{equation}\label{eq-bound_2}
|s_{p_0}(M')-x_{p_0}|=d_1-1.
\end{equation}
From (\ref{eq-bound_1}) and (\ref{eq-bound_2}) and the fact that $|s_p(M')-x_p|=|s_p(M^*)-x_p|$ if $p\notin \{p_0,p_r\}$,
we conclude that either $M'$ is
{lexicographically smaller than $M^*$},
which yields {a} contradiction
{(as $M^*$ is strongly close to $x$)},
or  $p_r \in N_1$ and $s_{p_r}(M') = x_{p_r} +d_1$.
Assume that the latter case holds. Then we have  $$s_{p_r}(M^*) = s_{p_r}(M')-1= x_{p_r}+d_1-1.$$ However, then $M^*$ and $M'$ are symmetric with respect to $p_0$ and $p_r$, namely,
$|s_{p_0}(M^*)-x_{p_0}|=d_1=|s_{p_r}(M')-x_{p_r}|$ and
$|s_{p_r}(M^*)-x_{p_r}|={d_1-1}=|s_{p_0}(M')-x_{p_0}|$.
Combining these equalities with the fact that $|s_p(M')-x_p|=|s_p(M^*)-x_p|$ if $p\notin \{p_0,p_r\}$ implies that $M'$ is {strongly close to $x$}, our desired contradiction. Hence,  $N^*_1\setminus N_1=\emptyset$. As
{$N_1^* \not \subset N_1$},
we conclude that $N_1^*= N_1$.

\medskip
\noindent
{\bf Case 2.}  $s_{p_0}(M^*) =  x_{p_0}-d^*_1$.\\
In this case we have $s_{p_0}(M^*) < s_{p_0}(\overline{M})$. Hence, there must exist an alternating path $P_1$ that starts in $V_{p_0}$ with an $\overline{M}$-edge and that ends in some $V_{p_1}$ with an $M^*$-edge. Just as in Case~1, it holds that $p_1\neq p_0$. If $s_{p_1}(M^*) \le s_{p_1}(\overline{M})$, we may
continue with an alternating path $P_2$ starting from $V_{p_1}$ and ending in some $V_{p_2}$ with $p_2 \notin \{p_0, p_1\}$, and continuing in this way we eventually end up with a sequence $p_0, p_1, \dots, p_r$ such that $ s_{p_r}(M^*) > s_{p_r}(\overline{M})$. Then $M':= M^* \oplus P_1 \dots \oplus P_r$ has $s_{p_0}(M') = s_{p_0}(M^*) +1$ and
$s_{p_r}(M') = s_{p_r}(M^*) -1$. By the same arguments that we used in Case~1, we prove that $M'$ is a maximum matching that is {strongly close to $x$} and that is closer to $\overline{M}$ than $M^*$ is. This contradicts our choice of $M^*$, and we have proven Claim~2. \dia

\medskip
\noindent
Now let $t\geq 2$. Assume that  $d^*_1=d_1, \dots, d^*_{t-1} =d_{t-1}$ and $N^*_1=N_1, \dots, N^*_{t-1}=N_{t-1}$. By using the same arguments as in the proof of Claim~1, we find that $d^*_t=d_t$.
By using  the same arguments as in the proof of Claim~2, we find that
{$N_t^* \not \subset N_t$}.

We will now show that $N^*_t=N_t$. {For a contradiction,} consider a country $p_0 \in N^*_t \setminus N_t$ and split the proof into two cases similar to Cases~1 and~2 for the case $t=1$, namely when $s_{p_0}(M^*) =  x_{p_0}+d^*_1$ and  $s_{p_0}(M^*) =  x_{p_0}-d^*_1$.
We will only show the first case in detail, as the proof of the other case is similar.
Hence, from now on we assume that $s_{p_0}(M^*) =  x_{p_0}+d^*_t$.

We know that $p_0 \in N_t^*$, so $p_0 \notin N^*_1 \cup \dots \cup N^*_{t-1} = N_1 \cup \dots \cup N_{t-1}$. Hence $s_{p_0}(\overline{M}) \le x_{p_0}+d_t$, and $p_0 \notin N_t$ implies $s_{p_0}(\overline{M}) < x_{p_0}+d_t$, so that $s_{p_0}(M^*) > s_{p_0}(\overline{M})$. So there is an alternating path $P_1$ starting in $V_{p_0}$ with an $M^*$-edge and ending in some $V_{p_1}$ with an $\overline{M}$-edge. Again, we may assume that $p_1\neq p_0$.
If $s_{p_1}(M^*) \ge s_{p_1}(\overline{M})$, then there must be some alternating $P_2$ starting in $V_{p_1}$ with an $M^*$-edge and leading to some $V_{p_2}$ with $p_2 \notin \{p_0,p_1\}$ and so on, until eventually we obtain
a sequence $p_0, p_1, \dots, p_r$ with $s_{p_r}(M^*) < s_{p_r}(\overline{M})$ {for some $r\geq 1$}.

As before, we let $M':= M^* \oplus P_1 \dots \oplus P_r$ and note that $M'\in {\cal M}$ is closer to $\overline{M}$ than $M^*$ is. Hence, to obtain a contradiction, it remains to show that $M'$ is {strongly close to $x$}.
As $p_0 \notin N_1 \cup \dots \cup N_t$, we find that $s_{p_0}(M')=s_{p_0}(M^*) -1 \ge s_{p_0}(\overline{M}) > x_{p_0}-d_t$. On the other hand, $s_{p_0}(M')<s_{p_0}(M^*)=x_{p_0}+d_t$. Hence
\begin{equation}\label{eq-bound_2a}
|s_{p_0}(M') - x_{p_0}| < d_t.
 \end{equation}
Now consider $p=p_r$. We first rule out that $p_r \in N_1 \cup \dots \cup N_{t-1}$.
Assume to the contrary that $p_r \in N_j$ for some $j \in \{1, \dots, t-1\}$. Then as lower bound we have $s_{p_r}(M')= s_{p_r}(M^*)+1 \ge x_{p_r}-d_j+1 > x_{p_r}-d_j$, and as upper bound, $s_{p_r}(M') =s_{p_r}(M^*)+1 \le s_{p_r}(\overline{M}) \le x_{p_r}+d_j$. Hence,
\begin{equation}\label{eq-bound_3}
|s_{p_r}(M') -x_{p_r}| \le d_j.
\end{equation}
Inequality (\ref{eq-bound_3}), together with (\ref{eq-bound_2a}) and the fact that $|s_p(M') -x_{i}|=|s_p(M^*) -x_p|$ for $p\notin \{p_0,p_r\}$ shows
that $M'$ is lexicographically smaller than $M^*$, a contradiction. We conclude that $p_r \notin N_1 \cup \dots \cup N_{t-1}$.

We now have $s_{p_r}(\overline{M}) \le x_{p_r}+d_t$ if $p_r\in N_t$ and $s_{p_r}(\overline{M}) < x_{p_r}+d_t$ if $p_r\notin N_t$. Hence, we can repeat the arguments that we used for the case where $t=1$ to obtain our contradiction. This completes the correctness proof of {\tt Lex-Min}.

\smallskip
\noindent
{\it Running time analysis.}
As $x_p$ is fixed and $s_p(M)$ is an integer between $0$ and
{$|V|$},
there are $O(|V|)$ values for $|x_p-s_p(M)|$. Hence, we can find $d_t$ by binary search in $O(\log |V|)$ time.
This requires $O(\log |V|)$ applications of Lemma~\ref{l-interval}, each of which taking time $O(|V|^3)$. So, finding a $d_t$ takes $O(|V|^3\log|V|)$ time.
{For each $d_t$, we compute an inclusion minimal set $N_t$ as follows. For each unfinished country $q$, we check if there exists a maximum matching $M\in {\cal M}$ such that
\begin{align} \nonumber
 &|x_p-s_p(M)|
 =
  d_j &\text{~~for all~~} p \in N_j, ~j\le t-1\\ \nonumber
 &|x_p-s_p(M)|
\leq
 d_t &\text{~for all unfinished~~} p\neq q\\ \nonumber
 &|x_q-s_q(M)|  < d_t. \nonumber
 \end{align}
We can do this in $O(|V|^3)$ time by applying Lemma~\ref{l-interval}. If this is possible, then $q$ does not belong to~$N_t$, and else we put $q$ in $N_t$.
As there are at most $n$ countries to consider, we need to do this at most $n$ times.}
Each time we find a $d_t$, the number of finished countries increases by at least one.  Hence, as the number of countries is $n$, we find that the total running time of {\tt Lex-Min} is
{$O(n|V|^3(\log|V|+n))$}.
\qed
\end{proof}

\noindent
We now give the proof of Theorem~\ref{t-easy2}. In our proof we use exponentially increasing weights to minimize deviations from a target solution in a lexicographic way. This is similar in nature to techniques used in the literature for representing lexicographic preferences of agens over bundles in many-to-one allocation problems, see, for example,~\cite{ABLLM19}.

\medskip
\noindent
{\bf Theorem~\ref{t-easy2} (restated).}
{\it  It is possible to find a {strongly close} maximum weight matching for a directed partitioned matching game $(N,v)$ of width~$1$ {(i.e., for a directed matching game)} and target allocation $x$ in polynomial time.}

\begin{proof}
	Assume that our directed compatibility graph is $\overline{G}=(V,A)$, and our undirected compatibility graph is $G=(V,E)$, where $ij\in E$ if both $(i,j)$ and $(j,i)$ are arcs in $A$.
	Extend $G$ to a complete graph by adding zero-weight edges to it; note that a {strongly close} maximum weight matching in the extended graph yields a {strongly close} maximum weight matching in $G$ if we forget its newly added zero-weight edges.
	To keep notation simple, from now on we will assume that $G$ is a complete graph.
	
	Furthermore, if $|V|$ is odd, then we add a dummy vertex $v$ to $G$ with target allocation $x_v=0$.
	This still does not change the structure of the {strongly close} maximum weight matchings, however, now each of them can be extended to a {strongly close} maximum weight \emph{perfect} matching by adding some zero-weight edges to it.
	Because of this, from now on we focus on this problem in a weighted complete graph with an even number of vertices.
	
	For a target allocation $x$, we create an edge weighting $w^x$ of $G$ where the weight of each edge will be a vector from $\R^2$, that is, $w^x(e)=(\alpha,\beta)$, where $w_1^x(e)=\alpha$ and $w_2^x(e)=\beta$ are both real numbers.
	We can add or subtract two vectors coordinatewise, and compare them lexicographically, i.e., $w<w'$ if $w_1<w_1'$ or $w_1=w_1'$ and $w_2<w_2'$.
	
	For every edge $pq\in E$, we let $w_{1}^x(pq)=w_{pq}+w_{qp}$.
	This is to ensure that the maximum weight matching according to $w^x$ gives a maximum weight
{matching}
according to $w$.
	
	To define $w_{2}^x(pq)$, we first let $\delta^x_{pq}=|x_p-w_{qp}|$ for all ordered pairs $p,q\in V$.
	We introduce a vector $\Delta^x$ that contains the $n(n-1)$ values of $\delta^x$ in a weakly increasing order.	
	Let $r_{pq}$ denote the (average) rank of $\delta^x_{pq}$ in $\Delta^x$, meaning that if there is a tie from position $k$ to position $l$ in $\Delta^x$, then each of these elements will have rank $\frac{k+l}{2}$.
	Define $w_{2}^x(pq)=-2^{r_{pq}}-2^{r_{qp}}$ for all $pq\in E$.
	
	The $w$ weight of a matching $M$ is
	$\sum_{pq\in M} w_{pq}+w_{qp}=\sum_{pq\in M} w_{1}^x(pq).$
	If this latter sum is maximized by $M$, we can conclude that $M$ has maximum weight.
	
	Similarly, let $d(M)= (|x_{p_1}-u_{p_1}(M)|, \dots, |x_{p_n}-u_{p_n}(M)|)$ be the deviation vector of the perfect matching $M$ obtained by reordering the components $|x_p-u_p(M)|$ non-increasingly.
	If the deviation vector {$d(M')$} of some other perfect matching $M'$ {is} lexicographically larger than $d(M)$, {then we would have that} $$\sum_{pq\in M'} w_{2}^x(pq)<\sum_{pq\in M} w_{2}^x(pq)=\sum_{p\in V, pq\in M} -2^{r_{pq}}.$$
	{To see why the first inequality holds}, the first few terms in $\sum_{p\in V, pq\in M'} -2^{r_{pq}}$ are the same as in $\sum_{p\in V, pq\in M} -2^{r_{pq}}$, until we reach a term that is {strictly} larger in $M$, and by the exponential growth of $2^{r_{pq}}$ this single term is even larger than all further terms {together} in $M'$.
	Thus, if $\sum_{pq\in M} {w_{2}^x(pq)}$ is maximized by $M$ among some perfect matchings, then we can conclude that $M$ is {strongly close} among them.
	
	We can use this for the perfect matchings that maximize $\sum_{pq\in M} w_{1}^x(pq)$, to conclude that a maximum weight perfect matching according to $w^x$ is a {strongly close} maximum weight perfect matching according to $w$.
	Since we can find a maximum weight perfect matching according to $w^x$ by running the blossom algorithm of Edmonds \cite{Ed65} whose running time is $O(n^3)$ (independent of the weights\footnote{Note that these algorithms indeed work in any ordered abelian group; see this short argument by Emil Je\v r\'abek: \url{https://cstheory.stackexchange.com/a/52389/419}.}), we are done.\qed
\end{proof}

\section{The Proofs of Theorems~\ref{t-hard1}--\ref{t-hard4}}\label{s-results3}

In this section we prove Theorems~\ref{t-hard1}--\ref{t-hard4}.
For the first three theorems we will reduce from one of the following two problems. The problem {\sc Partition} is well-known to be \NP-complete~\cite{GJ79}, and has as input a set of $k$ integers $a_1, \dots, a_k$. The question is whether there exists a set of indices $I \subseteq \{1, \dots, k\}$ with $$a(I)=\frac{1}{2}\sum_{i=1}^k a_i.$$
The problem {\sc $3$-Partition} is to decide if we can partition a set of $3k$ positive integers $a_1,\ldots,a_{3k}$ with $\sum_{p=1}^{3k}a_p=kc$, for some integer~$c$, into $k$ sets that each sum up to $c$.
In contrast to {\sc Partition}, the {\sc $3$-Partition} problem is even strongly \NP-complete~\cite{GJ79} (so \NP-complete when encoded in unary) even if $\frac{1}{4}c<a_i<\frac{1}{2}c$. The latter property ensures that each set in a solution has size exactly~$3$.

We can now start with giving the proofs, the first one of which is the proof of Theorem~\ref{t-hard1}.

\medskip
\noindent
{\bf Theorem~\ref{t-hard1} (restated).}
{\it It is \NP-hard to find a {weakly close} maximum weight matching for a directed partitioned matching game~$(N,v)$ with $n=2$ and target allocation~$x$.}

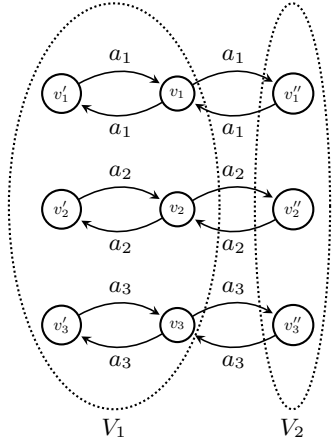
\begin{figure}[t]
\begin{center}
 \begin{tikzpicture}[
            > = stealth, 
            shorten > = 1pt,
            auto,
            node distance = 2.2cm, 
            semithick 
        ]

        \tikzstyle{every state}=[scale=0.7,
            draw = black,
            thick,
            fill = white,
            minimum size = 3mm
        ]
        \node[state] (v_1') {$v_1'$};
        \node[state] (v_1) [right of=v_1'] {$v_1$};
        \node[state] (v_1'') [right of=v_1] {$v_1''$};
        \path[->] [bend left] (v_1') edge node [above] {$a_1$} (v_1);
        \path[->] [bend left] (v_1) edge node [below] {$a_1$} (v_1');
         \path[->] [bend left] (v_1) edge node [above] {$a_1$} (v_1'');
        \path[->] [bend left] (v_1'') edge node [below] {$a_1$} (v_1);
        \node[state] (v_2') [below of=v_1'] {$v_2'$};
        \node[state] (v_2) [right of=v_2'] {$v_2$};
        \node[state] (v_2'') [right of=v_2] {$v_2''$};
        \path[->] [bend left] (v_2') edge node [above] {$a_2$} (v_2);
        \path[->] [bend left] (v_2) edge node [below] {$a_2$} (v_2');
         \path[->] [bend left] (v_2) edge node [above] {$a_2$} (v_2'');
        \path[->] [bend left] (v_2'') edge node [below] {$a_2$} (v_2);
        \node[state] (v_3') [below of=v_2'] {$v_3'$};
        \node[state] (v_3) [right of=v_3'] {$v_3$};
        \node[state] (v_3'') [right of=v_3] {$v_3''$};
        \path[->] [bend left] (v_3') edge node [above] {$a_3$} (v_3);
        \path[->] [bend left] (v_3) edge node [below] {$a_3$} (v_3');
         \path[->] [bend left] (v_3) edge node [above] {$a_3$} (v_3'');
        \path[->] [bend left] (v_3'') edge node [below] {$a_3$} (v_3);

        \draw[densely dotted,thick] (3.07,-1.5) ellipse (0.5 and 2.7) node[below,yshift=-2.7cm]{$V_{2}$};
        \draw[densely dotted,thick] (0.7,-1.5) circle (1.4 and 2.7) node[below,yshift=-2.7cm]{ $V_{1}$};
    \end{tikzpicture}
\end{center}
\caption{The construction of the (directed) compatibility graph in the proof of Theorem~\ref{t-hard1} for $k=3$.}\label{fig:thm6}
\end{figure}

\begin{proof}
We reduce from {\sc Partition}. From an instance $(a_1,\ldots,a_k)$ of {\sc Partition} we construct a partitioned matching game $(N,v)$ with $n=2$. We define $V_1=\{v_1, \dots, v_k, v_1', \dots, v_k'\}$ and $V_2= \{v_1'', \dots, v_k''\}$. For $i=1, \dots, k$ we have arcs $(v_i,v_i')$, $(v_i',v_i)$, $(v_i, v_i'')$ and $(v_i'',v_i)$, each with weight $a_i$. Every maximum weight matching $M$ matches each $v_i$ with either $v_i'$ or $v_i''$. Matching $v_i$ with $v_i'$ adds $2a_i$ to $u_1$
(and $0$ to $u_2$), while matching $v_i$ with $v_i''$ adds $a_i$ to both $u_1$ and $u_2$. Note that $v(N)=2\sum_j a_j$. Let $x$ be the allocation with $x_1=\frac{3}{2} \sum_j a_j$ and $x_2=\frac{1}{2} \sum_j a_j$.
Then there exists a {maximum weight} matching $M \in \mathcal{M}$ with $u_1(M) = x_1$ and $u_2(M) = x_2$ if and only if $(a_1,\dots,a_k)$ is a yes-instance of {\sc Partition}. {If such a maximum weight matching $M$ exists, then $M$ is also weakly close to $x$, since $\max_{p \in N}\{|x_p-u_p(M)|\} = 0$. Thus, if we could find a weakly close maximum weight matching in polynomial time, we can also decide {\sc Partition} in polynomial time.} \qed
\end{proof}

\noindent
We now prove Theorem~\ref{t-hard2}.

\medskip
\noindent
{\bf Theorem~\ref{t-hard2} (restated).}
{\it  It is \NP-hard to find a {weakly close} maximum weight matching for a $2$-sparse directed partitioned matching game $(N,v)$
and target allocation~$x$.}

\begin{proof}
We reduce from {\sc $3$-Partition}.
From an instance $(a_1,\ldots, a_{3k})$, such that $\sum_{p=1}^{3k}a_p=kc$ for some integer~$c$ and $\frac{1}{4}c<a_i<\frac{1}{2}c$, we construct a directed partitioned matching game $(N,v)$ on a (directed) compatibility graph $\overline{G}=(V,A)$ as follows (see also Fig.~\ref{fig:thm7}):

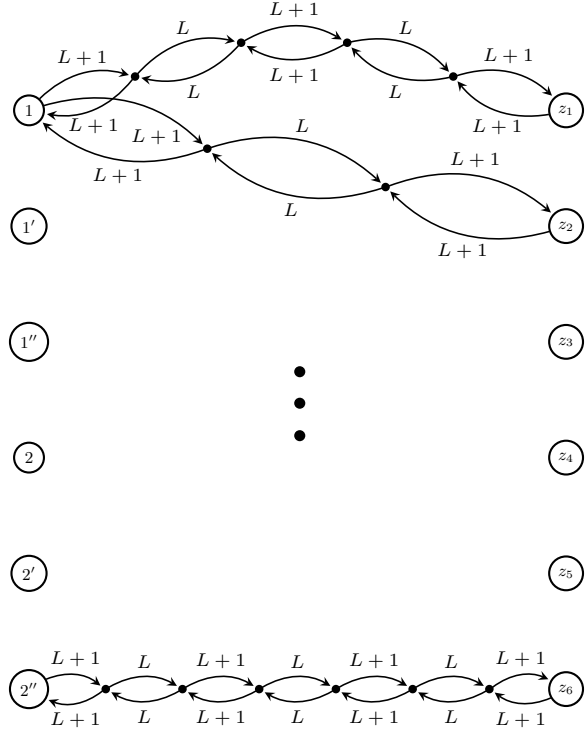
\begin{figure}[t]
\begin{center}
 \begin{tikzpicture}[
            > = stealth, 
            shorten > = 1pt, 
            auto,
            node distance = 2.2cm, 
            semithick 
        ]

        \tikzstyle{every state}=[scale=0.7,
            draw = black,
            thick,
            fill = white,
            minimum size = 3mm
        ]
        \node[state] (1) {$1$};
        \node[state] (1') [below of=1] {$1'$};
        \node[state] (1'') [below of=1'] {$1''$};
        \node[state] (2) [below of=1''] {$2$};
        \node[state] (2') [below of=2] {$2'$};
        \node[state] (2'') [below of=2'] {$2''$};
        \node[state] (z_1) [right of=1, xshift = 8cm] {$z_1$};
        \node[state] (z_2) [below of=z_1] {$z_2$};
        \node[state] (z_3) [below of=z_2] {$z_3$};
        \node[state] (z_4) [below of=z_3] {$z_4$};
        \node[state] (z_5) [below of=z_4] {$z_5$};
        \node[state] (z_6) [below of=z_5] {$z_6$};

        \node[fill,circle, black, scale=0.3,draw] (p1) [right of=1, xshift=2.5cm, yshift=1.5cm]{};
        \node[fill,circle, black, scale=0.3,draw] (p2) [right of=p1, xshift=2.5cm,yshift=1.5cm]{};
        \node[fill,circle, black, scale=0.3,draw] (p3) [right of=p2, xshift=2.5cm]{};
        \node[fill,circle, black, scale=0.3,draw] (p4) [right of=p3, xshift=2.5cm,yshift=-1.5cm]{};
        \path[->] [bend left] (1) edge node [above] {\scriptsize $L+1$} (p1);
        \path[->] [bend left] (p1) edge node [below] {\scriptsize $L+1$} (1);
        \path[->] [bend left] (p1) edge node [above] {\scriptsize $L$} (p2);
        \path[->] [bend left] (p2) edge node [below] {\scriptsize $L$} (p1);
        \path[->] [bend left] (p2) edge node [above] {\scriptsize $L+1$} (p3);
        \path[->] [bend left] (p3) edge node [below] {\scriptsize $L+1$} (p2);
        \path[->] [bend left] (p3) edge node [above] {\scriptsize $L$} (p4);
        \path[->] [bend left] (p4) edge node [below] {\scriptsize $L$} (p3);
        \path[->] [bend left] (p4) edge node [above] {\scriptsize $L+1$} (z_1);
        \path[->] [bend left] (z_1) edge node [below] {\scriptsize $L+1$} (p4);

                \node[fill,circle, black, scale=0.3,draw] (q1) [right of=1, xshift=5.7cm, yshift=-1.7cm]{};
        \node[fill,circle, black, scale=0.3,draw] (q2) [right of=q1, xshift=5.7cm, yshift=-1.7cm]{};
        \path[->] [bend left] (1) edge node [below, xshift=.35cm, yshift=-0.24cm] {\scriptsize $L+1$} (q1);
        \path[->] [bend left] (q1) edge node [below] {\scriptsize $L+1$} (1);
        \path[->] [bend left] (q1) edge node [above] {\scriptsize $L$} (q2);
        \path[->] [bend left] (q2) edge node [below] {\scriptsize $L$} (q1);
        \path[->] [bend left] (q2) edge node [above] {\scriptsize $L+1$} (z_2);
        \path[->] [bend left] (z_2) edge node [below] {\scriptsize $L+1$} (q2);

        \node[fill,circle, black, scale=0.3,draw] (r1) [right of=2'', xshift=1.2cm]{};
        \node[fill,circle, black, scale=0.3,draw] (r2) [right of=r1, xshift=1.2cm]{};
        \node[fill,circle, black, scale=0.3,draw] (r3) [right of=r2, xshift=1.2cm]{};
        \node[fill,circle, black, scale=0.3,draw] (r4) [right of=r3, xshift=1.2cm]{};
        \node[fill,circle, black, scale=0.3,draw] (r5) [right of=r4, xshift=1.2cm]{};
        \node[fill,circle, black, scale=0.3,draw] (r6) [right of=r5, xshift=1.2cm]{};
        \path[->] [bend left] (2'') edge node [above] {\scriptsize $L+1$} (r1);
        \path[->] [bend left] (r1) edge node [below] {\scriptsize $L+1$} (2'');
        \path[->] [bend left] (r1) edge node [above] {\scriptsize $L$} (r2);
        \path[->] [bend left] (r2) edge node [below] {\scriptsize $L$} (r1);
        \path[->] [bend left] (r2) edge node [above] {\scriptsize $L+1$} (r3);
        \path[->] [bend left] (r3) edge node [below] {\scriptsize $L+1$} (r2);
        \path[->] [bend left] (r3) edge node [above] {\scriptsize $L$} (r4);
        \path[->] [bend left] (r4) edge node [below] {\scriptsize $L$} (r3);
        \path[->] [bend left] (r4) edge node [above] {\scriptsize $L+1$} (r5);
        \path[->] [bend left] (r5) edge node [below] {\scriptsize $L+1$} (r4);
        \path[->] [bend left] (r5) edge node [above] {\scriptsize $L$} (r6);
        \path[->] [bend left] (r6) edge node [below] {\scriptsize $L$} (r5);
        \path[->] [bend left] (r6) edge node [above] {\scriptsize $L+1$} (z_6);
        \path[->] [bend left] (z_6) edge node [below] {\scriptsize $L+1$} (r6);

        \node[black, scale=3] (dots) [right of=1'', xshift=-1cm, yshift=-0.17cm]{\Large $\vdots$};
    \end{tikzpicture}
\end{center}
\caption{The construction of the (directed) compatibility graph $\overline{G}=(V,A)$ in the proof of Theorem~\ref{t-hard2} when $k=2$, $a_1=3$, $a_2=2$, $a_6=4$. For clarity reasons, only three paths are displayed and the other 33 paths between sources and sinks have not been drawn.}\label{fig:thm7}
\end{figure}

\begin{itemize}
\item We start with $3k$ \emph{sources}. That is, for $p =1, \dots, k$, we define $S_p:=\{p, p',p''\}$ and $S:= \bigcup_p S_p$.
\item  We add a set of $3k$ \emph{sinks}  $T:= \{z_1, \dots, z_{3k}\}$.
\item We join every source to every sink by a path. That is, from each $p$ (resp. $p'$ and $p''$) there is a path $P_{pq}$ (resp. $P_{p'q} $ and $P_{p''q}$) to each $z_q$ of (odd) length $2a_q-1$. This gives a total number of $(3k)^2$ pairwise internally vertex disjoint paths.
\item Every two consecutive vertices on each path are joined by two opposite arcs of equal weight. The weights on each path alternate between $L$ and $L+1$, starting and ending with opposite arcs of weight $L+1$, where $L$ is a sufficiently large integer, say, $L>kc$.
\item For $p=1,\ldots,k$, let $V_p= (\bigcup_q V((P_{pq} )\cup V(P_{p'q})\cup V(P_{p''q})))\setminus T$, and let
$V_{k+1}= T$.
\end{itemize}

\noindent
Note that the edges of the underlying graph $G=(V,E)$ have either weight $2L$ or $2(L+1)$. As $L >kc$, every maximum weight matching in $G$ is perfect. More precisely, every maximum weight matching $M$ of $G$ looks as follows. Each edge has either weight $2L$ or weight $2L+2$.
For $p=1, \dots, k$ there is a path $P_{pq}$ from $p$ to some $z_q$,
a path~$P_{p'q'}$ from $p'$ to some $z_{q'}$, and a path $P_{p''q''}$ from $p''$ to some $z_{q''}$ that are completely matched in the sense that $M \cap P_{pq}$ is a perfect matching of $P_{pq}$, and the same holds for $P_{p{'}q{'}}$ and $P_{p{''}q{''}}$. These paths contribute to $u_p(M)$ a total gain of
$$(2(a_q-1)+1)(L+1)+(2(a_{q'}-1)+1)(L+1)+(2(a_{q''}-1)+1)(L+1)=
(2(a_q+a_{q'}+a_{q''})-3)(L+1).$$
For $p=1,\ldots,k$,
there are also $3(k-1)$ paths from $\{p,p',p''\}$ to the remaining $3k-3$ sinks in $T\setminus \{z_q,z_{q'},z_{q''}\}$ that start and end with a non-matching edge (and are otherwise $M$-alternating). These paths contribute to $u_p(M)$ a total of
\[\begin{array}{lcl}
2L\left(\sum_{r \notin \{q,q',q''\}} (a_r-1)\right)&= &2L((\sum_r a_r) -(a_q+a_{q'}+a_{q''})-(3k-3)).
\end{array}\]
This means that for $p=1,\ldots,k$,
\[u_p(M)= 2(a_q+a_{q'}+a_{q''})+2L\left(\sum a_r\right) -6L(k-1) -3(L+1).\]
Let $x$ be the allocation with for $p=1,\ldots,k$,
$$x_p=2c+2L\left(\sum a_r\right)-6L(k-1)-3(L+1),$$
and
$$x_{k+1}=3k(L+1).$$
We now observe that there is a matching $M\in {\mathcal M}$ with $u_p(M)=x_p$  for $p=1,\ldots,k+1$  if and only if $(a_1,\ldots,a_{3k})$ is a yes-instance of $3$-{\sc Partition}. {Such a matching $M$ is also weakly close to $x$, since $\max_{p \in N}\{|x_p-u_p(M)|\} = 0$.} Moreover, as $3$-{\sc Partition} is strongly NP-complete,  $a_1, \dots, a_{3k}$ can be represented in unary.
Thus, the size of $(a_1,\ldots,a_{3k})$ is $kc$.
Hence, $(\overline{G},w)$ has polynomial size. \qed
\end{proof}

\noindent
We continue with the proof of Theorem~\ref{t-hard3}.

\medskip
\noindent
{\bf Theorem~\ref{t-hard3} (restated).}
{\it It is \NP-hard to find a {weakly close} maximum weight matching for a $3$-sparse compact directed partitioned matching game $(N,v)$ with $n=2$ and target allocation~$x$.}

\begin{proof}
We reduce again from the \NP-complete {\sc Partition} problem~\cite{GJ79}. From a given instance $(a_1,\ldots,a_k)$ of {\sc Partition} we construct a $3$-sparse compact {directed} partitioned matching game $(N,v)$ {with $n=2$}. {Let} the undirected compatibility graph $G=(V,E)$ for $(N,v)$ {be the} disjoint union $C_1+\ldots + C_k$ of $k$ cycles $C_1,\ldots,C_k$, where for $i\in \{1,\ldots,k\}$, $C_i$ has length $4a_i+{4}$.
{For each~$C_i$ we do as follows. F}or some sufficiently large integer $L$, say $L=\sum a_i$, we {assign} to each edge of ${C_i}$ weight $L$, ${L}+1$ or $L+\frac{1}{2}$ {as described below}.

Let $e$ and $\overline{e}$ be two {``opposite edges'' in cycle $C_{{i}}$, that is $e$ and $\overline{e}$ are} of maximum distance from each other. Assign weights $w_e=L$ and $w_{\overline{e}}=L+1$ to these edges. Weights $w_e$ and $w_{\overline{e}}$ are assumed to be split equally to their corresponding two opposite arcs. Removing $e$ and $\overline{e}$ splits $C_{{i}}$ into two paths $P_1$ and $P_2$ of length $2a_i+1$ each. The edge weights on these two paths alternate between $L$ and $L+1$ except for their last edge, which has weight $L+\frac{1}{2}$. More precisely, $P_1$ starts with an edge (say, incident to $e$) of weight $L+1$ and continues alternating between edges of weight $L+1$ and $L$ until its last edge (incident to $\overline{e}$) gets weight $L+ \frac{1}{2}$ (instead of $L+1$). Similarly, $P_2$ starts with an edge of weight $L$, incident to $e$, and alternates between weights $L+1$ and $L$ until the last edge gets weight $L+\frac{1}{2}$ (instead of $L$). See Figure~\ref{fig:M} for {an example,} where $a_i=5$.

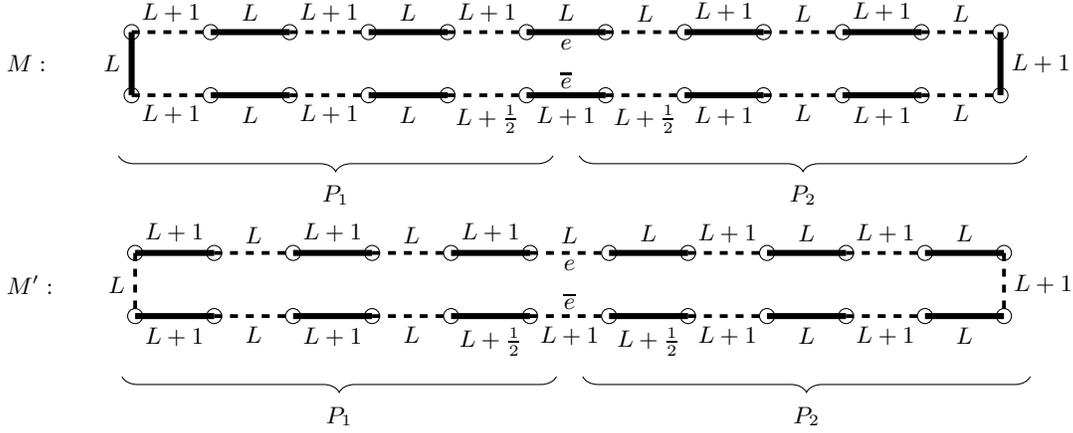
\begin{figure}
 \begin{tikzpicture}[scale=.35]
\node at (-32,1){$M:$};
\foreach \i in {1,2,3,4,5,6,7,8,9,10,11,12}
  {
    \def \x{3*\i-31}
    \node[minimum size=2mm] (\i,0)[draw=black,circle, inner sep=0pt] at (\x,-0.2){};
    \node[minimum size=2mm] (\i,2)[draw=black,circle, inner sep=0pt] at (\x,2.2){};

  }
\draw[dashed, line width=.5mm] (-28,-0.2)--(-25,-0.2) node[midway,below]{\footnotesize{$L+1$}};
\draw[dashed, line width=.5mm] (-22,-0.2)--(-19,-0.2) node[midway,below]{\footnotesize{$L+1$}};
\draw[dashed, line width=.5mm] (-16,-0.2)--(-13,-0.2) node[midway,below]{\footnotesize{$L+\frac{1}{2}$}};
\draw[dashed, line width=.5mm] (-10,-0.2)--(-7,-0.2) node[midway,below]{\footnotesize{$L+\frac{1}{2}$}};
\draw[dashed, line width=.5mm] (-4,-0.2)--(-1,-0.2) node[midway,below]{\footnotesize{$L$}};
\draw[dashed, line width=.5mm] (2,-0.2)--(5,-0.2) node[midway,below]{\footnotesize{$L$}};
\draw[line width=.8mm] (-25,-0.2)--(-22,-0.2) node[midway,below]{\footnotesize{$L$}};
\draw[line width=.8mm] (-19,-0.2)--(-16,-0.2) node[midway,below]{\footnotesize{$L$}};
\draw[line width=.8mm] (-13,-0.2)--(-10,-0.2) node[midway,below]{\footnotesize{$L+1$}};
\draw[line width=.8mm] (-7,-0.2)--(-4,-0.2) node[midway,below]{\footnotesize{$L+1$}};
\draw[line width=.8mm] (-1,-0.2)--(2,-0.2) node[midway,below]{\footnotesize{$L+1$}};

\node at (-11.5,0.375){{$\overline{e}$}};
\node at (-11.5,1.75){{$e$}};

\draw[dashed, line width=.5mm] (-28,2.2)--(-25,2.2) node[midway,above]{\footnotesize{$L+1$}};
\draw[dashed, line width=.5mm] (-22,2.2)--(-19,2.2) node[midway,above]{\footnotesize{$L+1$}};
\draw[dashed, line width=.5mm] (-16,2.2)--(-13,2.2) node[midway,above]{\footnotesize{$L+1$}};
\draw[dashed, line width=.5mm] (-10,2.2)--(-7,2.2) node[midway,above]{\footnotesize{$L$}};
\draw[dashed, line width=.5mm] (-4,2.2)--(-1,2.2) node[midway,above]{\footnotesize{$L$}};
\draw[dashed, line width=.5mm] (2,2.2)--(5,2.2) node[midway,above]{\footnotesize{$L$}};
\draw[line width=.8mm] (-25,2.2)--(-22,2.2) node[midway,above]{\footnotesize{$L$}};
\draw[line width=.8mm] (-19,2.2)--(-16,2.2) node[midway,above]{\footnotesize{$L$}};
\draw[line width=.8mm] (-13,2.2)--(-10,2.2) node[midway,above]{\footnotesize{$L$}};
\draw[line width=.8mm] (-7,2.2)--(-4,2.2) node[midway,above]{\footnotesize{$L+1$}};
\draw[line width=.8mm] (-1,2.2)--(2,2.2) node[midway,above]{\footnotesize{$L+1$}};

\draw[line width=.8mm] (5,2.2)--(5,-0.2) node[midway,right]{\footnotesize{$L+1$}};
\draw[line width=.8mm] (-28,2.2)--(-28,-0.2) node[midway,left]{\footnotesize{$L$}};

\draw[decorate,decoration={brace, amplitude=6}]  (-12,-2.5) --   (-28.5,-2.5);
\node at (-20.25,-4){{$P_1$}};

\draw[decorate,decoration={brace, amplitude=6}]  (6,-2.5) --   (-11,-2.5);
\node at (-2.5,-4){{$P_2$}};
\end{tikzpicture}

\begin{tikzpicture}[scale=0.35]
\node at (-32,1){$M':$};
\foreach \i in {1,2,3,4,5,6,7,8,9,10,11,12}
  {
    \def \x{3*\i-31}
    \node[minimum size=2mm] (\i,0)[draw=black,circle, inner sep=0pt] at (\x,-0.2){};
    \node[minimum size=2mm] (\i,2)[draw=black,circle, inner sep=0pt] at (\x,2.2){};

  }
\draw[line width=.8mm] (-28,-0.2)--(-25,-0.2) node[midway,below]{\footnotesize{$L+1$}};
\draw[line width=.8mm] (-22,-0.2)--(-19,-0.2) node[midway,below]{\footnotesize{$L+1$}};
\draw[line width=.8mm] (-16,-0.2)--(-13,-0.2) node[midway,below]{\footnotesize{$L+\frac{1}{2}$}};
\draw[line width=.8mm] (-10,-0.2)--(-7,-0.2) node[midway,below]{\footnotesize{$L+\frac{1}{2}$}};
\draw[line width=.8mm] (-4,-0.2)--(-1,-0.2) node[midway,below]{\footnotesize{$L$}};
\draw[line width=.8mm] (2,-0.2)--(5,-0.2) node[midway,below]{\footnotesize{$L$}};
\draw[dashed, line width=.5mm] (-25,-0.2)--(-22,-0.2) node[midway,below]{\footnotesize{$L$}};
\draw[dashed, line width=.5mm] (-19,-0.2)--(-16,-0.2) node[midway,below]{\footnotesize{$L$}};
\draw[dashed, line width=.5mm] (-13,-0.2)--(-10,-0.2) node[midway,below]{\footnotesize{$L+1$}};
\draw[dashed, line width=.5mm] (-7,-0.2)--(-4,-0.2) node[midway,below]{\footnotesize{$L+1$}};
\draw[dashed, line width=.5mm] (-1,-0.2)--(2,-0.2) node[midway,below]{\footnotesize{$L+1$}};

\node at (-11.5,0.375){{$\overline{e}$}};
\node at (-11.5,1.75){{$e$}};

\draw[line width=.8mm] (-28,2.2)--(-25,2.2) node[midway,above]{\footnotesize{$L+1$}};
\draw[line width=.8mm] (-22,2.2)--(-19,2.2) node[midway,above]{\footnotesize{$L+1$}};
\draw[line width=.8mm] (-16,2.2)--(-13,2.2) node[midway,above]{\footnotesize{$L+1$}};
\draw[line width=.8mm] (-10,2.2)--(-7,2.2) node[midway,above]{\footnotesize{$L$}};
\draw[line width=.8mm] (-4,2.2)--(-1,2.2) node[midway,above]{\footnotesize{$L$}};
\draw[line width=.8mm] (2,2.2)--(5,2.2) node[midway,above]{\footnotesize{$L$}};
\draw[dashed, line width=.5mm] (-25,2.2)--(-22,2.2) node[midway,above]{\footnotesize{$L$}};
\draw[dashed, line width=.5mm] (-19,2.2)--(-16,2.2) node[midway,above]{\footnotesize{$L$}};
\draw[dashed, line width=.5mm] (-13,2.2)--(-10,2.2) node[midway,above]{\footnotesize{$L$}};
\draw[dashed, line width=.5mm] (-7,2.2)--(-4,2.2) node[midway,above]{\footnotesize{$L+1$}};
\draw[dashed, line width=.5mm] (-1,2.2)--(2,2.2) node[midway,above]{\footnotesize{$L+1$}};

\draw[dashed, line width=.5mm] (5,2.2)--(5,-0.2) node[midway,right]{\footnotesize{$L+1$}};
\draw[dashed, line width=.5mm] (-28,2.2)--(-28,-0.2) node[midway,left]{\footnotesize{$L$}};

\draw[decorate,decoration={brace, amplitude=6}]  (-12,-2.5) --   (-28.5,-2.5);
\node at (-20.25,-4){{$P_1$}};

\draw[decorate,decoration={brace, amplitude=6}]  (6,-2.5) --   (-11,-2.5);
\node at (-2.5,-4){{$P_2$}};

\end{tikzpicture}
\caption{{The matchings $M$ (top) and $M'$ (bottom) in $C_i$ for ${a_i=5}$ with edges ${e}$ and ${\overline{e}}$ in the middle.}}\label{fig:M}
 \end{figure}

 We let $U_1$ and $U_2$ denote the vertex sets of $P_1$ and $P_2$, respectively.
 {We note that}
$C_{{i}}$ has exactly two
 maximum weight matchings, namely its two complementary perfect matchings $M$ and $M'$, where
$M$ is the perfect matching that matches both $e$ and $\overline{e}$ and $M'$ is the complement of~$M${; see also Figure~\ref{fig:M}}.
{Let $V_1$ be the union of all the $U_1$s in each $C_i$ and $V_2$ be the union of all the $U_2$s in each $C_i$.}

We compute:
 \[
  u_1(M) = \frac{1}{2}L+\frac{1}{2}(L+1)+a_iL
  =  L(a_i+1)+\frac{1}{2}
  \]
 and
 \[
  u_2(M)
  =
  \frac{1}{2}L+\frac{1}{2}(L+1)+a_i(L+1)
  = L(a_i+1)+\frac{1}{2}+a_i,
\]
while $u_1(M')
=
L(a_i+1)+\frac{1}{2}+a_i$, and
  $u_2(M')
  =
  L(a_i+1)+\frac{1}{2}$.

Recall that we have $k$ components $C_i$, each with two complementary maximum weight (perfect) matchings. So in the graph~$G$ consisting of these $k$ components $C_i$ we have $2^k$ maximum weight matchings, obtained by picking one of the two complementary $M$ and $M'$ in each $C_i$. Consider the allocation~$x$ with $$x_1= x_2=L\left(\sum a_i+1\right) +\frac{1}{2}\sum a_i +k/2,$$ and assume these can be realized by a suitable maximum weight matching {$M_G$ in $G$}. Let $I \subseteq \{1, \dots, k\}$ be the set of indices $i$ such that ${M_G}$ picks $M$ in  $C_i$. With respect to {$M_G$, we find that} $V_1$ has utility $L\sum (a_i+1) +k/2+\sum_Ia_i$. Such a matching {$M_G$} exists if and only if $(a_1,\ldots,a_k)$ is a yes-instance of {\sc Partition}. {Moreover, $M_G$ is weakly close to $x$, as $|x_1-u_1(M_G)|=|x_2-u_2(M_G)|=0$.}
This completes the reduction.

Each component $C_i$ of the graph we construct has a description of length $O(\log (ka_{max}))$, where $a_{max}$ denotes the maximum $a_i$; note that $L$ is bounded by $\log (k a_{max})$ and the length of $C_i$ is bounded by  $a_i$. Hence,  allowing compact descriptions,  the weighted graph we constructed has size $O(k \log(k a_{max}))$, which is polynomial in the size of $(a_1,\ldots,a_k)$. \qed
\end{proof}

\noindent
Finally, we prove Theorem~\ref{t-hard4}.

\medskip
\noindent
{\bf Theorem~\ref{t-hard4} (restated).}
{\it {\sc Exact Perfect Matching} and the problem of finding a {weakly close} maximum weight matching for a $3$-sparse perfect directed partitioned matching game with $n=2${,} and target allocation~$x$ are polynomially equivalent.}

\begin{proof}
First suppose that we can solve {\sc Exact Perfect Matching} in polynomial time. Let $(N,v)$ be a {$3$-sparse} perfect directed partitioned matching game {with $n=2$} defined on $(\overline{G},w)$ with partition $(V_1,V_2)$.
{Recall that for every $2$-cycle $iji$ with $i\in V_1$ and $j\in V_2$, we set $w_{ij}=\frac{1}{3}$ and $w_{ji}=\frac{2}{3}$, and
for every other $2$-cycle $iji$ in $\overline{G}$ (so where $i,j\in V_1$ or $i,j\in V_2$), we set $w_{ij}=w_{ji}=\frac{1}{2}$.}
Recall {also} that we denote the underlying undirected graph corresponding to $\overline{G}=(V,A)$ by $G=(V,E)${, which we assume has a perfect matching.}

Let $(x_1,x_2)$ be an allocation. As $G$ has a perfect matching by definition, we find that $x_1+x_2={\frac{1}{2}|V|}$. We need to check if there exists a matching $M\in {\mathcal M}$ (which will be perfect) with $u_1(M)\in I_1=[x_1-\delta,x_1+\delta]$ and $u_2(M)\in I_2=[x_2-\delta,x_2+\delta]$ for some given $\delta\geq 0$.
{If we can do this, then we can do a search on $\delta$ similar to the one we employed in the {\tt Lex-Min} algorithm, leading to a maximum weight matching that is weakly close to~$x$. This can be done in polynomial time, as we still have a
a polynomial number of values for $|x_1-u_1(M)|$ and  $|x_2-u_2(M)|$ due to the $3$-sparsity. Hence, we can follow the reasoning used in the proof of Theorem~\ref{t-easy1} for
 analyzing the running time of {\tt Lex-Min}.}

Colour all edges of $G$ with one end-vertex in $V_1$ and the other one in $V_2$ red. This gives us the set~$R$. Colour all remaining edges blue, that is, let $B=E\setminus R$. We check for $k=1,\ldots,|R|$ whether there exists a perfect matching of $G$ with exactly $k$ red edges. This takes polynomial time by our assumption on {\sc Exact Perfect Matching}. Each time we find a solution $M$ we let $\ell_i$ be the number of (blue) edges with both end-vertices in $V_i$ for $i=1,2$, and we check whether $\frac{2}{3}k+\ell_1$ belongs to $I_1$ and $\frac{1}{3}k+\ell_2$ belongs to $I_2$ (note that if $k$ is fixed, then $\ell_1$ and $\ell_2$ are fixed as well).

\medskip
\noindent
Now suppose that we can find in polynomial time a {weakly close} maximum weight matching for a perfect directed partitioned matching game and target allocation~$x$.
Let $G=(V,E)$ be an undirected graph with a partition $(R,B)$ of $E$ into red and blue edges forming, together with an integer~$k\geq 0$, an instance of {\sc Exact Perfect Matching}.

We subdivide each edge of $G$ twice, that is, we replace each edge $e=ij$ with vertices $i',j'$ and edges $ii'$, $i'j'$, $j'j$.  Let $G'=(V',E')$ be the resulting graph, so $V'\setminus V$ is the set of the $2|E|$ new vertices, which we call {\it subdivision vertices}. Any perfect matching $M$ in $G$ translates into a unique perfect matching $M'$ in $G'$, and vice versa. Namely, for every $ij\in E$, we have $ij\in M$ if and only if $ii'$, $j'j\in M'$, and also $ij\notin M$ if and only if $i'j'\in M'$. We call $M'$ the {\it transform} of $M$.
We let $V_1'$ be the set of the subdivision vertices on red edges in $G$, and we let $V_2' = V' \setminus V_1'$.
We let $R'$ denote the edges with one end-vertex in $V_1'$ and the other end-vertex in $V_2'$.
Then the transform $M'$ of a perfect matching $M$ with exactly $k$ edges in $R$ has $2k$ edges in $R'$, and vice versa.
To solve the latter
problem, we
define a perfect partitioned matching game corresponding to $G'$ and $(V_1',V_2')$ and we
choose $(x_1,x_2)=({|R| + \frac{1}{3}k},{\frac{1}{2}|V| + |B| - \frac{1}{3}k})$ as allocation.
{Note that the size of a perfect matching in $G'$ is $\frac{1}{2}|V|+|E|=\frac{1}{2}|V|+|R|+|B|=x_1+x_2$.}

We now check if there exists a matching $M{'}\in {\cal M'}$ (the set of perfect matchings of $G'$) such that $u_1(M{'})={|R| + \frac{1}{3}k}$ and $u_2(M{'})={\frac{1}{2}|V| + |B| - \frac{1}{3}k}$. {If such a matching $M'$ exists, it is also weakly close to $x$, since
$|x_1-u_1(M')|=|x_2-u_2(M')|=0$.}
By our assumption we can do this in polynomial time.\qed
\end{proof}

\section{Conclusions}\label{s-con}

We introduced a new class of cooperative games: partitioned matching games. We showed how we can use partitioned matching games to model international kidney exchange programmes. These programmes are seen as the next step in the medical field of organ transplantations~\cite{Bo_etal17,Va_etal19}. We provided the theoretical basis for this application by proving a number of computational complexity results for partitioned matching games.

We found two sets of results. One set of results was about ensuring stability of the international collaboration. The aim was to choose in each round of the international programme a kidney transplant distribution as close as possible to some prescribed fair distribution for that round. Roughly speaking, we proved that this problem can be solved efficiently when transplant weights are equal, but otherwise the problem becomes quickly computationally hard. We pose the following two open problems; recall that we showed some partial results
{for directed partitioned matching games $(N,v)$ with $|N|=2$}
in Theorems~\ref{t-hard3} and~\ref{t-hard4}.

\begin{enumerate}
\item Are there constants $c$ and $d$ such that the problem of finding a {weakly close} maximum weight matching for a
$d$-sparse partitioned matching game $(N,v)$ with width~$c$ and target allocation~$x$ is \NP-hard?\\[-8pt]
\item Are there constants $n$ and $d$ such that the problem of finding a {weakly close} maximum weight matching for a
$d$-sparse partitioned matching game $(N,v)$ with $|N|=n$ and target allocation~$x$ is \NP-hard?
\end{enumerate}

\noindent
In our other set of results, we linked the core of partitioned matching games to the core of $b$-matching games. We resolved a complexity gap for computing core allocations of $b$-matching games. As a consequence we can settle the computational complexity for the three core-related problems P1--P3 for partitioned matching games as well{;} see also Table~\ref{t-tabtab}.
{We note that Table~\ref{t-tabtab} contains two co-\NP-hardness results for P2. We do not know if P2 for $b$-matching games with $b\not\leq 2$ and for partitioned matching games of width $c\geq 3$ is co-\NP-complete. Given that P1 is co-\NP-complete for these cases (see Table~\ref{t-tabtab}), P2 is likely to be computationally harder. We leave this question for future research.}

It is also interesting to consider other solution concepts for $b$-matching games and partitioned matching games, such as the nucleolus. K\"onemann, Pashkovich and Toth~\cite{KPT20} proved that the nucleolus of a matching game can be computed in polynomial time.  In contrast, K\"onemann, Toth and Zhou~\cite{KTZ21} proved that computing the nucleolus is \NP-hard even for uniform $b$-assignment games with $b\leq 3$.  We refer to~\cite{BHIM10,KT20,KTZ21} for some positive results (see also~\cite{BBJPX}), but determining the complexity of computing the nucleolus is still open for the following games (see also~\cite{KTZ21}):

\begin{enumerate}
\item $b$-matching games with $b\leq 2$,
\item partitioned matching games,
\item partitioned matching games  with width $c\leq 2$, and
\item partitioned matching games with width $c\leq 3$.
\end{enumerate}

\noindent
{\it Acknowledgments.} We thank two anonymous reviewers for helpful comments that improved the readability of our paper.

\section*{Statements and Declarations}

\subsection*{Conflict of interest} The authors declared that they have no conflict of interest.

\end{document}